\pdfoutput=1
\documentclass[journal,onecolumn,12pt]{IEEEtran}
\usepackage{amsmath,amssymb,array,graphicx,url,tikz}
\usetikzlibrary{decorations.pathreplacing}
\newtheorem{theorem}{Theorem}[section]
\newtheorem{proposition}[theorem]{Proposition}
\newtheorem{corollary}[theorem]{Corollary}
\newtheorem{lemma}[theorem]{Lemma}
\newcommand{\eqdef}{\mathrel{:=}}
\newcommand{\Rrate}{\mathcal R}
\begin{document}
\title{Lempel--Ziv Parsing with Match-Length-Dependent Windows: Universality and Limits}
\author{Yingquan (Cody) Wu\thanks{Earlier versions of the window scheme appeared in the author's 2017 patent and 2021 Data Compression Conference (DCC) report~\cite{EarlierWindowPatent,WZIPWuDCC2021}. The codecs WLZ4 and WZIP built on it are described in~\cite{WZIPWuCodecs2026}, and their source code is public~\cite{WZIPCode2026}.}\thanks{Y. Wu is with Tenafe Inc., 591 W. Hamilton Ave, Campbell, CA 95008, USA
(e-mail: icodywu@gmail.com).}}
\maketitle
\begin{abstract}
LZ77 coders spend the same number of bits on the distance of every
match, although short matches rarely pay off at long distances. With
match-length-dependent windows, used in several practical codecs, a match
of length $\ell$ may refer back at most $W(\ell)$ symbols: short matches
get short distance fields, while long matches keep the full window $W$.
This breaks the suffix closure that classical LZ77 analyses rely on: the
tail of an admissible match need not be admissible. We bound the cost of
this loss, and of matches missed by a practical hash search, by the
phrase count of a single-window reference parse.
Consequently, when the full window starts at length $O(\log W)$, greedy
parsing keeps the convergence rates and the optimal finite-input rates of
LZ77 under substring-complexity constraints, and when it starts at length
$o(\log W)$, greedy parsing is universal for stationary ergodic sources.
We then delimit what the windows gain. A calibrated schedule never
increases the minimum code length and saves a polynomially small rate on
some iid sources. The leading redundancy term is set by the length code
instead: on uniform iid data, with logarithmic-cost lengths and a fixed
full-window threshold, it is at least of order $(\log\log W)/\log W$
whatever windows the short matches use, and a short codeword for a
growing longest match reduces it to $O((\log\log\log W)/\log W)$. We
also quantify the gains of coding single-symbol runs by count and of
Huffman coding the token fields, and obtain a consistent entropy-rate
estimator by selecting the window.
\end{abstract}

\begin{IEEEkeywords}
Data compression, Lempel--Ziv coding, LZ77, sliding window, universal
source coding, redundancy, Huffman coding, hashing, entropy estimation.
\end{IEEEkeywords}

\section{Introduction}
\label{sec:intro}
LZ77~\cite{LZ77} replaces a repeated substring by its length and its
backward distance to an earlier occurrence. A sliding window of $W$
symbols bounds the distance, so every distance field costs about
$\log_2W$ bits. This creates a familiar trade-off: a larger window finds
more distant repetitions, but it makes short matches so expensive that
they are often not worth coding. Match-length-dependent windows refine
the trade-off. A match of length $\ell$ may refer back at most
$W(\ell)\le W$ symbols, so short matches use small windows and short
distance fields, while long matches keep the full window. The decoder
reads the length first and infers the distance range from it, so no
extra signaling is needed. We call this family WLZ, where W stands for
multi-window, as in the codecs WZIP and WLZ4 built on
it~\cite{WZIPWuCodecs2026}. Restrictions of this kind appear in practice:
Lizard reserves its 24-bit offsets for matches of 16 bytes or
more~\cite{WZIPLizard2016}, Pucrunch restricts two-byte matches to
shorter distances than longer ones~\cite{WLZPucrunch}, and LZMA and
Brotli condition their distance models on the match
length~\mbox{\cite{WZIPLZMA2015,WZIPBrotli2016}}.

Length-dependent windows, however, remove a property on which the
classical analyses of LZ77 rest. Under a single window, every suffix of
an admissible match is again an admissible match at the same distance.
Comparisons of a parse with a reference parse, the optimality of greedy
parsing, and the reduction of optimal parsing to maximal matches all use
this suffix closure~\cite{WLZCrochemore2014,WZIPBitOptimal2013}. With
length-dependent windows it fails: in \texttt{abcdabcd} with $W(4)=4$
and $W(2)=2$, the second \texttt{abcd} is a match at distance four, but
its suffix \texttt{cd} is not. Two questions follow. Does restricting
short matches cost anything asymptotically, and what can the
restrictions gain?

This paper answers both questions for a broad family of window
schedules, parsers, and token codes. Its contributions are the
following.
\begin{enumerate}
\item \emph{A transfer theorem} (Theorem~\ref{thm:block-transfer}). Apart
from a term of order $(\log W)/W$ per symbol, the cost of a WLZ parse is
bounded by the phrase count of a single-window reference parse, times one token plus the literals of at most $t-1$
symbols per phrase, where $t$ is the length from which the full window
applies; each match missed by an inexact search adds one token. This
accounts exactly for the loss of suffix closure and makes the
established phrase-count bounds for LZ77 applicable to WLZ.
\item \emph{Guarantees that survive} (Section~\ref{sec:convergence}). If
the full window starts at length $O(\log W)$, exact greedy WLZ keeps the
convergence rates of LZ77 under substring-complexity constraints and, for
windows comparable to the input length, its finite-input minimax rates,
which are optimal (Theorem~\ref{thm:finite-minimax}). If the full window
starts at length $o(\log W)$, exact greedy WLZ is universal for every
finite-alphabet stationary ergodic source
(Theorem~\ref{thm:greedy-universality}). The same bounds hold for capped
hash-chain search on average over the hash choice, and on periodic inputs
an inspection cap of at least the period loses nothing
(Corollaries~\ref{cor:capped-search-transfer}
and~\ref{cor:piecewise-periodic}).
\item \emph{What the windows gain, and what they cannot}
(Section~\ref{sec:strict-comparisons} and
Appendix~\ref{app:obstruction}). A calibrated schedule never increases
the minimum code length and saves a polynomially small rate,
$\Omega(W^{-\beta}\log W)$ with $\beta>2$, on some iid sources
(Theorem~\ref{thm:strict-polynomial-saving}). The leading redundancy term
is out of the windows' reach: with logarithmic-cost length codes and a
fixed full-window threshold, it is at least of order
$(\log\log W)/\log W$ on uniform iid data, whatever windows the shorter
matches use (Theorem~\ref{thm:obstruction}). It is set by the length code
instead: a growing longest match with a short codeword reduces it to
$O((\log\log\log W)/\log W)$, below the $\Omega((\log\log W)/\log W)$ of
the original fixed-field LZ77 format with any phrase cap
(Theorem~\ref{thm:fixed-field-separation} and
Proposition~\ref{prop:original-any-cap}). Coding a run of one symbol by
its count, which may exceed the match cap, gives a further separation on
inputs with few runs (Proposition~\ref{prop:run-count}).
\item \emph{Coding the token fields} (Section~\ref{sec:field-coding} and
Theorem~\ref{thm:periodic-huffman}). Because the length is decoded first,
the distance code may depend on the length class at no signaling cost.
We bound the Huffman payload of separately coded fields and the loss of
coding range symbols with raw extra bits, as DEFLATE and Zstandard
do~\mbox{\cite{DEFLATE,Zstd}}
(Proposition~\ref{prop:significant-bit-locality}), and show that on
periodic inputs Huffman coding the fields lowers the rate from
$\Theta((\log W)/W)$ to at most $3/W$, tables included.
\item \emph{Window selection and exact parsing}
(Section~\ref{sec:finite-data} and Appendix~\ref{app:exact-parsing}).
Selecting the window by two-part coding yields an entropy-rate estimator
that is consistent almost surely and in mean, with finite-data error
bounds for nonuniform iid sources. For a given input, greedy parsing no
longer minimizes the number of phrases and optimal parsing no longer
reduces to maximal matches, yet the minimum code length is computable in
$O(n(K+Y)\log n)$ time, where $K$ and $Y$ count the intervals of lengths
on which the window and the length cost are constant.
\end{enumerate}

For codec design, the results separate the roles of the two components.
Below a full-window threshold of order $o(\log W)$, short matches can be
restricted arbitrarily at no asymptotic cost, which shortens their
distance fields and lets each length class use a small
hash index with a short seed (Section~\ref{sec:hash-search}). The
length-specific indexes can be searched in parallel without changing the
output, since the encoder selects among their candidates by length and
distance rather than by completion order, as in earlier parallel
length-specific designs~\cite{WLZFranaszek2006,WLZPlauth2023}; a preset
dictionary, as supported by zlib, Zstandard, and
LZ4~\cite{WZIPZlibManual,WZIPZstdAPI2025,WLZLZ4Dictionary2019}, simply
extends the history under the same limits. The leading compression
gains, however, must come from how lengths and distances are coded. The
codecs of~\cite{WZIPWuCodecs2026} are consistent with this division. On
the Silesia corpus, WZIP exceeds the compression ratio of Zstandard at
level~22 by 3.5--3.8\% (1.2--1.4\% without its input filters) while
decoding at 88\% of its speed, and WLZ4 exceeds that of LZ4HC by up to
16\%. In WLZ4, whose distance fields have fixed widths, the windows raise
the compression ratio; in WZIP, ablations indicate that length-conditioned
distance coding improves compression, while the restricted windows mainly
reduce indexing resources.

\emph{Related work.} The original LZ77 analysis counts phrases of
constrained languages~\cite{LZ77}; sliding-window and recurrence results
establish optimality for stationary ergodic
sources~\cite{WZIPWynerZiv1994,WZIPOrnsteinWeiss1993,WZIPJacobBansal2013},
phrase-count bounds under substring complexity give convergence
rates~\cite{WZIPRaskhodnikova2013,WZIPKociumaka2023}, mixing-based
hitting-time results give quantitative recurrence
tails~\cite{WZIPAbadi2021}, and a modified LZ scheme achieves
$O(1/\log W)$ redundancy under finite-memory
assumptions~\cite{WZIPWynerWyner1995}. On parsing, Crochemore, Langiu,
and Mignosi prove greedy phrase-count optimality under dynamic suffix
closure and uniform costs~\cite{WLZCrochemore2014}; minimum-bit parsing
handles variable costs~\cite{WZIPBitOptimal2013}, dictionary-symbolwise
flexible parsing is optimal under closure and coding
assumptions~\cite{WLZFlexibleParsing2012}, the gap between minimum-bit
and greedy parsing can be superconstant~\cite{WZIPKosolobov2018}, and
rightmost-occurrence and sublinear-time factorization address
construction~\cite{WZIPEllert2023,WLZEllertSublinear2023,WZIPKempa2024}.
Among multiple-index designs, Snappy couples a compact short-copy form
with an 11-bit offset~\cite{WZIPSnappy2011}, a recent GPU-decoding preprint
restricts ordinary matches by distance-dependent minimum
lengths~\cite{WZIPShavidze2026}, Sadakane and Imai combine two-level
hashing with suffix sorting~\cite{Two-Level-Hash}, Zstandard's
double-fast matcher uses two seed lengths under one window
limit~\cite{WZIPZstdDoubleFast2019}, Crochemore, Langiu, and Mignosi
keep one sliding-window suffix tree per class of equal offset
cost~\cite{WLZCrochemoreREP2013}, and context-selected dictionaries are a
distinct mechanism~\cite{WZIPPartialMatching1999}. The author's 2017
patent discusses conditional distance
ranges~\cite[Fig.~3 and eq.~(2)]{EarlierWindowPatent}, and the 2021
conference report~\cite{WZIPWuDCC2021} introduced the window scheme with
an LZ4-based experiment.

\emph{Organization.} Section~\ref{sec:background} fixes the parsing and
cost model, and Section~\ref{sec:windows} describes the windows and their
hash indexes. Section~\ref{sec:wzip-theory} gives the basic counting
bound, the coding of the token fields, and search costs.
Section~\ref{sec:convergence} proves the transfer theorem and the
guarantees that survive, Section~\ref{sec:strict-comparisons} the gains
and limits, and Section~\ref{sec:finite-data} window selection and
entropy estimation. The appendices contain the redundancy lower bound,
the proof of the significant-bit bound, and exact minimum-cost parsing.

\section{Parsing Model and Notation}
\label{sec:background}
This section fixes the objects to which all results refer: matches and
windows, the reference greedy parser, and the way code lengths are
counted.

\emph{Matches and windows.} Let $x_1^n$ be a sequence over a finite
alphabet. A match of length $\ell$ and backward distance $d$ at position
$i$ satisfies
\[
 x_{i+r}=x_{i-d+r}\quad(0\le r<\ell),\qquad 1\le d<i.
\]
Copies may overlap the current position, as in LZ77, and are decoded
symbol by symbol. A window of $W$ symbols permits distances
$1\le d\le W$. We distinguish window sizes, in symbols, from
distance-field widths, in bits: a length-$\ell$ match has window
$W(\ell)$ and width $w(\ell)\eqdef\log_2W(\ell)$, and the maximum window
$W$ has width $B\eqdef\log_2W$.

\emph{Notation.} We write $X_1^n$ for a random input, $x_1^n$ for a
realization, and $P_n$ for the law of $X_1^n$. Total code lengths are
denoted by $C$, coding rates by $\mathcal R$, header and framing costs by
$\Gamma$, and the entropy rate by $h$; $H$ denotes entropy, and $\eqdef$
introduces a quantity by a formula.

\emph{The reference parser.} The reference parser searches the entire
decoded history within the allowed distances. At each position it
chooses a longest admissible match of at least the minimum supported
length, or emits one literal if there is none, and then advances by the
length of the chosen phrase; at the end of the input, matches are
truncated. We call the phrases so chosen the \emph{conceptual phrases}
of the parse. Each conceptual phrase is coded as a match token or, if
that is cheaper, as its individual literals, with the same phrase
boundary in both cases. We call this the \emph{literal-fallback rule}.
It bounds the cost of a length-$\ell$ phrase by $b\ell$, where $b$ is
the cost of a literal, without changing the greedy boundaries.

\emph{Search.} Encoders find candidate matches with a chained hash
table: a head table gives the most recent position with each hash value,
and a chain links earlier positions with the same hash
(Section~\ref{sec:windows}). A candidate must be verified symbol by
symbol to reject hash collisions. The reference search inserts every
eligible position, including positions inside earlier matches, and never
discards a candidate that the longest admissible match needs. Practical
encoders also cap the number of candidates, keep only the latest hash
entry, skip insertions, or parse lazily, and their output generally
differs from the reference parse. Each result below therefore either
assumes exact greedy (or minimum-cost) parsing or accounts explicitly
for the matches that the search misses or declines.
Figure~\ref{fig:parsers} shows how greedy, lazy, and minimum-cost parses
of one input differ; minimum-cost parsing returns in
Section~\ref{sec:wzip-theory} and Appendix~\ref{app:exact-parsing}.

\begin{figure}[t]
\centering
\begin{tikzpicture}[font=\small,
  cell/.style={draw, minimum width=6mm, minimum height=6mm, inner sep=0pt, font=\ttfamily,
    text height=1.6ex, text depth=0.3ex},
  lit/.style={draw, fill=black!12},
  mat/.style={draw},
  row/.style={anchor=east, inner sep=0pt},
  cost/.style={anchor=west, inner sep=0pt}]
\node[anchor=west] at (-2.6,0.85) {history \texttt{abc\_bcdefg\_efgh\_pqr\_qrst\_stuvw\_}, then the input};
\foreach \c [count=\i] in {a,b,c,d,e,f,g,h,p,q,r,s,t,u,v,w}
  \node[cell] at (0.6*\i,0) {\c};
\newcommand{\phr}[5]{\draw[#5] (0.6*#2-0.3,#1-0.28) rectangle (0.6*#2+0.6*#3-0.3,#1+0.28);
  \node[font=\ttfamily\footnotesize, text height=1.5ex, text depth=0.3ex] at (0.6*#2+0.3*#3-0.3,#1) {#4};}
\node[row] at (0.0,-0.9) {greedy};
\phr{-0.9}{1}{3}{abc}{mat} \phr{-0.9}{4}{4}{defg}{mat} \phr{-0.9}{8}{1}{h}{lit}
\phr{-0.9}{9}{3}{pqr}{mat} \phr{-0.9}{12}{5}{stuvw}{mat}
\node[cost] at (10.1,-0.9) {89 bits};
\node[row] at (0.0,-1.65) {lazy};
\phr{-1.65}{1}{1}{a}{lit} \phr{-1.65}{2}{6}{bcdefg}{mat} \phr{-1.65}{8}{1}{h}{lit}
\phr{-1.65}{9}{1}{p}{lit} \phr{-1.65}{10}{4}{qrst}{mat} \phr{-1.65}{14}{3}{uvw}{mat}
\node[cost] at (10.1,-1.65) {87 bits};
\node[row] at (0.0,-2.4) {minimum cost};
\phr{-2.4}{1}{1}{a}{lit} \phr{-2.4}{2}{6}{bcdefg}{mat} \phr{-2.4}{8}{1}{h}{lit}
\phr{-2.4}{9}{3}{pqr}{mat} \phr{-2.4}{12}{5}{stuvw}{mat}
\node[cost] at (10.1,-2.4) {78 bits};
\end{tikzpicture}
\caption{\label{fig:parsers}Three parses of one input with a single window,
literals of 9 bits (shaded) and matches of 20 bits. Greedy parsing takes
the longest match at each position. Lazy parsing first emits a literal
when the next position offers a longer match: it gains on
\texttt{abcdefgh} but loses on \texttt{pqrstuvw}, where deferring to
\texttt{qrst} costs a third phrase. The minimum-cost parse, a shortest
path over all phrases, makes the better choice in each half.}
\end{figure}

\emph{Token formats.} LZSS permits literals as an alternative to a
match~\cite{LZSS}. DEFLATE codes literals and lengths in one alphabet and
distances in another, each with prescribed extra bits~\cite{DEFLATE};
LZ4 and Zstandard use other token layouts and entropy
coders~\mbox{\cite{LZ4,Zstd}}. To attribute a gain to the window
schedule, we either hold the token format fixed or include the
difference in the comparison.

\emph{Code lengths.} All results concern complete codes: headers,
schedules, initialization, and the tables of entropy-coded fields are
part of the code length. Token costs are fixed unless a field is entropy
coded. Large-file rates take the limit in the input length before the
limit in the window. Upper bounds hold for the stated source classes and
formats; they are not claimed to be best possible, and no redundancy
rate holds uniformly over all stationary ergodic
sources~\cite{WZIPShields1993}.

\section{Match-Length-Dependent Sliding Windows}
\label{sec:windows}
This section describes a window schedule and the hash indexes that
serve it. The analysis in later sections depends only on the schedule
$\ell\mapsto W(\ell)$; the indexes determine search cost and memory.

The length is decoded before the distance, so a schedule known to the
decoder determines the distance-field width without a class selector.
Table~\ref{tab:window-example} shows a schedule with one hash index per
length class. A match costs $a(\ell)+w(\ell)$ bits, where $a(\ell)$
covers the match flag and the length code, and it is cheaper than its
literals exactly when $a(\ell)+w(\ell)<b\ell$. A small window thus makes a
short match cheap, but the match still pays only if this inequality
holds. Let ${\cal F}_k$ hash a seed of $k$ symbols. An index that serves
a set of match lengths must use a seed no longer than the shortest of
them, so that a genuine match agrees on all hashed symbols.

\begin{table}[htbp]
\centering
\caption{Illustrative length-dependent windows and hash seeds.}
\label{tab:window-example}
\begin{tabular}{|c|rrrrrrr|}\hline
Match length $\ell$ (bytes)&2&3&4&5&6&7&$\ge8$\\\hline
Distance width $w(\ell)$ (bits)&5&9&12&15&17&19&21\\
Hash seed (bytes)&2&3&4&5&6&7&8\\\hline
\end{tabular}
\end{table}

\begin{figure}[t]
\centering
\begin{tikzpicture}[font=\small, >=stealth,
  occ/.style={circle, fill, inner sep=1.3pt},
  far/.style={circle, draw, fill=white, inner sep=1.3pt},
  head/.style={draw, inner sep=2pt, font=\footnotesize},
  lab/.style={anchor=east, align=right, inner sep=0pt, font=\footnotesize},
  brc/.style={decorate, decoration={brace, mirror, amplitude=3pt}},
  chain/.style={->, bend right=50}]
\node[anchor=west] at (-3.4,1.15) {(a) a single window};
\node[lab] at (-0.25,0.2) {all lengths\\3-byte seeds, $W$};
\draw (0,0) rectangle (9.5,0.4);
\draw[thick] (9.5,-0.15) node[below right, inner sep=1pt] {$i$} -- (9.5,0.62);
\foreach \x [count=\k] in {8.9,7.86,6.57,5.36,3.8,2.5,1.04} \node[occ] (a\k) at (\x,0.4) {};
\node[head] (ha) at (11.3,0.2) {$\mathcal F_3(\texttt{cat})$};
\draw[->] (ha.north) to[bend right=40] (a1);
\foreach \k [evaluate=\k as \n using int(\k+1)] in {1,...,6} \draw[chain] (a\k) to (a\n);
\draw[brc] (0,-0.05) -- (9.5,-0.05) node[midway, below=4pt]
  {every earlier \texttt{cat} is visited; every distance costs $B$ bits};
\node[anchor=west] at (-3.4,-1.45) {(b) length-dependent windows};
\node[lab] at (-0.25,-2.4) {$\ell=3$\\3-byte seeds, $W(3)$};
\draw[densely dotted] (0,-2.6) rectangle (7.43,-2.2);
\draw (7.43,-2.6) rectangle (9.5,-2.2);
\draw[thick] (9.5,-2.75) -- (9.5,-2.05);
\node[occ] (b1) at (8.9,-2.2) {}; \node[occ] (b2) at (7.86,-2.2) {};
\foreach \x in {6.57,5.36,3.8,2.5,1.04} \node[far] at (\x,-2.2) {};
\node[head] (hb) at (11.3,-2.4) {$\mathcal F_3(\texttt{cat})$};
\draw[->] (hb.north) to[bend right=40] (b1);
\draw[chain] (b1) to (b2);
\node[lab] at (-0.25,-3.7) {$\ell=5$\\5-byte seeds, $W(5)$};
\draw[densely dotted] (0,-3.9) rectangle (4.23,-3.5);
\draw (4.23,-3.9) rectangle (9.5,-3.5);
\draw[thick] (9.5,-4.05) -- (9.5,-3.35);
\node[occ] (c1) at (7.86,-3.5) {}; \node[occ] (c2) at (5.36,-3.5) {};
\node[far] at (1.04,-3.5) {};
\node[head] (hc) at (11.3,-3.7) {$\mathcal F_5(\texttt{cat\ a})$};
\draw[->] (hc.north) to[bend right=25] (c1);
\draw[->, bend right=35] (c1) to (c2);
\node[lab] at (-0.25,-5.0) {$\ell\ge8$\\8-byte seeds, $W$};
\draw (0,-5.2) rectangle (9.5,-4.8);
\draw[thick] (9.5,-5.35) -- (9.5,-4.65);
\node[occ] (d1) at (5.36,-4.8) {}; \node[occ] (d2) at (1.04,-4.8) {};
\node[head] (hd) at (11.3,-5.0) {$\mathcal F_8(\texttt{cat\ and\ })$};
\draw[->] (hd.north) to[bend right=14] (d1);
\draw[->, bend right=22] (d1) to (d2);
\end{tikzpicture}
\caption{\label{fig:indexes}Hash-chain search with one window and with
length-dependent windows (not to scale). Bars show the history before
the current position~$i$; dots mark earlier occurrences of each index's
seed, and arrows follow its chain from the bucket head, nearest first.
(a)~One index on 3-byte seeds visits every earlier \texttt{cat} within
$W$. (b)~One index per length class, keyed by the class's minimum
length and retaining only its window. Short seeds have many candidates
but only within a short window; hollow dots lie beyond it and are never
visited. The 8-byte seed \texttt{cat and } reaches the whole history
with few candidates.}
\end{figure}

Short and long matches affect the search differently. Many positions may
share the seed \texttt{cat}, but few share \texttt{cat and }. An index
with a longer seed avoids visiting candidates that cannot produce a long
match, and an index for short matches needs to retain only a short window
(Figure~\ref{fig:indexes}). The benefit depends on the source;
Section~\ref{sec:hash-search} bounds the expected number of candidates
and separates genuine seed matches from hash collisions.

Each index is a chained hash table whose chain is a circular array of
exactly $V$ entries for a window of $V$ symbols: the link of position $j$
occupies slot $j\bmod V$ (the bitwise $j\wedge(V-1)$ when $V$ is a power
of two) and is reused $V$ positions later, when $j$ leaves the window. A
dedicated length-$\ell$ index thus uses $W(\ell)$ link entries, and a
shared index retains the largest window it serves; the head table is
sized separately. In WZIP and WLZ4~\cite{WZIPCode2026}, each link stores
the distance to the previous position with the same hash, or a sentinel
if that position lies outside the window, and a search stops once the
accumulated distance leaves the window. Every slot it reads was therefore
written within the window, so only the head table needs initialization.
Links may instead store absolute positions, provided expired candidates
are rejected before a recycled slot is read. When distance $V$ is
admissible, the search at position $i$ precedes the insertion of $i$,
which reuses the slot of position $i-V$.

Each short class $\ell$ visits its candidates nearest first, verifies all
$\ell$ symbols, and stops at the first verified match, returning length
$\ell$ at the nearest retained distance. Only the class with the largest
window extends candidates beyond its seed and searches for the longest
admissible match, possibly up to an inspection cap. With a match cap of
$W$ and ties broken toward the nearest distance (\emph{nearest-distance
ties}), it may also stop at its first verified match of length $W$, since
no later candidate is longer or nearer. The encoder then selects among
the classes' candidates; a hit in one short class does not stop the
others. When distance costs vary, the nearest match need not be the
cheapest.

Adjacent length classes may share an index whose seed does not exceed
the shortest length it serves. Its chain must then retain the largest
window among those lengths, and each served length keeps its own stopping
condition: a length-3 hit does not end a shared length-3/4 search before
length 4 is resolved. Table~\ref{tab:shared-example} gives a shared-index
variant of Table~\ref{tab:window-example}; its 19-bit chain must retain
the whole 19-bit distance range for an exact search.

\begin{table}[htbp]
\centering
\caption{A shared-index variant of Table~\ref{tab:window-example}.}
\label{tab:shared-example}
\begin{tabular}{|c|c|c|c|}\hline
Length (bytes)&Distance width (bits)&Seed (bytes)&Chain width (bits)\\\hline
2&5&2&--\\
3&9&3&12\\
4&12&3&12\\
5&15&5&19\\
6&17&5&19\\
7&19&5&19\\
$\ge8$&21&8&21\\\hline
\end{tabular}
\end{table}

A length-2 class may use a direct lookup table instead of a chain; the
search is exact only if the lookup identifies the seed exactly. Whether
to admit length-2 matches at all depends on their net savings and on
their search and memory costs.

\section{Basic Bounds: Counting, Coding, and Search}
\label{sec:wzip-theory}
This section collects the basic tools used later.
Subsection~\ref{sec:counting} adapts the distinct-phrase counting of the
original LZ77 analysis to length-dependent windows,
Subsection~\ref{sec:single-window-comparison} compares a WLZ parse with a
single-window parse match by match, Subsection~\ref{sec:field-coding}
bounds the length of complete codes whose fields are Huffman coded, and
Subsection~\ref{sec:hash-search} bounds the work and memory of hash-chain
search. Throughout the paper, $\log$ denotes the base-two logarithm and
$\ln$ the natural logarithm.

A length-$\ell$ match at position $i$ and backward distance $d$ is
\emph{admissible} if
\[
  1\le d\le \min\{i-1,W(\ell)\},\qquad
  x_{i+t}=x_{i-d+t}\quad(0\le t<\ell).
\]
Unless stated otherwise, token costs are fixed: a literal costs $b$ bits
and a match costs
\begin{equation}
  c_W(\ell,d)\eqdef a(\ell)+w(\ell),
  \label{eq:wzip-cost}
\end{equation}
where $a(\ell)$ covers the match flag and a prefix code for the length,
and the distance $d-1$ is written in $w(\ell)\eqdef\log W(\ell)$ bits, an
integer. The decoded length determines the window, so no class
identifier is sent. A description of the schedule and any framing are
counted separately in $\Gamma_n$. When distances are Huffman coded, their
actual symbol and extra-bit costs replace $w(\ell)$.

\subsection{Counting short phrases}
\label{sec:counting}
The original LZ77 analysis counts distinct phrases~\cite[Sec.~III-C]{LZ77}.
With length-dependent windows, a phrase shorter than $m$ can start only
where no admissible length-$m$ match exists, so starts that share a
length-$m$ continuation must lie far apart. Let the parser be exact greedy
with literal fallback (Section~\ref{sec:background}), let $D_m(x)$ be the
number of distinct length-$m$ substrings of $x_1^n$, and let $S_m$ count
the conceptual phrases shorter than $m$.

\begin{lemma}[Spacing of short phrases]
\label{lem:wzip-spacing}
For $n\ge m$, with $m$ at least the minimum match length and no larger
than the maximum supported match length,
\begin{equation}
 S_m\le D_m(x)
 \left(1+\left\lfloor\frac{n-m}{W(m)+1}\right\rfloor\right)+m-1.
 \label{eq:wzip-spacing}
\end{equation}
\end{lemma}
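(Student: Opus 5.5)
Our plan is to classify the short conceptual phrases by the length-$m$ substring that begins at their start position, and to show that two starts in the same class lie more than $W(m)$ apart. Consider a conceptual phrase shorter than $m$ that starts at position $i\le n-m+1$. Its length-$m$ continuation $u=x_i^{i+m-1}$ lies entirely within the input. Exact greedy parsing chooses a longest admissible match, and the literal-fallback rule never moves a phrase boundary. Hence the conceptual phrase is shorter than $m$ only if there is no admissible length-$m$ match at $i$. Since $m$ is between the minimum and maximum supported match lengths, such a match would be a legitimate candidate, and its absence is a real constraint. So no earlier occurrence of $u$ starts at a distance $d$ with $1\le d\le\min\{i-1,W(m)\}$. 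If another short-phrase start $i'<i$ has the same continuation $u$, then $d=i-i'$ satisfies $d\le i-1$, and therefore $d>W(m)$. In other words, the starts sharing a continuation are pairwise more than $W(m)$ apart, that is, at least $W(m)+1$ apart.

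Next we count the starts within one class. They lie in $\{1,\dots,n-m+1\}$ and consecutive ones differ by at least $W(m)+1$. If there are $k$ of them, then $(k-1)(W(m)+1)\le n-m$, which gives $k\le 1+\lfloor (n-m)/(W(m)+1)\rfloor$. There are at most $D_m(x)$ classes, one for each distinct length-$m$ substring of $x_1^n$. Summing over the classes gives the first term of \eqref{eq:wzip-spacing}.

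The remaining starts are those with $i>n-m+1$. Their length-$m$ continuation would run past the end of the input, so they cannot be classified. There are at most $m-1$ such positions, because distinct phrases have distinct start positions. This gives the additive $m-1$.

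The only delicate step is the first one. We must check that ``phrase shorter than $m$'' really forces the absence of an admissible length-$m$ match. There are two points to verify. First, greedy takes the longest admissible match, and that match would be at least $m$ long if a length-$m$ one existed; this holds even when windows are not monotone, because we only compare against $W(m)$ itself. Second, truncation at the end of the input occurs only for $i>n-m+1$, a case already set aside. Overlapping copies cause no problem, since admissibility allows $d<\ell$.
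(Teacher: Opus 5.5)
Your proof is correct and takes the same approach as the paper. You classify short-phrase starts by their complete length-$m$ continuation, use the greedy choice of a longest admissible match to show that starts in one class are at least $W(m)+1$ apart, count at most $1+\lfloor (n-m)/(W(m)+1)\rfloor$ starts per class, and add the at most $m-1$ starts without a complete continuation. Your extra checks (nonmonotone windows, truncation affecting only starts beyond $n-m+1$, and $d=i-i'\le i-1$) fill in details that the paper leaves implicit.
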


\noindent\emph{Proof.}
Consider two short-phrase starts $i<j$ with complete length-$m$
continuations. If these continuations agree and $j-i\le W(m)$, an
admissible length-$m$ match exists at $j$, contradicting the selection of
a shorter phrase. Starts with the same continuation are therefore at least
$W(m)+1$ apart, so there are at most $1+\lfloor(n-m)/(W(m)+1)\rfloor$ of
them. There are $D_m(x)$ possible continuations, and at most $m-1$ phrase
starts have no complete continuation. Summing proves the claim.
\hfill$\square$

Let $r_m$ be the largest cost per symbol over supported lengths of at
least $m$:
\[
 r_m\eqdef\sup_{\substack{\ell\ge m\\ \ell\ \mathrm{supported}}}
 \frac{\min\{b\ell,a(\ell)+w(\ell)\}}{\ell}.
\]

\begin{corollary}[Finite-input rate bound]
\label{cor:wzip-rate}
Encode $x_1^n$ by exact longest-match parsing, representing each chosen
match by the cheaper of its token and its literals. For every supported
match length $m\le n$, the complete code length satisfies
\begin{align}
 \frac{C_W(x)-\Gamma_n}{n}
 &\le r_m+\frac{b(m-1)}{n}
 \left[D_m(x)
 \left(1+\left\lfloor\frac{n-m}{W(m)+1}\right\rfloor\right)+m-1\right].
 \label{eq:wzip-rate}
\end{align}
\end{corollary}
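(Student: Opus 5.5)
The plan is to split the conceptual phrases of the exact greedy parse into long phrases, of length at least $m$, and short phrases, of length less than $m$, and to bound the cost of each group separately. The total code length is $\Gamma_n$ plus the sum of the phrase costs, where each phrase is charged the cheaper of its token and its literals. For a long phrase of length $\ell\ge m$, I would bound the cost by $\min\{b\ell,a(\ell)+w(\ell)\}\le r_m\ell$. This holds because $\ell$ is a supported length at least $m$, so it enters the supremum defining $r_m$. Summing over long phrases then gives at most $r_m$ times their total length, hence at most $r_m n$.

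A short phrase is either a literal or a match of length $\ell<m$. In both cases the literal-fallback rule bounds its cost by $b\ell\le b(m-1)$. The short phrases therefore cost at most $b(m-1)S_m$ in total. Lemma~\ref{lem:wzip-spacing} applies, since $m$ is supported and $m\le n$, and it bounds $S_m$ by the bracketed expression in \eqref{eq:wzip-rate}. Dividing everything by $n$ yields the stated inequality.

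The only step that needs care is the end of the input, where matches are truncated. A final phrase may be a match truncated to some length $\ell$, possibly below $m$ or equal to an unsupported value. I would handle it as follows.
\begin{itemize}
\item If its length is below $m$, it is a short conceptual phrase, it is counted in $S_m$, and literal fallback bounds its cost by $b(m-1)$.
\item If its length is at least $m$, I would either take it as supported, since the supported set is assumed to include all lengths from the minimum up to the cap, or charge it as literals, at most $b\ell$ bits.
\end{itemize}
In the literal case, the bound $b\ell\le r_m\ell$ needs $r_m\ge$ the literal cost ratio, which need not hold. The safer choice is the first: the truncated match is itself a valid admissible match of the truncated length, so it is a supported length $\ge m$ and falls under $r_m$.

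I do not expect any further obstacle. The corollary is a direct accounting step on top of Lemma~\ref{lem:wzip-spacing}.
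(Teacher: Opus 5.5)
Your proposal is correct and matches the paper's proof: phrases of length at least $m$ cost at most $r_m$ per symbol, shorter phrases cost at most $b(m-1)$ each by literal fallback, and Lemma~\ref{lem:wzip-spacing} bounds $S_m$. The end-of-input case needs no separate treatment, because the parser only ever emits supported match lengths (admissibility alone would not guarantee this, so that step of your justification is off); and, like the paper, you tacitly need $m\ge2$, since for $m=1$ a literal is not a short phrase and its cost $b$ can exceed $r_1$, so the bound can then fail.
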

\noindent\emph{Proof.}
Phrases of length at least $m$ cost at most $nr_m$ bits in total. Shorter
phrases contain at most $(m-1)S_m$ symbols and cost at most $b(m-1)S_m$
bits. Apply Lemma~\ref{lem:wzip-spacing} to $S_m$. \hfill$\square$

If the search misses or declines matches, let $E_m$ count the starts of
short phrases that have a complete length-$m$ continuation and an
admissible length-$m$ match. Removing these starts restores the spacing
argument, so \eqref{eq:wzip-spacing} holds with $E_m$ added, and
\eqref{eq:wzip-rate} holds with $b(m-1)E_m/n$ added. This covers capped
searches, incomplete insertion, and lazy decisions.

\subsection{Comparison with a single window, and the choice of widths}
\label{sec:single-window-comparison}
The next lemma compares WLZ with a single window of $2^B$ symbols, match
by match. Let a comparator use that window, the same literal cost, and
match cost $a(\ell)+B$. Fix any comparator parse $P$, and partition its
matches into $\mathcal A$, admitted by the length-dependent windows, and
$\mathcal F$, excluded by them. Define
\[
 G(P)\eqdef\sum_{(\ell,d)\in\mathcal A}[B-w(\ell)],\qquad
 R(P)\eqdef\sum_{(\ell,d)\in\mathcal F}[b\ell-a(\ell)-B].
\]

\begin{lemma}[Reference-parse comparison]
\label{lem:reference-parse}
Let $C_W^*(x)$ be the minimum token cost of an admissible parse. Then
\begin{equation}
 C_W^*(x)\le C_0(P)-G(P)+R(P).
 \label{eq:wzip-oracle}
\end{equation}
\end{lemma}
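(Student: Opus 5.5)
The plan is to exhibit one admissible parse whose token cost is exactly $C_0(P)-G(P)+R(P)$. Since $C_W^*(x)$ is a minimum over admissible parses, this gives \eqref{eq:wzip-oracle}. The parse is built from $P$ by keeping its phrase boundaries and recoding each phrase. Here $C_0(P)$ denotes the comparator's token cost of $P$: $b$ per literal and $a(\ell)+B$ per match.

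We will treat the three kinds of phrase in $P$ separately.

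\emph{Literals.} A literal of $P$ stays a literal and costs $b$ under both schemes.

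\emph{Matches in $\mathcal A$.} A match $(\ell,d)\in\mathcal A$ at position $i$ satisfies $d\le\min\{i-1,W(\ell)\}$ by the definition of $\mathcal A$. It is therefore an admissible WLZ match at the same position, distance and length. We keep it as a match token, which costs $a(\ell)+w(\ell)$ by \eqref{eq:wzip-cost}. That is exactly $B-w(\ell)$ less than the comparator's $a(\ell)+B$. Summing over $\mathcal A$ gives the term $-G(P)$.

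\emph{Matches in $\mathcal F$.} A match $(\ell,d)\in\mathcal F$ is excluded by the schedule. We replace it by its $\ell$ individual literals, which cost $b\ell$. This exceeds the comparator's cost by $b\ell-a(\ell)-B$. Summing over $\mathcal F$ gives $+R(P)$.

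The recoded sequence covers $x_1^n$ with the same boundaries, and every token in it is admissible. Its token cost is therefore the sum of the three contributions, namely $C_0(P)-G(P)+R(P)$. The bound on $C_W^*(x)$ follows.

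The checks needed are mostly bookkeeping. First, the minimum supported length must be the same in both schemes, so every length used by a kept match is supported by WLZ. Second, $a(\ell)$ must be the same function in both cost models, which holds by the lemma's hypotheses. Third, decoding is unaffected by replacing matches with literals: later matches in $\mathcal A$ refer to earlier decoded symbols, and those symbols are identical whether they were emitted as literals or as copies. Overlapping copies stay valid because the condition $d\le i-1$ is unchanged. I expect no real obstacle. The only subtlety is to state clearly that $C_0(P)$ is the comparator cost of $P$ itself, not of an optimal comparator parse. Note that $R(P)$ may be negative term by term, and the argument does not depend on its sign.
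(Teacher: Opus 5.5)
Your proposal is correct and is essentially the paper's proof: keep the literals and the admitted matches, expand each excluded match into its literals, observe that the decoded history is unchanged so later copies stay admissible, and sum the per-phrase cost differences to get exactly $C_0(P)-G(P)+R(P)$. One small wording point: expanding an excluded match into literals refines the phrase boundaries rather than preserving them, which does not affect the argument.
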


\noindent\emph{Proof.}
Keep all admitted matches and original literals, and replace every
excluded match by its literals. Every later phrase sees the same decoded
history, so the result is an admissible parse whose cost is exactly the
right side, and the optimum costs no more. \hfill$\square$

If every comparator match is cheaper than its literals, then $R(P)\ge0$.
Let a heuristic WLZ encoder exceed the minimum token cost by
$E_W\eqdef C_W^{\rm heur}-C_W^*$, and let $\Delta\Gamma\eqdef\Gamma_W-\Gamma_0$
be the difference in header cost. The encoder is then strictly shorter
than the comparator parse whenever
\begin{equation}
 G(P)>R(P)+E_W+\Delta\Gamma.
 \label{eq:wzip-gain-condition}
\end{equation}
The saving $G(P)$ on admitted matches must therefore outweigh the loss
$R(P)$ on excluded ones. Minimum-bit parsing is a shortest-path problem
with an edge for every admissible phrase~\cite{WZIPBitOptimal2013};
Appendix~\ref{app:exact-parsing} shows which single-window shortcuts fail
under length-dependent windows and computes $C_W^*$ in $O(n(K+Y)\log n)$
time, where $K$ counts the intervals of lengths with a constant window and
$Y$ those with a constant length cost.

The same comparison gives a rule for choosing the widths. Fix a reference
parse with $N_\ell$ matches of length $\ell$, and let $A_\ell(w)$ count
those at distance at most $2^w$. Keeping the admitted matches and
replacing the others by literals costs
\[
 F_\ell(w)\eqdef A_\ell(w)[a(\ell)+w]
          +[N_\ell-A_\ell(w)]b\ell.
\]

\begin{proposition}[Marginal window expansion]
With $\Delta A\eqdef A_\ell(w+1)-A_\ell(w)$,
\begin{equation}
 F_\ell(w+1)-F_\ell(w)
 =A_\ell(w)-\Delta A[b\ell-a(\ell)-w-1].
 \label{eq:wzip-width-rule}
\end{equation}
\end{proposition}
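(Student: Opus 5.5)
This identity follows from expanding both values of $F_\ell$ from the definition and regrouping. The plan is to write $A\eqdef A_\ell(w)$, so that $A_\ell(w+1)=A+\Delta A$. By definition,
\[
 F_\ell(w+1)=(A+\Delta A)[a(\ell)+w+1]+[N_\ell-A-\Delta A]\,b\ell,
\]
and $F_\ell(w)=A[a(\ell)+w]+[N_\ell-A]\,b\ell$. Subtracting term by term, the $A[a(\ell)+w]$ parts cancel except for the extra bit $A\cdot 1$ that each previously admitted match now pays. The $N_\ell b\ell$ and $-Ab\ell$ parts cancel entirely.

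What remains is $\Delta A[a(\ell)+w+1]-\Delta A\,b\ell$. This is the cost of the $\Delta A$ newly admitted matches, each now coded as a token of width $w+1$ instead of as $\ell$ literals. Collecting terms gives
\[
 F_\ell(w+1)-F_\ell(w)=A_\ell(w)-\Delta A\,[b\ell-a(\ell)-w-1],
\]
which is \eqref{eq:wzip-width-rule}.

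There is no real obstacle: the calculation is purely algebraic. The only point to check is that $A_\ell(w)\le A_\ell(w+1)\le N_\ell$, which holds because the set of matches within distance $2^w$ is contained in the set within $2^{w+1}$. This makes $\Delta A\ge0$ and the literal counts $N_\ell-A_\ell(\cdot)$ nonnegative, so both expressions are genuine costs of parses of the kind used in Lemma~\ref{lem:reference-parse}.

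I would also note the interpretation after the identity. Widening the class-$\ell$ window by one bit costs one bit on each already admitted match. It saves the net token gain $b\ell-a(\ell)-w-1$ on each newly admitted match. Hence the expansion pays exactly when $\Delta A[b\ell-a(\ell)-w-1]>A_\ell(w)$.
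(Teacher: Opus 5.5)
Your proposal is correct and matches the paper's proof: the paper gives the same accounting in words (one extra bit for each of the $A_\ell(w)$ previously admitted matches, and a saving of $b\ell-a(\ell)-w-1$ bits for each of the $\Delta A$ newly admitted ones), and your explicit expansion of $F_\ell(w+1)-F_\ell(w)$ makes it precise. The check $A_\ell(w)\le A_\ell(w+1)\le N_\ell$ is not needed for the identity itself, only for reading both sides as costs of actual parses.
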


\noindent\emph{Proof.}
Every previously admitted match costs one more bit. Each newly admitted
match saves $b\ell-a(\ell)-w-1$ bits relative to its literals.
\hfill$\square$

Widening the window therefore helps exactly when the newly admitted
matches save more than the extra bit costs the matches already admitted.
Nondecreasing widths for $J$ length classes can be chosen by the
recursion
\[
 T(j,w)\eqdef F_j(w)+\min_{v\le w}T(j-1,v),
\]
where the objective of a class sums those of its lengths; with prefix
minima it takes $O(J(B_{\max}+1))$ time once the histograms are built.
The rule optimizes a fixed parse; choosing widths and reparsing jointly
is a harder problem, and widths chosen per block must be included in the
code length.

\subsection{Entropy coding of the token fields}
\label{sec:field-coding}
Practical LZ77 codecs entropy code the fields of their tokens. This
subsection bounds the complete length of such codes, which
Sections~\ref{sec:convergence} and~\ref{sec:strict-comparisons} use.
Entropy and Huffman costs of LZSS fields have been studied before: Bell
analyzed lengths, distances, literals, and
flags~\cite[Sec.~6.3]{WLZBell1986}, and Fenwick entropy coded lengths
and distance precisions with raw distance
bits~\cite[Secs.~5 and~8]{WLZFenwick1995}. Figure~\ref{fig:fields} shows
the fields of a parse and the complete format used below.

\begin{figure}[t]
\centering
\begin{tikzpicture}[font=\small,
  cell/.style={draw, minimum width=7mm, minimum height=6mm, inner sep=0pt},
  box/.style={draw, minimum height=7mm, inner sep=2pt},
  brc/.style={decorate, decoration={brace, amplitude=3pt}},
  tag/.style={anchor=west, inner sep=0pt}]
\node[tag] at (-0.9,0) {(a)};
\foreach \c [count=\i] in {a,b,c,a,b,c,a,b,X,a,b,c}
  \node[cell, font=\ttfamily] (x\i) at (0.7*\i,0) {\c};
\draw[brc, decoration={mirror}] (x1.south west) -- (x3.south east)
  node[midway, below=4pt] {3 literals};
\draw[brc, decoration={mirror}] (x4.south west) -- (x8.south east)
  node[midway, below=4pt] {match $(5,3)$};
\draw[brc, decoration={mirror}] (x9.south west) -- (x9.south east)
  node[midway, below=15pt] {literal};
\draw[brc, decoration={mirror}] (x10.south west) -- (x12.south east)
  node[midway, below=4pt] {match $(3,6)$};
\node[tag] at (-0.9,-2.0) {(b)};
\foreach \y/\name/\what in {-2.0/V/literal values, -2.7/R/literal counts,
                            -3.4/L/match lengths, -4.1/D/offsets}
  \node[tag] at (0.35,\y) {$\name$\quad \what};
\foreach \c [count=\i] in {a,b,c,X} \node[cell, font=\ttfamily] at (3.5+0.7*\i,-2.0) {\c};
\foreach \c [count=\i] in {3,1,0} \node[cell] at (3.5+0.7*\i,-2.7) {\c};
\foreach \c [count=\i] in {5,3} \node[cell] (l\i) at (3.5+0.7*\i,-3.4) {\c};
\foreach \c [count=\i] in {3,6} \node[cell] (d\i) at (3.5+0.7*\i,-4.1) {\c};
\node[tag] at (7.3,-2.0) {$N=4$ symbols};
\node[tag] at (7.3,-2.7) {$M+1=3$; the last is the terminal count};
\node[tag] at (7.3,-3.4) {$M=2$};
\node[tag] at (7.3,-4.1) {$M=2$; the table may depend on the class $J$ of $\ell$};
\draw[->, dashed] (l2.east) .. controls (6.1,-3.75) .. node[right=1pt] {$J$} (d2.east);
\node[tag] at (-0.9,-5.7) {(c)};
\node[box, minimum width=40mm] (hd) at (2.05,-5.7) {$n,N,M,W$, mode bit};
\foreach \s [count=\i] in {V,R,L,D}
  \node[box, minimum width=13mm] (t\i) at (4.1+1.3*\i-0.65,-5.7) {$\mathcal T_{\s}$};
\foreach \s [count=\i] in {V,R,L,D}
  \node[box, minimum width=13mm] (p\i) at (9.3+1.3*\i-0.65,-5.7) {$\s$};
\draw[brc] (hd.north west) -- (hd.north east) node[midway, above=4pt] {self-delimiting header};
\draw[brc] (t1.north west) -- (t4.north east) node[midway, above=4pt] {labeled Huffman trees};
\draw[brc] (p1.north west) -- (p4.north east) node[midway, above=4pt] {payloads};
\end{tikzpicture}
\caption{\label{fig:fields}Fields of a parse and the complete format of
Section~\ref{sec:field-coding}. (a)~The greedy parse of
\texttt{abcabcabXabc} with nearest-distance ties: three literals, the
overlapping match $(5,3)$, a literal, and the match $(3,6)$. (b)~Its four
field streams. The decoder reads a length before its offset, so an offset
table may be selected by the length class~$J$. (c)~The complete format:
a self-delimiting header, four labeled Huffman trees, and the four
payloads, whose known sizes $N,M+1,M,M$ delimit them.}
\end{figure}

Because the length is decoded first, the distance code may depend on the
length class $J$ without a class selector. For a fixed parse with $M$
matches and distance symbol $Z$, the ideal empirical saving from
conditioning is
\begin{equation}
 M\widehat H(Z)-M\widehat H(Z\mid J)=M\widehat I(Z;J).
 \label{eq:wzip-mutual-information}
\end{equation}
Binary Huffman coding changes this saving by $\delta_g-\delta_c$, where
each rounding term lies in $[0,M)$ when a singleton alphabet may use zero
bits; extra distance bits, table headers, and format restrictions also
count. The identity explains the length-conditioned distance models of
LZMA and Brotli~\cite{WZIPLZMA2015,WZIPBrotli2016}, but it does not by
itself imply a net saving, nor a benefit from excluding matches.
Measurements agree: with the bounded dynamic-programming parser of WZIP,
replacing its per-length-group offset tables by one table lowers the
compression ratio on the Silesia corpus by 0.58\%, whereas admitting every
length at every distance, with offsets still coded per group, leaves it
unchanged~\cite{WZIPWuCodecs2026}. Each ablation reparses the input, so it
indicates rather than isolates the effect of conditioning.

We code a parse as four streams. Consecutive literals are grouped by a
count before each match, as in Zstandard~\cite[Sec.~3.1.1.3.2]{Zstd},
with an explicit terminal count, possibly zero, after the last match;
Zstandard instead appends the remaining literals after its final
sequence. In the next proposition, the fitted code tables, including
their integer symbol labels, are given to the decoder for free, and
lengths and offsets are coded as whole values. Let $C_{\rm H}^{\rm pay}$
denote the resulting Huffman payload, without table descriptions and
stream framing.

\begin{proposition}[Separate-field Huffman payload]
\label{prop:field-entropy}
Fix a valid parse with $N$ literals and $M$ matches, and put $P\eqdef N+M$.
Let $V,R,L,D$ be its literal-value, literal-count, match-length, and
offset streams, of respective sizes $N,M+1,M,M$. Define
\begin{equation}
 \mathcal E_4\eqdef N\widehat H_0(V)+(M+1)\widehat H_0(R)
              +M\widehat H_0(L)+M\widehat H_0(D),
 \label{eq:four-field-entropy}
\end{equation}
where empty streams contribute zero. Unconstrained empirical binary
Huffman coding satisfies
\begin{align}
 C_{\rm H}^{\rm pay}&\le\mathcal E_4+N+3M+1,
 \label{eq:four-field-huffman}\\
 C_{\rm H}^{\rm pay}&\le b_0N+P+1+2Mb_M
                   =O(P\log(P+1)),
 \label{eq:four-field-support}
\end{align}
where $b_0$ is a fixed valid literal width and
$b_M\eqdef\max\{1,\lceil\log\max\{1,M\}\rceil\}$.
The order assumes a fixed literal alphabet.
\end{proposition}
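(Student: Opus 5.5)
Both bounds follow stream by stream from standard Huffman facts, applied to each of the four streams separately and then summed. The payload is $C_{\rm H}^{\rm pay}=C(V)+C(R)+C(L)+C(D)$, where each term is the Huffman cost of one stream with its own empirically fitted table.

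For \eqref{eq:four-field-huffman} I will use the redundancy bound for an empirical Huffman code: on a stream of size $s$ with empirical entropy $\widehat H_0$, the code costs less than $s\widehat H_0+s$ bits. If the stream's alphabet is a singleton, the code uses zero bits, which also satisfies the bound. An empty stream costs zero. Applying this to the four streams gives
\[
C_{\rm H}^{\rm pay}\le \mathcal E_4+N+(M+1)+M+M=\mathcal E_4+N+3M+1 ,
\]
which is exactly \eqref{eq:four-field-huffman}. No step here is delicate. The only point to check is the convention that a singleton alphabet may use zero bits, since otherwise a constant stream would need one bit per symbol. Even in that case the $+s$ slack would absorb it.

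For \eqref{eq:four-field-support} I will use a different fact: the optimality of Huffman coding among prefix codes. The empirical Huffman code costs no more than any fixed prefix code on the same alphabet, so it suffices to exhibit simple competitors.
\begin{enumerate}
\item For $V$, a fixed-width code of $b_0$ bits per literal bounds the cost by $b_0N$.
\item For $R$, $L$ and $D$, the number of distinct values in each stream is at most its size, so at most $M+1$.
\item For $L$ and $D$, a fixed-length code over at most $M$ distinct values uses $\lceil\log M\rceil\le b_M$ bits per symbol, giving at most $Mb_M$ each. The clamp $b_M\ge1$ covers the cases where fewer than two values occur.
\item For $R$, which has $M+1$ entries, I will not use a fixed-length code. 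Instead I use a unary-type prefix code in which the $k$-th most frequent value gets a codeword of length $k$. Its cost is $\sum_k k\,f_k$ over the ranked frequencies $f_k$, and this is at most $(M+1)+M$, less than what fixed-length coding would give.
\end{enumerate}

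The hard step will be getting the exact constant $P+1$ for $R$ rather than a bound like $(M+1)\lceil\log(M+1)\rceil$. One route is to bound the $R$ stream by $N+M+1$: the sum of all literal counts equals $N$, and an Elias-gamma-like or unary code on the counts costs about count$+1$ bits per entry, totaling $N+M+1=P+1$. Huffman optimality against this explicit code gives $C(R)\le P+1$. Adding the pieces gives $b_0N+P+1+2Mb_M$.

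Finally, since $N,M\le P$ and $b_M=O(\log(P+1))$, the sum is $O(P\log(P+1))$ for a fixed literal width $b_0$, that is, for a fixed literal alphabet.
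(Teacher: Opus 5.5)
Your proof is the paper's. The redundancy of less than one bit per stream symbol, with zero-bit singleton codes allowed, gives \eqref{eq:four-field-huffman}. Huffman optimality against three competitors gives \eqref{eq:four-field-support}: fixed-width literals, $b_M$-bit indices for $L$ and $D$, and the unary count words $0^r1$, which cost exactly $\sum_j(r_j+1)=N+M+1=P+1$.

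Two asides should be removed, because they assert false bounds.

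\emph{Item~4.} The claim that the rank-based code costs at most $(M+1)+M$ is wrong. Take the $M+1$ counts to be $0,1,\ldots,M$, so that $N=M(M+1)/2$. Every value then has frequency one, and $\sum_k kf_k=(M+1)(M+2)/2$, which is quadratic in $M$. That code does in fact stay within $P+1$, but showing this requires comparing it with $0^r1$.

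\emph{Elias gamma.} It does not cost at most $r+1$ bits per count; for example, it spends $3$ bits on $r=1$. The exact unary words on the count values are what yield the constant $P+1$.
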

\noindent\emph{Proof.}
An empirical Huffman code exceeds the empirical entropy by at most one
bit per stream symbol, which gives~\eqref{eq:four-field-huffman}. For the
second bound, fixed-width literal coding is a valid competitor, and unary
count words $0^r1$ cost $\sum_j(r_j+1)=N+M+1=P+1$ bits. Each match-field
stream has at most $M$ distinct values, so a $b_M$-bit index is a valid
competitor for each value. Huffman optimality gives
\eqref{eq:four-field-support}. \hfill$\square$

The bounds need no independence between the streams. The support bound
relies on the free integer labels: a code that sends a range identifier
followed by raw extra bits must also pay for those bits.

A complete format for $n\ge1$ over a fixed alphabet $\Sigma$ sends
$n,N,M,W$ self-delimitingly, then the four labeled Huffman trees, then
the four payloads; a fixed mode bit tells whether lengths are capped at
$W$. Each tree has an empty-tree flag, and a nonempty tree is described
by preorder internal/leaf flags and a fixed-width label at each leaf. Let
$K_s$ be the number of distinct values in field $s$, with label universes
\[
 U_V\eqdef|\Sigma|,\quad U_R\eqdef N+1,\quad U_D\eqdef W,\quad
 U_L\eqdef\min\{n,W\}\ \text{or}\ n,
\]
according to the mode, where positive lengths and distances are labeled
from zero. A nonempty tree has $2K_s-1$ nodes, so the complete length
$C_{\rm H}$ satisfies
\begin{equation}
 C_{\rm H}-C_{\rm H}^{\rm pay}
 \le O(\log(n+1)+\log(W+1))
 +\sum_{s\in\{V,R,L,D\}}K_s(2+\lceil\log U_s\rceil).
 \label{eq:huffman-table-cost}
\end{equation}
The constant covers the empty-tree flags. The known stream sizes
$N,M+1,M,M$ delimit the payloads, including zero-bit codes for singleton
alphabets, so the format is a prefix code over all input lengths.
Reconstruction needs neither the frequencies, nor the search indexes, nor
the window schedule; any other decoder parameter must be fixed in advance
or included in the length.

These formats code a fixed parse; choosing the parse with the fitted
tables in mind is not covered by the fixed-cost parsing results. A
one-bit choice between the fitted format and another complete prefix
code $C_0$ costs $1+\min\{C_0,C_{\rm H}\}$ and so keeps every upper bound
proved for $C_0$ within one bit.

Zstandard codes its literal-length, match-length, and offset fields with
finite state entropy (FSE) coding, which carries a state between
symbols, interleaved with raw extra bits, and its literals with Huffman
coding~\cite[Secs. 3.1.1.3 and 4.1]{Zstd}. The results below are proved
for Huffman coding; carrying them over to FSE requires bounding its
normalization, final-state, and extra-bit costs.

DEFLATE and Zstandard shrink their coding alphabets by sending a range
identifier followed by prescribed raw bits
\cite[Sec.~3.2.5]{DEFLATE},~\cite[Sec.~3.1.1.3.2.1]{Zstd}. The next
proposition, a specialization of the analysis of grouped monotone
alphabets~\cite{WLZRyabko2003,WLZSaid2004}, bounds how much this loses
against coding whole values.

\begin{proposition}[Significant-bit coding with an irregular head]
\label{prop:significant-bit-locality}
Fix an integer $\nu\ge1$ and put $T\eqdef2^\nu$.
Let $X\in\{0,\ldots,c\}$, where $c\ge T$.
Give every $x<T$ and the cap $x=c$ an exact symbol. For $T\le x<c$,
encode the label $(r,u)$ followed by $r$ raw bits, where
\[
 r\eqdef\lfloor\log x\rfloor-\nu+1,
 \qquad u\eqdef\lfloor x/2^r\rfloor.
\]
Thus $\nu$ leading significant bits are retained. All labels, including
exact symbols, share one coding alphabet. Suppose
$p_k\ge\cdots\ge p_{c-1}$ for some $1\le k<T$, where
$p_x\eqdef\Pr(X=x)$; the head and cap probabilities are unrestricted.
For the label $J$, raw width $r_J$ (zero for exact symbols), and
$\tau\eqdef\Pr(k\le X<c)$, the grouping loss satisfies
\begin{equation}
 0\le\Delta\eqdef H(J)+\mathbb E r_J-H(X)
 \le\frac{2\tau}{e\ln2\,(2^\nu-k+1)}.
 \label{eq:significant-bit-locality}
\end{equation}
With optimal unrestricted Huffman codes for this same distribution,
the mean lengths obey $0\le L_{\rm group}-L_{\rm whole}<1+\Delta$.
An unused cap may have probability zero.
\end{proposition}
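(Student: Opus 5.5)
The plan is to treat the grouping loss as a sum of per-group relative entropies. Label the groups by $j=(r,u)$ and exact symbols by singletons. For $T\le x<c$ we have $2^{\nu-1}\le u<2^\nu$ and $r\ge1$, so group $j$ is an interval $I_j=[s_j,s_j+2^{r_j})$ of values, truncated at $c-1$ if necessary, with $s_j=u2^{r_j}\ge T>k$. Since $J$ is a function of $X$, the chain rule gives $H(X)=H(J)+H(X\mid J)$. Therefore
\[
 \Delta=\sum_j P_j\bigl[r_j-H(X\mid J=j)\bigr],\qquad P_j\eqdef\Pr(J=j).
\]
Each bracket is at least $r_j-\log|I_j|\ge0$, which gives $\Delta\ge0$; exact symbols contribute zero. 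For a full group the bracket equals $D(q_j\,\|\,\mathrm{Unif}(I_j))$, where $q_j$ is the conditional law on $I_j$. A truncated last group below the cap also has to be handled: its bracket is $r_j-\log|I_j|+D(q_j\|\mathrm{Unif})$. I would check whether the extra $r_j-\log|I_j|$ is absorbed by the same estimate below, or otherwise treat it separately using the fact that its mass is small relative to the preceding groups.

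For the upper bound, I would first bound one group using monotonicity. On $I_j$ the law is nonincreasing, so $q_j$ has maximum $p_{s_j}/P_j$. Hence
\[
 P_jD(q_j\|\mathrm{Unif})\le P_j\log\frac{2^{r_j}p_{s_j}}{P_j}.
\]
With $A_j\eqdef 2^{r_j}p_{s_j}$, the elementary inequality $y\log(A/y)\le A/(e\ln2)$ then gives $P_jD_j\le A_j/(e\ln 2)$. I would sharpen this by keeping the factor $\log(A_j/P_j)$ and using $P_j\ge|I_j|\,p_{s_j+|I_j|-1}$ where that helps.

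The second step is to sum the $A_j$ against $\tau$ and recover the factor $2/(2^\nu-k+1)$. Because $p$ is nonincreasing on $[k,c)$, the value $p_{s_j}$ is at most the mean of $p$ over any block of values in $[k,s_j)$. Groups double in width only when $r$ increases, and the block of values just before $s_j$ has width at least $2^{r_j-1}$. The aim is a weighted charging: each $A_j$ is charged to mass lying between $k$ and $s_j$, with total weight at most $2/(T-k+1)$ per unit of that mass. The $+1$ in the denominator should come from counting the values $k,\dots,T$ that precede the first group, since $p_T\le\frac1{T-k+1}\sum_{x=k}^{T}p_x$.

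The main obstacle is this charging step. A crude charge of $2^{r_j}p_{s_j}$ to the preceding group gives only $\Delta\le 2\tau/(e\ln2)$. To do better I would use the refined per-group estimate: for a monotone law, the loss is driven by the ratio $p_{s_j}/p_{\text{end of }I_j}$, and the product of these ratios over consecutive groups telescopes. I would compare $2^{r_j}$ with the count $s_j-k\ge 2^{\nu+r_j-1}-k$ of larger-probability values that precede $I_j$. I would first try proving the bound for the extremal distributions (a step profile flat on $[k,s)$ and zero after it), then extend by a convexity argument in $p$.

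Finally, the Huffman statement. The grouped scheme is a prefix code for $X$, so an optimal whole-value Huffman code is no longer: $L_{\rm whole}\le L_{\rm group}$. For the other side, Huffman coding of $J$ costs less than $H(J)+1$, so $L_{\rm group}<H(J)+1+\mathbb E r_J=H(X)+\Delta+1$. Combined with $L_{\rm whole}\ge H(X)$, this gives $L_{\rm group}-L_{\rm whole}<1+\Delta$. A cap of probability zero only adds a zero-probability label, which changes none of these quantities.
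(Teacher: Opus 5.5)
Your proof of $\Delta\ge0$ and of $0\le L_{\rm group}-L_{\rm whole}<1+\Delta$ is correct and matches the paper. The gap is the upper bound in \eqref{eq:significant-bit-locality}. Your main route cannot deliver it, and the route that can is only named.

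Bounding each group by $A_j/(e\ln2)$ discards the fact that a flat group has zero loss. Take $p$ uniform on a long range $[k,m]$ ending at a group boundary. Then $\Delta=0$, but $\sum_jA_j$ is close to $\tau$. So $\sum_jA_j/(e\ln2)$ already exceeds the target once $2^\nu-k+1>2$, and no charging scheme for the $A_j$ can recover the factor $2/(2^\nu-k+1)$. Telescoping the ratios $p_{s_j}/p_{\mathrm{end}}$ does not help either. These ratios are unbounded: they are infinite for the partially occupied last group of a step profile, whose loss is finite. Their product therefore gives no bound in terms of $\tau$.

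What is missing is the reduction to step profiles that you mention last, actually carried out. This is exactly the paper's proof, and it needs three facts.
\begin{enumerate}
\item For the fixed partition, $\Delta=\mathbb E r_J-H(X\mid J)$ is convex in the law of $X$. Indeed, $\mathbb E r_J$ is linear and $H(X\mid J)=\sum_jP_jH(q_j)$ is concave. Convexity must be applied to $\Delta$ itself, not to your bound $\sum_jP_j\log(A_j/P_j)$, which is concave in $p$.
\item Put $a_x=p_x$ for $k\le x<c$ and $a_c=0$. The law of $X$ is then a mixture of point masses at $x<k$ and at $c$, which have zero loss, and of the uniforms $U[k,m]$, $k\le m<c$. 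The weights are $w_m=(m-k+1)(a_m-a_{m+1})\ge0$, summing to $\tau$.
\item $U[k,m]$ has zero loss if $m<T$. Otherwise, since the groups tile $[T,\infty)$, every group meeting $[T,m]$ except the one containing $m$ is full and flat. That last group, of capacity $s$ with $t$ occupied values, contributes $\frac{t}{m-k+1}\log\frac st\le\frac{s}{e\ln2\,(m-k+1)}$. Here your estimate $P_j\log(A_j/P_j)$ is exact. Since $s=2^{\lfloor\log m\rfloor-\nu+1}\le2m/T$ and $m/(m-k+1)\le T/(T-k+1)$ for $m\ge T$, this is at most $2/(e\ln2\,(T-k+1))$.
\end{enumerate}
Convexity then yields the factor $\tau$. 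The truncated group below the cap that worried you is just the partial last group of $U[k,c-1]$ and needs no separate treatment.
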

\noindent The proof is in Appendix~\ref{app:significant-bit-locality}.
For fixed $k$ the loss is $O(2^{-\nu})$, and the same holds for growing
$k$ if $2^\nu\ge2k$. Applied to the empirical distributions of $R,L,D$
that satisfy these conditions, with the literal stream unchanged,
\begin{equation}
 C_{\rm group}^{\rm pay}-C_{\rm H}^{\rm pay}
 \le 3M+1+(M+1)\Delta_R+M\Delta_L+M\Delta_D.
 \label{eq:significant-bit-streams}
\end{equation}
Again no independence between the fields is needed. With a common
precision, grouping adds $O((M+1)2^{-\nu})$ to the $O(M+1)$ rounding cost
of Huffman coding, so every rate order that already allows that rounding
is preserved. Table costs are not included.

The construction has $O(2^\nu[1+\log(c/2^\nu)])$ range labels: at most
$2^{\nu-1}$ per binary scale, plus the exact head and cap. For fixed
precision the alphabet shrinks from $c+1$ values to $O(\log c)$ symbols,
with smaller frequency tables and Huffman trees, while raw suffixes need
only bit extraction~\cite{WLZRyabko2003}. DEFLATE uses these ranges for
$D-1$ with $\nu=2$ and for $L-3$ with $\nu=3$, with an exact symbol for
$L=258$~\cite[Sec.~3.2.5]{DEFLATE}; for tails decreasing from $k=1$, the
grouping losses are below $0.266$ and $0.133$ bits per value. Zstandard's
literal-count, shifted-length, and coded-offset partitions refine dyadic
ranges~\cite[Sec.~3.1.1.3.2.1.1]{Zstd}, so the same argument bounds each
of their grouping losses by $1/(e\ln2)<0.531$ bits per value when the
distribution of positive values is nonincreasing.

\subsection{Hash-chain search and memory}
\label{sec:hash-search}
This subsection bounds the work of a hash-chain search and the memory of
the indexes. Fix a query seed of length $k$ and $N\le W$ stored seeds, of
which $M_k$ equal the query. The hash function is drawn, independently of
the input, from a family in which two unequal seeds collide with
probability at most $1/K$, where $K$ is the number of buckets. The
subscript $\mathrm{hash}$ on an expectation denotes averaging over this
choice, and the subscript $X$ averaging over the input as well.

\begin{lemma}[Expected bucket size]
\label{thm:wzip-hash}
If $Q_k$ is the number of prior nodes in the query bucket, then
\begin{equation}
 \mathbb E_{\mathrm{hash}} Q_k\le M_k+\frac{N-M_k}{K}.
 \label{eq:wzip-hash}
\end{equation}
For an iid source with symbol probabilities $p_a$ and collision entropy
$H_2\eqdef-\log\sum_a p_a^2$, at a fixed input position,
\begin{equation}
 \mathbb E_{X,\mathrm{hash}}Q_k
 \le \min(W,k-1)+W2^{-kH_2}+\frac WK.
 \label{eq:wzip-hash-iid}
\end{equation}
\end{lemma}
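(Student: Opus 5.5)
The plan is to write $Q_k$ as a sum of indicators over the $N$ stored seeds and use linearity of expectation over the hash choice. A stored seed equal to the query always lands in the query bucket, which accounts for the $M_k$ term. Each of the other $N-M_k$ seeds is unequal to the query, so under the universal family it collides with probability at most $1/K$. Summing these bounds gives \eqref{eq:wzip-hash}. One point needs care: the hash is drawn independently of the input, so $M_k$ and the stored seeds may be treated as fixed while we average over the hash.

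For \eqref{eq:wzip-hash-iid}, I apply \eqref{eq:wzip-hash} conditionally on the input and then average over $X$. Since $N\le W$ and $M_k\le N$, the second term is at most $W/K$. It remains to show
\[
\mathbb E_X M_k\le\min(W,k-1)+W2^{-kH_2}.
\]
I write $M_k=\sum_j \mathbf 1\{\text{stored seed }j\text{ equals the query}\}$, where the stored seeds start at prior positions $i-d$ with $1\le d\le W$, and split the prior seeds by distance $d$ from the query position $i$.

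\emph{Overlapping seeds} ($d<k$). These share symbols with the query, so independence fails. I bound each such indicator by $1$. There are at most $\min(N,k-1)\le\min(W,k-1)$ of them.

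\emph{Disjoint seeds} ($d\ge k$). Here the two length-$k$ blocks use disjoint, independent iid symbols. The probability that they agree is
\[
\prod_{r=0}^{k-1}\Pr(X_{i+r}=X_{i-d+r})=\Bigl(\sum_a p_a^2\Bigr)^k=2^{-kH_2}.
\]
There are at most $W$ such seeds, which gives the term $W2^{-kH_2}$.

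The main obstacle is the overlapping case. One could try to show that overlapping matches are also unlikely, but this is unnecessary: the crude bound of $1$ per seed already matches the stated $\min(W,k-1)$. The remaining care is bookkeeping. The stored positions should be exactly those within the window, so the count is at most $W$. The fixed position $i$ must not depend on the hash. Averaging over the input commutes with the hash bound by independence (Fubini).
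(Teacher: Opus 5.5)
Your proposal is correct and follows essentially the same argument as the paper: sum indicators over the stored seeds, with equal seeds colliding surely and unequal ones with probability at most $1/K$; then bound each of the at most $k-1$ overlapping seeds by one and each nonoverlapping seed by $2^{-kH_2}$. Your explicit bound $(N-M_k)/K\le W/K$ and your averaging over the input using the independence of the hash simply spell out the paper's brief ``substitution'' step.
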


\noindent\emph{Proof.}
Equal seeds collide with probability one and unequal seeds with
probability at most $1/K$; summing the expectations of the indicators
proves \eqref{eq:wzip-hash}. Two nonoverlapping iid seeds agree with
probability $(\sum_a p_a^2)^k=2^{-kH_2}$, and at most $k-1$ earlier seeds
overlap the query, each contributing at most one to the expected number
of exact matches. Substitution gives \eqref{eq:wzip-hash-iid}.
\hfill$\square$

A search for a fixed length can stop at the first verified match, which
gives a sharper bound. For a class whose seed and target length are both
$\ell$, with history $W(\ell)$, let $R_\ell$ count the nodes inspected
before the first verified match, the end of the bucket, or an inspection
cap $S_\ell$. At a fixed position, let $N$ count the retained eligible
positions, of which $M_\ell$ have the same length-$\ell$ seed. With
$K_\ell$ buckets and the same hash assumption,
\[
 \mathbb E_{\mathrm{hash}} R_\ell
 \le \min\!\left\{S_\ell,
        \mathbf1_{\{M_\ell>0\}}+\frac{N-M_\ell}{K_\ell}\right\}
 \le \min\!\left\{S_\ell,1+\frac{N}{K_\ell}\right\}.
\]
Indeed, at most one node with an equal seed is inspected, and every other
inspected node is a hash collision; summing the collision indicators over
the retained positions gives the bound, with no source model and no
independence among collisions. With $K_\ell$ proportional to $W(\ell)$,
each short class thus inspects $O(1)$ candidates in expectation, even on
highly repetitive inputs: if all retained seeds equal the query, the
search visits one node instead of the whole bucket. Verification costs up
to $O(\ell)$ per node, and hashing and index updates cost extra. The bound
does not apply to the class with the largest window, where a seed match
shorter than the match cap must still be extended.

Shorter seeds also allow smaller windows at a comparable chance of
finding an exact seed match. For a uniform iid source on $q\ge2$ symbols,
fix a query position with $W_k$ preceding symbols, all retained. The query
word is independent of the preceding block, and any occurrence wholly
inside that block is admissible, so Lemma~\ref{lem:uniform-word-hit}
gives
\[
 \Pr\{\text{no exact $k$-symbol seed match within distance $W_k$}\}
 \le 2\exp\!\left\{-\frac{(\ln2)W_k}{8q^k}\right\}.
\]
A window of
\[
 W_k\ge \frac{8}{\ln2}\,q^k\ln\frac{2}{\varepsilon},
 \qquad 0<\varepsilon<1,
\]
symbols therefore misses with probability at most $\varepsilon$, and
shortening the seed by one symbol divides this window by $q$. Conversely,
for a random query, a union bound over distances, overlaps included,
bounds the probability of a seed match by $\min\{1,W_kq^{-k}\}$, so
coverage $1-\varepsilon$ requires $W_k\ge(1-\varepsilon)q^k$. Thus $q^k$ is
the right window order at fixed confidence; this converse averages over
the query word and does not match the dependence on $\varepsilon$. When a
class uses a seed shorter than its target length, the guarantee concerns
finding the seed, and the extension must still be verified.

For a nonuniform iid source, keeping $W_k2^{-kH_2}$ constant balances the
nonoverlapping term in \eqref{eq:wzip-hash-iid}, but does not give the same
guarantee for every query word. Genuine seed matches, unlike hash
collisions, supply useful candidates: they help a search that stops at the
first match, but lengthen an exhaustive scan of the bucket. A longer seed
thus offsets the larger number of genuine candidates in a larger window,
while the bucket count controls collisions. These bounds count
candidates, not time: on a constant string all seeds are equal, and
extension and index maintenance also cost time. Summing over all
positions gives a whole-file bound that does not rely on the
fixed-position calculation at the starts the parser selects.

With index groups $g$, chain windows $V_g$, bucket counts $K_g$, and entry
widths $s_g,t_g$ in bytes, the encoder needs
\[
 W_{\max}+\sum_g(s_gV_g+t_gK_g)
\]
bytes, plus lookahead, token buffers, entropy tables, and allocation
overhead; the circular layout of Section~\ref{sec:windows} uses exactly
$V_g$ link entries per index. The decoder needs one history buffer of size
$W_{\max}$. Chain windows that grow geometrically with ratio at least
$r>1$ have at most $rW_{\max}/(r-1)$ link entries in total, so with
$K_g=O(V_g)$ the head and chain arrays have $O(W_{\max})$ entries, whose
byte cost depends on $s_g$ and $t_g$. Comparisons with earlier accelerated
LZ77 matchers~\cite{Two-Level-Hash} should report these quantities.

\section{What Is Preserved: Rates, Minimax Orders, and Universality}
\label{sec:convergence}
This section shows that restricting short matches costs nothing
asymptotically, provided the full window starts early enough.
Subsection~\ref{sec:stronger-convergence} states the transfer theorem, the
main tool. Subsection~\ref{sec:rates} derives convergence rates under
substring-complexity constraints, Subsection~\ref{sec:capped-search}
extends the bounds to capped hash search, and
Subsection~\ref{sec:periodic} treats periodic inputs with Huffman-coded
fields. Subsection~\ref{sec:minimax} gives finite-input bounds and their
optimality, and Subsection~\ref{sec:greedy-universality} proves
universality for stationary ergodic sources.

An asymptotic statement concerns a family of encoders indexed by the
maximum window $W=2^B$, with $B\to\infty$. Every supported length
$\ell\ge t_B$ has the full window $W$, shorter lengths have windows of at
most $W$, and the \emph{full-window threshold} $t_B$ is at least the
minimum match length. The exact greedy family supports every length from
the minimum through at least $W$, and through all lengths where stated.
The literal cost $b$ and the length-code constants $a_0,a_1$ do not
depend on $B$ or $n$, and
\[
 a(\ell)\le a_0+a_1\log(\ell+1),\qquad a_0,a_1\ge0,
 \quad
 \Rrate_B(x)\eqdef\limsup_{n\to\infty}\frac{C_B(x_1^n)}n.
\]
Headers satisfy $\Gamma_n=o(n)$ for each fixed $B$, and literal fallback
keeps the greedy boundaries (Section~\ref{sec:background}). Rates take
the limit $n\to\infty$ first and then $B\to\infty$;
Corollary~\ref{cor:finite-polynomial} gives bounds that hold jointly for
finite $n$, and Section~\ref{sec:entropy-estimation} selects among a
growing set of windows. Two scales of convergence appear: for a
positive-entropy source, the redundancy above the entropy, often in
powers of $1/B=1/\log W$; for a zero-entropy or repetitive source, the
rate itself, which can decay as a power of $1/W$.

\subsection{The transfer theorem}
\label{sec:stronger-convergence}
The classical comparison of LZ77 with a reference parse uses suffix
closure: inside a reference copy, the rest of the copy is again a legal
copy at the same distance, so greedy parsing reaches its end. With
length-dependent windows this holds only while the remaining length is
long enough for the distance to be admissible. In \texttt{abcdabcd}, for
example, the second \texttt{abcd} is a match of length four at distance
four when $W(4)=4$, but its suffix \texttt{cd} cannot use that distance
when $W(2)=2$. The theorem below bounds the cost of this loss, and of the
failures of an inexact search. Let
$K_W\eqdef\max\{b,B+a_0+a_1\log(W+1)\}$; a conceptual phrase shorter
than $W$ then costs at most $K_W$, and a longer one at most $K_W/W$ per
symbol.

\begin{theorem}[Block-parse transfer]
\label{thm:block-transfer}\label{prop:schedule-search-transfer}
Retain the window and code assumptions above, with minimum match length
$m_0$, every length from $m_0$ through $W$ supported, and
$m_0\le t_B\le W$. Partition $x_1^n$ into consecutive blocks of length
$W$, with a final shorter block. Let the reference parse $P$ consist of
the ordinary greedy LZ77 parses of these blocks, each parsed
independently with every earlier position in that block available and
with overlapping copies allowed; let $Z\eqdef Z_W(x_1^n)$ be its phrase
count and $(\ell_j,d_j)$ its copy lengths and distances. Define
\begin{align*}
 \tau(d)&\eqdef\min\{k\in\{m_0,\ldots,W\}:
             d\le W(r)\text{ for every integer }r\in[k,W]\},\\
 \mathcal D(P)&\eqdef\sum_{j:\,\mathrm{copy}}
                  \min\{\ell_j,\tau(d_j)-1\}.
\end{align*}
Consider any valid parser that codes its phrases by the literal-fallback
rule. Let $F$ count its conceptual starts $i$ inside reference copies
$[p_j,q_j]$ where $r\eqdef q_j-i+1\ge\tau(d_j)$ but the chosen phrase has
length less than $r$. Then
\begin{equation}
 C_B^{\mathrm{heur}}(x_1^n)-\Gamma_n
 \le \frac{K_W}{W}n+K_W Z+b\mathcal D(P)+K_W F.
 \label{eq:schedule-search-transfer}
\end{equation}
Exact greedy search has $F=0$. Since $\tau(d)\le t_B$, so that
$\mathcal D(P)\le(t_B-1)Z$, exact greedy WLZ satisfies
\begin{equation}
 C_B(x_1^n)-\Gamma_n\le \frac{K_W}{W}n
 +[K_W+b(t_B-1)]Z_W(x_1^n).
 \label{eq:block-transfer}
\end{equation}
\end{theorem}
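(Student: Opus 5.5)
The proof will be a charging argument that assigns each conceptual phrase of the heuristic parse to one of the four terms on the right side of \eqref{eq:schedule-search-transfer}. Two cost facts set this up. A conceptual phrase of length at least $W$ costs at most $K_W/W$ per symbol, so all such phrases together cost at most $(K_W/W)n$. A phrase shorter than $W$ costs at most $K_W$, and by the literal-fallback rule it also costs at most $b$ times its length. So only phrases shorter than $W$ remain to be charged, and I will sort them by their position relative to the reference parse $P$.

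\emph{Step 1: phrases containing a reference end.} Call a phrase \emph{crossing} if it contains the last position $q_j$ of some reference phrase $j$, whether a copy or a literal. Reference phrases never straddle a block boundary, so the reference ends form a set of $Z$ positions. Distinct heuristic phrases are disjoint, so each crossing phrase can be injected into the reference ends it contains. Hence there are at most $Z$ crossing phrases, and those shorter than $W$ cost at most $K_WZ$ in total.

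\emph{Step 2: the remaining phrases.} A non-crossing phrase starting at $i$ lies strictly inside one reference phrase. That phrase must be a copy $[p_j,q_j]$, since literals have length one and are therefore always crossed. Moreover $i>p_j$, because the phrase starting at $p_j$ either crosses or is followed by another phrase inside the same copy. Let $r=q_j-i+1$; since the phrase does not reach $q_j$, its length is less than $r$.

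If $r\ge\tau(d_j)$, the start is exactly one counted by $F$, and the phrase costs at most $K_W$, which gives the term $K_WF$. The admissibility fact behind this is the following, and it also shows that $F=0$ for exact greedy search. The tail $x_i^{q_j}$ is a length-$r$ match at distance $d_j$. Its source lies inside the same block before $i$, hence in the decoded history. Also $r\le\ell_j\le W$ and $r\ge\tau(d_j)$ give $d_j\le W(r)$, so the match is admissible. An exact greedy parser would therefore choose a phrase of length at least $r$, and such a phrase would reach $q_j$.

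If $r<\tau(d_j)$, the phrase lies within the last $\min\{\ell_j,\tau(d_j)-1\}$ positions of copy $j$. The non-crossing phrases inside copy $j$ are disjoint, so their total length is at most that quantity. By literal fallback they cost at most $b\min\{\ell_j,\tau(d_j)-1\}$, and summing over $j$ gives $b\mathcal D(P)$.

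\emph{Step 3: the corollary.} Adding the three charges gives \eqref{eq:schedule-search-transfer}. For \eqref{eq:block-transfer}, every distance satisfies $d\le W=W(r)$ for all $r\ge t_B$, so $\tau(d)\le t_B$. Hence $\mathcal D(P)\le(t_B-1)Z$, and $F=0$ by Step 2.

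The main obstacle is the admissibility claim in Step 2. It needs the copy source to lie in the heuristic parser's history, which is why $P$ is restricted to blocks of length $W$ and why the constraint involves every $r\in[\tau,W]$ rather than only $r=\ell_j$. Truncation at the end of the input and matches longer than the match cap also need care: I would absorb them by noting that a phrase of length at least $W$ already falls under the first term.
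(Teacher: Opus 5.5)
Your proof is correct and is essentially the paper's own argument: a phrase containing a reference end $q_j$ is exactly the last start inside reference phrase $j$, which the paper charges $K_W$, and the remaining starts are split between $F$ and the final $\min\{\ell_j,\tau(d_j)-1\}$ positions of the copy just as you do, with the same suffix-admissibility argument giving $F=0$ for exact greedy search. One small slip: your claim that $i>p_j$ is false, since a non-crossing phrase such as a literal may start at $p_j$, but nothing uses it, because the witness at distance $d_j$ stays inside the block when $i=p_j$.
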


The reference parse serves only for comparison: the parser keeps its
whole history, and no minimum-bit parsing is needed. The count $F$
states exactly what an inexact search must preserve: from a start inside
a reference copy whose remainder is admissible, the chosen phrase should
reach the end of the copy. The parse may still differ from the exact
greedy parse.

\noindent\emph{Proof.}
The full window for long lengths ensures that $\tau(d)$ exists, and the
condition on every remaining length handles nonmonotone schedules.
Inside a reference copy, its suffix of length $r\ge\tau(d)$ is a legal
copy at the same distance, also when copies overlap, so exact greedy
parsing reaches the end of the reference copy.

For an arbitrary parser, every start in a reference copy other than the
last either counts toward $F$ or has fewer than $\tau(d)$ reference
symbols remaining. A failed phrase has length less than $r\le W$ and
costs at most $K_W$; it is necessarily not the last. The other phrases
that are not last stay within the final $\min\{\ell,\tau(d)-1\}$
positions of the reference copy and cover disjoint intervals there, so by
the literal-fallback rule they cost at most $b$ times this number. The
last start costs at most $K_W$ if its phrase is shorter than $W$. A
reference literal contains at most one start. Finally, all phrases of
length at least $W$ cost at most $K_W/W$ per symbol. A phrase crossing a
reference boundary is counted only where it starts, and a reference
phrase containing no start costs nothing. Summing
proves~\eqref{eq:schedule-search-transfer}.
\hfill$\square$

\subsection{Convergence rates under substring complexity}
\label{sec:rates}
The transfer theorem reduces convergence rates to phrase counts of
ordinary LZ77, which substring-complexity bounds control. We first
record a weaker bound that follows from Lemma~\ref{lem:wzip-spacing}
alone. For a threshold $m$ with $W_B(m)=W$, define
\[
 K_m\eqdef\max\{b,B+a_0+a_1\log(m+1)\}.
\]
Each conceptual phrase shorter than $m$ costs at most $K_m$, and each
longer phrase at most $K_m/m$ per symbol, because $1/\ell$ and
$\log(\ell+1)/\ell$ decrease for $\ell\ge1$. Lemma~\ref{lem:wzip-spacing}
therefore gives
\begin{equation}
 \frac{C_B(x_1^n)-\Gamma_n}{n}\le \frac{K_m}{m}
 +\frac{K_m}{n}\left[
 D_m(x_1^n)\left(1+\left\lfloor\frac{n-m}{W+1}\right\rfloor\right)
 +m-1\right].
 \label{eq:token-spacing}
\end{equation}
If $D_m(x_1^n)\le P(m)$ uniformly in $n$, this implies
\begin{equation}
 \Rrate_B(x)\le K_m\left(\frac1m+\frac{P(m)}{W+1}\right).
 \label{eq:token-rate}
\end{equation}
Counting a short phrase at the cost of a token rather than of its
literals matters when $m$ is much larger than $B$. For $P(m)=A m^s$ and
$t_B=O(B)$, choosing $m=\lfloor W^{1/(s+1)}\rfloor$ gives
$\Rrate_B=O(BW^{-1/(s+1)})$; the transfer theorem improves the exponent
to $1/s$ (Corollary~\ref{cor:polynomial-convergence}). For
stretched-exponential growth, the same bound already gives the following
rate.

\begin{proposition}[Stretched-exponential substring complexity]
\label{prop:stretched}
Assume $D_m(x_1^n)\le A2^{c m^\alpha}$ uniformly in $n\ge m$, for
fixed $A,c>0$ and $0<\alpha<1$. Under the same exact greedy family with
$t_B=O(B)$,
\begin{equation}
 \Rrate_B(x)=O\!\left(B^{-(1/\alpha-1)}\right).
 \label{eq:stretched}
\end{equation}
\end{proposition}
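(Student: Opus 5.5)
We apply the token-spacing rate bound \eqref{eq:token-rate} with a threshold $m$ chosen as a function of $B$. Fix the exact greedy family with $t_B\le cB$ for some constant $c$. Every choice $m\ge t_B$ with $m\le W$ has $W_B(m)=W$, so \eqref{eq:token-rate} applies. Inserting $D_m\le A2^{cm^\alpha}$ gives
\[
 \Rrate_B(x)\le K_m\left(\frac1m+\frac{A2^{cm^\alpha}}{W+1}\right),
 \qquad K_m=\max\{b,B+a_0+a_1\log(m+1)\}.
\]
For $m\le W$ we have $K_m=O(B)$, since $\log(m+1)\le B+1$. The task is therefore to balance $B/m$ against $B\,2^{cm^\alpha}/2^B$.

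Take $m\eqdef\lfloor (B/(2c))^{1/\alpha}\rfloor$, so that $cm^\alpha\le B/2$. Then the second term is at most $A B\,2^{-B/2}$, which is exponentially small in $B$. The first term is $K_m/m=O(B\cdot B^{-1/\alpha})=O(B^{-(1/\alpha-1)})$. Here we use $\alpha<1$, so that $m$ grows faster than $B$ and eventually $m\ge cB\ge t_B$, which legitimizes the threshold. Also $m\le W=2^B$ for large $B$, and $K_m\le B+a_0+a_1\log(m+1)=B+O(\log B)$. The exponential term is $o(B^{-(1/\alpha-1)})$, which yields \eqref{eq:stretched}.

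The only delicate point is checking the hypotheses of \eqref{eq:token-rate}. First, $m$ must be a supported length with full window, which holds because $m\ge t_B$ for large $B$ and the family supports every length through $W$. Second, the uniform bound on $D_m$ must hold for all $n\ge m$, which passes to the $\limsup$ in $n$ since the bracket in \eqref{eq:token-spacing} divided by $n$ tends to $D_m/(W+1)$ at most. Headers contribute $o(n)$ and vanish in the limit $n\to\infty$ taken before $B\to\infty$. No use of the transfer theorem is needed. I expect no genuine obstacle beyond confirming $m\ge t_B$, which is exactly where $\alpha<1$ and $t_B=O(B)$ enter.
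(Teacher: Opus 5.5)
Your proof is correct and follows the paper's route: you insert the envelope into \eqref{eq:token-rate} with a threshold $m\asymp B^{1/\alpha}$, which eventually exceeds $t_B=O(B)$ because $\alpha<1$. The paper instead takes $cm^\alpha\approx B-q\log B$ with a fixed $q\ge1/\alpha$, which makes the second term $O(B^{1-q})$; your choice $cm^\alpha\le B/2$ makes that term exponentially small at the cost of a constant factor $2^{1/\alpha}$ in the leading term, so the order is the same (do rename the constant in $t_B\le cB$, which clashes with the $c$ of the hypothesis).
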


\noindent\emph{Proof.}
Choose a fixed $q\ge1/\alpha$ and set
$m=\lfloor((B-q\log B)/c)^{1/\alpha}\rfloor$ for sufficiently large $B$.
Then $m\ge t_B$, $m\le W$, $K_m=O(B)$, and
$P(m)/W\le A B^{-q}$. Equation~\eqref{eq:token-rate} gives
$O(B^{1-1/\alpha})+O(B^{1-q})$, proving the result. \hfill$\square$

The exponent of $1/\log W$ exceeds one when $\alpha<1/2$; for instance,
$\alpha=1/3$ gives $O((\log W)^{-2})$ for these zero-entropy sources. By
contrast, the exponential envelope $D_m\le A m^s2^{hm}$ with $h>0$, a
bound on language growth in the spirit of the constrained sources
of~\cite[Sec.~III-A]{LZ77}, gives only $\Rrate_B\le h+O(\log B/B)$ by
this argument: take $m=\lfloor[B-(s+2)\log B]/h\rfloor$ and assume
$t_B=o(B)$.

For power-law envelopes we combine the transfer theorem with a known
phrase-count bound. For a finite word $y$, let
\[
 \delta(y)\eqdef\max_{1\le m\le |y|}\frac{D_m(y)}m.
\]
The LZ77 phrase count $z(y)$ satisfies
\begin{equation}
 z(y)=O\!\left(\delta(y)\log\left(1+\frac{|y|}{\delta(y)}\right)\right),
 \label{eq:known-delta}
\end{equation}
a consequence of the alphabet-dependent bound of Kociumaka, Navarro, and
Prezza~\cite[Lemma II.4]{WZIPKociumaka2023}, which builds on
Raskhodnikova et al.~\cite{WZIPRaskhodnikova2013}; the $1$ in the
logarithm covers the case where $\delta(y)$ is comparable to $|y|$.

\begin{corollary}[Polynomial substring complexity]
\label{cor:polynomial-convergence}
Suppose $D_m(x_1^n)\le A m^s$ for every $n\ge m$, with fixed
$A\ge1$ and $s\ge1$. If $t_B=O(B)$, the exact greedy family satisfies
\begin{equation}
 \Rrate_B(x)\le
 \begin{cases}
 O(BW^{-1/s}),&s>1,\\
 O(B^2/W),&s=1.
 \end{cases}
 \label{eq:polynomial-convergence}
\end{equation}
The constants may depend on $A,s$ and the code-cost bounds.
\end{corollary}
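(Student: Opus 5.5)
The plan is to apply the transfer theorem (Theorem~\ref{thm:block-transfer}) and bound the block phrase count $Z_W(x_1^n)$ under the envelope $D_m\le Am^s$. By \eqref{eq:block-transfer}, with $K_W=O(B)$ and $t_B=O(B)$,
\[
 \frac{C_B(x_1^n)-\Gamma_n}{n}\le O\!\left(\frac BW\right)+O(B)\,\frac{Z_W(x_1^n)}{n},
\]
and $Z_W=\sum_y z(y)$ over at most $\lceil n/W\rceil$ blocks $y$ of length at most $W$. Since $\Gamma_n=o(n)$ and we take $n\to\infty$ first, it suffices to show
\[
 \max_{|y|\le W} z(y)=
 \begin{cases}
 O(W^{1-1/s}),&s>1,\\
 O(B),&s=1.
 \end{cases}
\]
Multiplying by $O(B)/W$ then gives $O(BW^{-1/s})$ and $O(B^2/W)$; the term $O(B/W)$ is dominated in both cases.

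Every substring of a block is a substring of $x_1^n$, so $D_m(y)\le\min\{Am^s,|y|\}$. Hence
\[
 \frac{D_m(y)}{m}\le\min\left\{Am^{s-1},\frac{|y|}{m}\right\},
\]
so $\delta(y)\le A$ when $s=1$, and $\delta(y)\le A^{1/s}W^{1-1/s}$ when $s>1$, by balancing at $m\approx(W/A)^{1/s}$.

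For $s=1$, \eqref{eq:known-delta} gives $z(y)=O(A\log(1+W/A))=O(B)$. This settles the second case directly.

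For $s>1$, \eqref{eq:known-delta} loses a factor $\log(1+W^{1/s})=\Theta(B)$, which would give only $O(B^2W^{-1/s})$. The main obstacle is removing this logarithm. The $\delta$-bound is proved by summing a per-scale count over the $O(\log W)$ dyadic length scales, each scale contributing $O(\delta)$. I would redo that summation with the scale-dependent quantity $D_{2^k}(y)/2^k$ in place of its maximum $\delta(y)$.

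Concretely, I would first extract from the proof in \cite{WZIPKociumaka2023,WZIPRaskhodnikova2013} a bound of the form
\[
 \#\{\text{phrases of length in }[2^k,2^{k+1})\}\le c\,\min\left\{\frac{D_{2^{k+2}}(y)}{2^k},\ \frac{|y|}{2^k}\right\},
\]
together with a similar bound for phrases shorter than a constant. Under the envelope, the first term is at most $cA\,2^{(k+2)s}/2^k$, which grows geometrically in $k$ when $s>1$. The second term decays geometrically. The sum over $k$ is therefore dominated by the crossover scale $2^k\approx W^{1/s}$, and gives $O(W^{1-1/s})$ with no logarithm. The same sum for $s=1$ has $O(\log W)$ equal terms, recovering $O(B)$.

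If the per-scale bound cannot be read off in that form, the fallback is to prove it directly. A greedy phrase shorter than $2^{k+1}$ starting at $p$ means that $y[p..p+2^{k+1})$ has no earlier occurrence. I would then charge the short phrases inside each aligned segment of length $2^k$ to distinct first occurrences of length-$2^{k+2}$ substrings, charging at most $O(1)$ phrases of the scale per segment, which is the step I expect to need the most care. Constants depend only on $A$, $s$, and the code-cost bounds, as stated.
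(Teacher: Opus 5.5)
Your proposal is correct and follows essentially the paper's route. You use the transfer theorem, then \eqref{eq:known-delta} with $\delta(y)\le A$ for $s=1$, and for $s>1$ the dyadic distinct-substring count from the proof of Claim~4 of~\cite{WZIPRaskhodnikova2013}, balanced at $u\approx W^{1/s}$. The paper applies the $D_{2u}$ bound to all scales up to $u$ and bounds all longer phrases by $W/u$ at once, rather than taking a minimum at each scale.

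One correction is needed. The per-scale count holds only with an additive constant: over non-final phrases, $\sum_{k\le u}kn_k\le D_{2u}(y)+2u$. The $2u$ accounts for windows that run off the block. A direct proof should charge each position $j$ of a phrase, not each aligned segment, to the first-occurrence window of length $2u$ starting at $j-u+1$. Each scale then contributes $O(Au^{s-1}+1)$, and the sum gains an $O(\log W)$ term. This is harmless because $\log W=O(W^{1-1/s})$ for fixed $s>1$.
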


\noindent\emph{Proof.}
For $s=1$, every reference block $y$ has $\delta(y)\le A$, so
\eqref{eq:known-delta} gives $z(y)=O(\log(1+|y|))$.
For fixed $s>1$, use the distinct-substring argument in the proof
of Claim 4 of~\cite{WZIPRaskhodnikova2013}. In a full length-$W$
reference block, let $n_k$ count ordinary greedy phrases of length
$k$, excluding the final phrase. For $1\le u\le W/2$,
\[
 \sum_{k\le u}k n_k\le D_{2u}+2u\le A(2u)^s+2u.
\]
Thus a dyadic length class $u/2<k\le u$ contains
$O(Au^{s-1}+1)$ phrases. Summing the geometric series up to $u$
gives $O(Au^{s-1}+\log u)$ phrases of length at most $u$.
There are at most $W/u$ longer phrases, plus the final phrase.
Take a power of two $u$ within a factor two of $W^{1/s}$; for
sufficiently large $W$ it is below $W/2$. Since $s>1$ is fixed,
$\log u=O(u^{s-1})$, and $z(y)=O(W^{1-1/s})$.

There are $\lfloor n/W\rfloor$ full reference blocks, while the
final block contributes at most $W$ phrases. Divide
\eqref{eq:block-transfer} by $n$ and let $n\to\infty$ at fixed $W$.
Since $K_W+b(t_B-1)=O(B)$, the two reference counts give
$O(BW^{-1/s})$ for $s>1$ and $O(B\log W/W)=O(B^2/W)$ for $s=1$.
\hspace*{\fill}$\square$

The exponent $1/s$ improves on the elementary $1/(s+1)$, and for $s>1$ the
dyadic argument also removes a logarithmic factor. Single-window LZ77
satisfies the same bounds; the point is that the windows do not destroy
them. Theorem~\ref{thm:finite-minimax} shows that the finite-input orders
are optimal, and polynomial rates for particular zero-entropy sources
have precedent~\cite{WZIPJainBansal2014}. Table~\ref{tab:convergence}
collects the guarantees of this subsection.

\begin{table}[htbp]
\centering
\caption{Convergence guarantees for the stated exact greedy family.
Constants in the source envelopes are fixed independently of $W,n$.}
\label{tab:convergence}
\small
\begin{tabular}{|p{2.25in}|p{2.3in}|}\hline
Substring-complexity envelope&Output-rate guarantee\\\hline
$D_m\le A m^s$, fixed $s>1$&$O((\log W)W^{-1/s})$\\[3pt]
$D_m\le A m$&$O((\log W)^2/W)$\\[3pt]
$D_m\le A2^{c m^\alpha}$, $0<\alpha<1$
 &$O((\log W)^{-(1/\alpha-1)})$\\[3pt]
$D_m\le A m^s2^{hm}$, $h>0$
 &$\le h+O(\log\log W/\log W)$\\\hline
\end{tabular}
\end{table}

\subsection{Capped hash search}
\label{sec:capped-search}
Practical encoders inspect a bounded number of candidates per index. For
such a policy, the failure count $F$ of Theorem~\ref{thm:block-transfer}
has an explicit bound. The \emph{capped multi-index policy} is as
follows. Fix index groups $g$ with seed lengths and extension limits
$1\le k_g\le L_g\le W$, histories $V_g\le W$, bucket counts $K_g$, and
integer inspection caps $S_g\ge1$, such that every supported $r\le W$ has
a queried group $g(r)$ with
\[
 k_{g(r)}\le r\le L_{g(r)},\qquad V_{g(r)}\ge W(r).
\]
Each index retains every eligible position, including positions inside
earlier phrases. At each conceptual start, the encoder queries every
group with a complete seed and inspects its bucket nearest first,
stopping after $S_g$ nodes or at the end of the bucket. It verifies the
seed and compares candidates exactly up to length $L_g$ or the end of the
input, considering every supported admissible length in $[k_g,L_g]$ that
it finds. A group with $L_g<W$ may stop at its first verified admissible
match of length $L_g$. With nearest-distance ties and a match cap of $W$,
a group with $L_g=W$ may likewise stop at its first verified admissible
match of length $W$; otherwise it continues to the cap or the end of the
bucket. The encoder selects a longest admissible match across the groups
and applies the literal-fallback rule.

The dedicated short-length indexes of Section~\ref{sec:windows} have
$k_g=L_g=\ell$ and $V_g=W(\ell)$, so each stops at its first verified
match; the largest class has $k_g=t_B$, $L_g=W$, and $V_g=W$. Shared
indexes satisfy the same coverage condition, but a hit shorter than their
longest served length does not justify stopping. The full-$W$ stop keeps
the nearest longest match, since later nodes in that group are older and
cannot give a longer admissible match; selection across groups still uses
all returned candidates. This stop does not apply to the parser with
unrestricted lengths of Corollary~\ref{cor:linear-entropy-payload}.

To state the bound, fix the input and the reference parse $P$. Let
$\mathcal Q(P)$ contain all positions $i$ in reference copies $[p_j,q_j]$
whose remaining suffix length $r_i\eqdef q_j-i+1$ is at least
$\tau(d_j)$. Fix a covering group $g_i\eqdef g(r_i)$, and let $d_i^*$ be
the distance of the nearest admissible full-length witness:
\[
 d_i^*\eqdef\min\{d:1\le d\le\min\{W(r_i),i-1\},\quad
                x_{i-d}^{i-d+r_i-1}=x_i^{i+r_i-1}\}.
\]
The reference distance $d_j$ is such a witness. Among the $d_i^*-1$
closer positions, let $A_i$ count the seeds equal to the query's
$k_{g_i}$-symbol seed, and define
\[
 v_i\eqdef\begin{cases}
  0,&S_{g_i}\ge d_i^*,\\
  1,&S_{g_i}\le A_i,\\
  \displaystyle\min\!\left\{1,
   \frac{d_i^*-1-A_i}{K_{g_i}(S_{g_i}-A_i)}\right\},&\text{otherwise}.
 \end{cases}
\]

\begin{corollary}[Capped multi-index search]
\label{cor:capped-search-transfer}
Use the capped multi-index policy, with hashes drawn independently of the
input as in Lemma~\ref{thm:wzip-hash}. Then
$\mathbb E_{\mathrm{hash}}F\le\sum_{i\in\mathcal Q(P)}v_i$ and
\begin{equation}
 \mathbb E_{\mathrm{hash}} C_B^{\mathrm{heur}}(x_1^n)-\Gamma_n
 \le \frac{K_W}{W}n+K_W Z+b\mathcal D(P)
                       +K_W\sum_{i\in\mathcal Q(P)}v_i.
 \label{eq:capped-search-transfer}
\end{equation}
The policy visits at most $n\sum_gS_g$ candidate nodes, and the history,
lookahead, and indexes use $O(W+\sum_g(V_g+K_g))$ symbols and entries,
apart from entry widths.
\end{corollary}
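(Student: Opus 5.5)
We plan to reduce the corollary to Theorem~\ref{thm:block-transfer}. The capped policy is a valid parser with literal fallback, so \eqref{eq:schedule-search-transfer} holds pointwise for every hash choice. Averaging over the hash and using linearity, the only random term is $K_W F$. It therefore suffices to show $\mathbb E_{\mathrm{hash}}F\le\sum_{i\in\mathcal Q(P)}v_i$. We do this by bounding, for each $i\in\mathcal Q(P)$, the probability that $i$ is a conceptual start whose chosen phrase is shorter than $r_i$. Writing $F\le\sum_{i\in\mathcal Q(P)}\mathbf 1\{\text{$i$ fails}\}$, it is enough to prove $\Pr_{\mathrm{hash}}\{\text{the policy at $i$ returns no admissible match of length}\ge r_i\}\le v_i$. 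This holds whether or not $i$ is actually a start.

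Fix $i$ and its covering group $g=g_i$. The coverage condition gives $k_g\le r_i\le L_g$ and $V_g\ge W(r_i)\ge d_i^*$, so position $i-d_i^*$ is retained in the index and lies in the query's bucket with probability one. Its seed equals the query's, because $r_i\ge k_g$ and the witness agrees on $r_i$ symbols. Nearest-first inspection reaches this node unless the cap $S_g$ is exhausted first, or the search stops earlier. An early stop occurs only at a verified match of length $L_g\ge r_i$, or of length $W\ge r_i$ under the full-$W$ rule. In either case a match of length at least $r_i$ is found. That match is admissible for length $r_i$: for $L_g<W$ this follows from the group's own admissibility check, and otherwise from the full window. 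Once the witness is inspected, exact comparison up to $L_g$ finds length $\ge r_i$ at an admissible distance. Selecting a longest match across groups then yields a phrase of length $\ge r_i$, so $i$ does not fail.

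Failure therefore requires that at least $S_g$ nodes among the $d_i^*-1$ closer positions lie in the bucket. If $S_g\ge d_i^*$, this is impossible, so $v_i=0$. If $S_g\le A_i$, we use the trivial bound $1$. Otherwise the $A_i$ equal-seed nodes are in the bucket with certainty, so failure needs at least $S_g-A_i$ collisions among the $d_i^*-1-A_i$ unequal seeds. The expected number of such collisions is at most $(d_i^*-1-A_i)/K_g$ by the pairwise collision assumption of Lemma~\ref{thm:wzip-hash}. Markov's inequality then gives the third case of $v_i$. This needs only pairwise collision bounds, not independence.

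The main obstacle is the careful check of the early-stop cases, together with conditioning on the random parse. Which positions are starts depends on the hash, but the event we bound is defined for every $i$ regardless of the parse, so the union bound over $\mathcal Q(P)$ is legitimate. The node count $n\sum_gS_g$ follows because there are at most $n$ starts and each group inspects at most $S_g$ nodes. The memory bound follows from the circular chain layout with $V_g$ entries and $K_g$ heads per group, plus a history and lookahead of $O(W)$.
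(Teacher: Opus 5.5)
Your proposal is correct and follows the paper's proof essentially step by step. You reduce to Theorem~\ref{thm:block-transfer}, place the retained witness at bucket rank $1+A_i+Z_i$, dispose of early stops at verified matches of length $L_{g_i}\ge r_i$, and apply Markov's inequality to $Z_i$ over the fixed set $\mathcal Q(P)$. One wording fix: under a nonmonotone schedule an early-stop match need not be admissible at length $r_i$, but it is admissible at its own length $L_{g_i}\ge r_i$, and that is all the longest-match selection needs.
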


\noindent\emph{Proof.}
The covering group retains the witness and tests its required length.
Let $Z_i$ count the closer unequal seeds whose hash collides with the
query's. The witness has bucket rank $1+A_i+Z_i$, and
$\mathbb E_{\mathrm{hash}}Z_i\le(d_i^*-1-A_i)/K_{g_i}$. If the group stops
earlier at a verified admissible match of length $L_{g_i}$, that match
already reaches at least $r_i$; otherwise a witness within the inspection
budget is visited, and exact verification reaches $r_i$. Thus, when
$S_{g_i}>A_i$, a failure to reach the end of the reference copy requires
$Z_i\ge S_{g_i}-A_i$, and Markov's inequality gives the third branch of
$v_i$; the other branches follow from the rank being at most $d_i^*$ and
from the trivial bound on probabilities. No independence among
collisions is needed, and since every counted failure lies in the fixed
set $\mathcal Q(P)$, none is needed at the starts the parser selects
either. Theorem~\ref{thm:block-transfer} completes the proof; the visit
and storage bounds follow from the caps and the retained arrays.
\hfill$\square$

If $\mathcal G_{\rm short}$ consists of the dedicated fixed-length groups,
their first-match rule sharpens the expected total number of visits to
\[
 n\sum_{g\in\mathcal G_{\rm short}}
       \min\{S_g,1+V_g/K_g\}
 +n\sum_{g\notin\mathcal G_{\rm short}}S_g.
\]
This follows by summing the fixed-position work bound of
Section~\ref{sec:hash-search} over all positions, of which the parse
queries only a subset, so starts that depend on the hash need no
independence assumption. The worst-case bound by the caps remains valid.

For comparison, with $I_g\eqdef\{1,\ldots,n-k_g+1\}$,
$N_{g,i}\eqdef\min\{V_g,i-1\}$, and $M_{g,i}$ the number of retained equal
seeds, the bucket-size bound gives the coarser alternative
\[
 \mathbb E_{\mathrm{hash}}F\le U\eqdef\sum_g\sum_{i\in I_g}
 \min\left\{1,\frac{M_{g,i}+(N_{g,i}-M_{g,i})/K_g}{S_g+1}\right\}.
\]
Indeed, a failure requires a covering bucket to overflow; apply
\eqref{eq:wzip-hash} at the integer threshold $S_g+1$. The rank bound of
Corollary~\ref{cor:capped-search-transfer} instead counts only the
candidates ahead of a witness. On a constant string it gives zero with a
cap of one, even when the bucket-size bound is linear in $n$.

Suppose that, outside $E$ exceptional qualifying positions,
$d_i^*\le D_{g_i}$ and $A_i\le S_{g_i}/2$. Then
\begin{equation}
 \frac{\mathbb E_{\mathrm{hash}}F}{n}
 \le \min\!\left\{1,\frac En+
          \max_g\frac{2D_g}{K_gS_g}\right\}.
 \label{eq:witness-rank-local}
\end{equation}
These are explicit conditions on recurrence and partial seeds; we do not
claim that polynomial substring complexity implies them. For one
covering index with $V=K=W$, $S=\lceil W^{1/s}\rceil$, and fixed $s>1$,
the condition $A_i\le S/2$ at every qualifying position makes the added
rate $O(BW^{-1/s})$. Inputs that also satisfy the hypotheses of
Corollary~\ref{cor:finite-polynomial}, with $W\le n$, therefore satisfy
$(\mathbb E_{\mathrm{hash}} C_B^{\mathrm{heur}}-\Gamma_n)/n=O(BW^{-1/s})$
with $S=o(V)$, $O(nW^{1/s})$ visits, and $O(W)$ entries. The expectation
is for each finite input, not for one fixed choice of hashes.

Periodic inputs need no condition on partial seeds. On a globally
$p$-periodic input with $p\le W$ and all covering caps at least $p$, the
nearest witness has distance at most $p$: use the reference distance
$d_j$ if $d_j<p$, and distance $p$ otherwise. Hence $F=0$ for every hash
choice. Counting directly within periodic regions gives a stronger bound
that also allows the pattern to change.

\begin{corollary}[Piecewise-periodic inputs]
\label{cor:piecewise-periodic}
Use the policy and code-cost assumptions of
Corollary~\ref{cor:capped-search-transfer}, with $W\ge2$.
Suppose $x_1^n$ is a concatenation of $J$ nonempty intervals $I_j$,
each with a period $p_j$ satisfying
$1\le p_j\le\min\{|I_j|,p_*\}$, where $p_*\le W$.
Put $\Pi\eqdef\sum_{j=1}^J p_j$ and require every covering index cap to
satisfy $S_g\ge p_*$. Then, for every hash choice,
\begin{equation}
 C_B^{\mathrm{heur}}(x_1^n)-\Gamma_n
 \le \frac{K_W}{W}n+b\Pi+J\{b(t_B-1)+2K_W\}.
 \label{eq:piecewise-periodic}
\end{equation}
In particular, bounded literal cost and $t_B=O(B)$ give payload cost
$O(nB/W+\Pi+JB)$. The segmentation and the periods serve only the
analysis; neither the encoder nor the decoder needs them.
\end{corollary}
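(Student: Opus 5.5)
Rather than invoke Theorem~\ref{thm:block-transfer} with the block reference parse, I will count the cost directly interval by interval, using a reference parse adapted to the segmentation. Fix interval $I_j=[s_j,e_j]$ with period $p_j$. Its first $\min\{p_j,|I_j|\}=p_j$ symbols will be charged as literals, giving $b\Pi$ in total. Every later position $i\in I_j$ with $i-p_j\ge s_j$ has a copy at distance $p_j\le p_*\le W$ extending to $e_j$. The reference object is then one ``copy'' per interval, $[s_j+p_j,e_j]$ at distance $p_j$, plus $p_j$ literals. This copy may be longer than $W$, which is why I redo the accounting rather than quote \eqref{eq:block-transfer}.

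The key step is to show that no failure occurs at any conceptual start inside such a copy whose remainder is admissible. Let $i$ be a conceptual start with remaining length $r=e_j-i+1$, and suppose first that $\tau(p_j)\le r\le W$. Then $d=p_j$ is an admissible witness of length $r$, so the nearest witness has $d_i^*\le p_j\le p_*$. The covering group $g(r)$ retains all positions within distance $V_{g(r)}\ge W(r)\ge p_j$. Its bucket, visited nearest first, therefore reaches the witness within at most $d_i^*\le p_*\le S_g$ inspected nodes, whatever the hash collisions. Either it verifies the witness and extends to length at least $r$, or it stops earlier at a verified match of length $L_g\ge r$. If $r>W$, the full-window group $k_g=t_B$, $L_g=W$ finds the distance-$p_j$ witness in the same way, and the chosen phrase reaches length $W$. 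Hence, deterministically for every hash choice, each start either reaches the end of the interval's copy or has length at least $W$. (I need $t_B\le r$ in that case, which holds since $r>W\ge t_B$.)

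Next I sum the costs. Phrases of length at least $W$ cost at most $K_W/W$ per symbol, contributing $\frac{K_W}{W}n$. Each other conceptual phrase starts either among the first $p_j$ symbols of an interval, or inside the copy region. In the first case, literal fallback bounds the phrases starting there by $b$ per covered symbol, except possibly the one crossing into the copy region, which is charged $K_W$. In the second case, the start has $r<\tau(p_j)\le t_B$; these starts lie in the final $t_B-1$ symbols of $I_j$ and cover disjoint intervals, costing at most $b(t_B-1)$. Otherwise the start has $r\ge\tau(p_j)$ and, by the key step, reaches $e_j$, so it is the last phrase starting in the copy region, costing $K_W$. A phrase that started in an earlier interval and crosses into $I_j$ is charged only at its start. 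This gives per interval at most $b p_j+b(t_B-1)+2K_W$, hence \eqref{eq:piecewise-periodic}. The payload statement follows since $K_W=O(B)$.

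The main obstacle is the bookkeeping for phrases crossing interval boundaries: a long phrase from $I_{j-1}$ may swallow the start of $I_j$, and starts within the first $p_j$ symbols need not be literals. Charging each phrase to its start and using literal fallback (cost $\le b\cdot$length) handles this. One must also check that a start with $r\ge\tau(p_j)$ but a phrase ending past $e_j$ is still counted once; the two $K_W$ terms per interval budget exactly one crossing phrase and one terminal phrase.
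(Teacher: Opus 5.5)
Your proof is correct and follows the paper's argument. Both count directly per interval, using the distance-$p_j$ witness, whose bucket rank is at most $p_j\le S_g$ for every hash choice, and both charge $bp_j+K_W$ to the prefix and $b(t_B-1)+K_W$ to the tail; your threshold $\tau(p_j)\le t_B$ is a harmless refinement of the paper's $t_B$. State explicitly that the last start in $I_j$, in either case, is the one charged $K_W$: a tail start with $r<\tau(p_j)$ may cross $e_j$, and your $b(t_B-1)$ sentence does not cover that phrase.
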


\noindent\emph{Proof.}
Every conceptual phrase of length at least $W$ costs at most $K_W/W$ per
symbol, which gives the first term. For an interval $I_j=[a_j,e_j]$,
consider a selected start $i\ge a_j+p_j$ with at least $t_B$ interval
symbols remaining, and set $r\eqdef\min\{W,e_j-i+1\}$. Periodicity
supplies a length-$r$ copy at distance $p_j$, and since $r\ge t_B$, its
full window makes that distance admissible. A covering index retains the
witness, whose bucket rank is at most $p_j\le S_g$ whatever the
collisions. The search either visits this witness or stops earlier at a
verified admissible match of length $L_g\ge r$; either way it finds a
match of length at least $r$. Consequently, any selected phrase shorter
than $W$ at such a start reaches the end of the interval.

Assign each remaining short phrase to the interval containing its start.
Those contained in the first $p_j$ positions together cost at most
$bp_j$ by literal fallback, and at most one short phrase starts in that
prefix and crosses its end, costing at most $K_W$. After the prefix, a
short phrase that does not reach the end of the interval must start
within its final $t_B-1$ positions. Such contained phrases cost at most
$b(t_B-1)$ in total, and at most one further short phrase reaches or
crosses the end of the interval, costing at most $K_W$. Overlapping
prefix and terminal regions only increase this upper bound, and a phrase
entering from a preceding interval only removes starts. Summing
proves~\eqref{eq:piecewise-periodic}. \hfill$\square$

For one globally $p$-periodic input, this yields
\[
 \frac{C_B^{\mathrm{heur}}-\Gamma_n}{n}
 =O\!\left(\frac{B}{W}+\frac{p+B}{n}\right).
\]
The period is paid for once, so the large-file rate of the payload at
fixed $W$ is $O(B/W)$. With $W$ comparable to $n$, the piecewise bound is
$O(\Pi+(J+1)\log n)$ bits, apart from headers. One full-history index
with seeds of length one and $S=p_*$ uses at most $np_*$ visits and
$O(W)$ entries, even with one bucket, so $p_*=o(W)$ permits a sublinear
inspection cap. For a fixed cap, the same output satisfies the bound for
every qualifying segmentation, so the bound may be minimized over them
without transmitting one. The argument uses ordinary copies and supports
lengths through $W$; the run format of Section~\ref{sec:run-escape}, with
its logarithmic cap, supports different lengths.

These bounds count candidate visits. Extending a match symbol by symbol
can cost $O(W)$ per candidate, and hashing and index maintenance also
cost time. A single seed of length $k=t_B$ covers only suffixes of length
at least $t_B$; the finer bound with $\mathcal D(P)$ requires the coverage
specified above.

\subsection{Periodic inputs with Huffman-coded fields}
\label{sec:periodic}
On a periodic input, the parse consists almost entirely of matches of
length $W$ at one distance, so Huffman coding its fields can make each
token cost about three bits. The next theorem makes this precise for
complete codes.

\begin{theorem}[Complete periodic Huffman coding under capped search]
\label{thm:periodic-huffman}
Use the policy of Corollary~\ref{cor:piecewise-periodic} on a globally
$p$-periodic input, with $p\le W$ and a match cap of $W$.
Put $S_{\max}\eqdef\max_gS_g$ and $D_*\eqdef\min\{W,pS_{\max}\}$.
For $W$ large enough that $K_W<bW$, fit the four codes of
Proposition~\ref{prop:field-entropy} to the emitted parse, keeping its
literal fallback, and transmit the tables and framing
of~\eqref{eq:huffman-table-cost}. Then
\begin{equation}
 \limsup_{n\to\infty}\frac{C_{\rm H}(x_1^n)}n
 \le\frac{3+\lceil\log D_*\rceil}{W}.
 \label{eq:periodic-entropy-payload}
\end{equation}
If ties among visited longest matches are resolved by smallest distance,
the sharper complete-code bound is
\begin{equation}
 \limsup_{n\to\infty}\frac{C_{\rm H}(x_1^n)}n\le\frac3W,
 \label{eq:periodic-nearest-complete}
\end{equation}
without an upper bound on the inspection caps. With this tie rule,
$B=\log W$, and $t_B=O(B)$, the finite-input bound for $n\ge W$ is
\begin{equation}
 C_{\rm H}(x_1^n)\le\frac{3n}{W}
                +O((p+B)B+\log(n+1)),
 \label{eq:periodic-nearest-finite}
\end{equation}
with fixed alphabet and code-cost constants.
\end{theorem}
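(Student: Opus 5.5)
The plan is to first pin down the shape of the emitted parse and then feed its field statistics into Proposition~\ref{prop:field-entropy} and the table bound~\eqref{eq:huffman-table-cost}.

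\textbf{Structure of the parse.} Apply the argument of Corollary~\ref{cor:piecewise-periodic} with one interval, $J=1$, $p_1=p$. After the first $p$ positions, every selected start with at least $W$ symbols remaining has a length-$W$ copy at distance $p$. The covering index with $L_g=W$ either visits this witness within its cap or stops earlier at a verified length-$W$ match. Since the match cap is $W$, the chosen phrase has length exactly $W$.

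Because $K_W<bW$, such a phrase is coded as a token rather than by literal fallback. Only $O(p+t_B)$ symbols near the beginning and end of the input lie outside these full-length matches, so they contribute $O(p+B)$ literals and $O(1)$ short matches. Thus $M=n/W+O(1)$, $N=O(p+B)$, and the length stream $L$ is the constant $W$ except for $O(1)$ entries.

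\textbf{Payload and tables.} For the length, literal-count, and literal streams, a constant-dominated stream costs about one bit per symbol under Huffman coding. The precise bound comes from the rounding form~\eqref{eq:four-field-huffman}, which gives $\widehat H_0\to0$ plus one bit per symbol. For $R$ the counts are $0$ except $O(1)$ of them, so $R$ costs $M+o(M)$ bits. For $L$ the cost is likewise $M+o(M)$.

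For the offset stream $D$ there are two cases.
\begin{itemize}
\item \emph{Without a tie rule.} A length-$W$ match found by the search lies at a bucket rank at most $S_{\max}$. Every candidate in a periodic input at distance a multiple of $p$ is a true match, and the witness chain over multiples of $p$ bounds the chosen distance by $D_*=\min\{W,pS_{\max}\}$. I then code $D$ by a fixed-width competitor of $\lceil\log D_*\rceil$ bits, using Huffman optimality as in~\eqref{eq:four-field-support}. Together with the $R$ and $L$ bits and one flag, this gives $(3+\lceil\log D_*\rceil)M$.
\item \emph{With the smallest-distance tie rule.} Every full-length match is at distance $p$, or is the nearest visited, which equals $p$ once the witness at distance $p$ is visited. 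Then $D$ is constant except for $O(1)$ entries and costs $M+o(M)$ bits, so the total payload is $3M+o(M)$ bits.
\end{itemize}

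The tables contribute $\sum_sK_s(2+\lceil\log U_s\rceil)$ with $K_V\le|\Sigma|$ and $K_R,K_L,K_D=O(1)$, plus $K_D\le p$ in the worst case. This is $O(\log n+\log W)$, which is negligible as $n\to\infty$. Dividing by $n$ with $M\approx n/W$ gives~\eqref{eq:periodic-entropy-payload} and~\eqref{eq:periodic-nearest-complete}.

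\textbf{Finite-input bound.} For~\eqref{eq:periodic-nearest-finite}, the count is explicit:
\begin{itemize}
\item literals: $O(p+B)$ symbols at $O(1)$ bits each;
\item the $O(1)$ exceptional short matches: $O(B)$ bits each;
\item distinct offsets among the exceptional tokens: each costs a label of $\lceil\log W\rceil=B$ bits, and the prefix exceptions number up to $O(p+B)$, which gives the $(p+B)B$ term;
\item headers: $O(\log(n+1))$.
\end{itemize}

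\textbf{Main obstacle.} The hard step is controlling the offsets when ties are not broken by distance. This requires showing rigorously that some visited length-$W$ candidate lies within $pS_{\max}$, and that the search cannot return a farther candidate. A far candidate could be a verified full match that was visited before the nearest one was reached. I would argue that every periodic-match position is a genuine seed match inserted in rank order. The first $S_{\max}$ nodes then include a match at distance at most $pS_{\max}$, and any returned length-$W$ match is among the visited nodes, so it too lies within $pS_{\max}$.
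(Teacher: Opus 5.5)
Your route is the paper's: the parse structure from Corollary~\ref{cor:piecewise-periodic}, $K_W<bW$ to exclude literal fallback on full-length phrases, per-stream Huffman competitors, and $D_*$ from the genuine seed matches at distances $p,2p,\ldots,S_gp$ filling the first $S_g$ bucket ranks. However, the step behind~\eqref{eq:periodic-nearest-complete} fails as written.

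You assert that the nearest visited full-length match is at distance $p$. The theorem only assumes that $p$ is \emph{a} period. On \texttt{abab}\ldots{} with $p=4$, or on a constant input with $p=2$, full-length matches exist at distance $2$ or $1$, and those are what the full-$W$ early stop and smallest-distance ties return. More importantly, nothing in your argument shows that the nearest full-length distance is the same at every start, and that is exactly what makes the offset stream nearly constant.

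The missing idea is the primitive period $p_0\mid p$ of the infinite extension of the period word. Since $W\ge p$, a length-$W$ equality at shift $d$ covers every residue modulo $p$. So $d$ preserves the cyclic period word and is a multiple of $p_0$. Hence, at every start admitting a full-length match:
\begin{enumerate}
\item distance $p_0$ is available, with bucket rank at most $p_0\le p\le S_g$;
\item no closer candidate can be a full-length match, so an early stop cannot preempt it;
\item smallest-distance ties select $p_0$ for every full-length token.
\end{enumerate}
Reducing first to the smallest period would not avoid this. You would still need the residue argument to rule out full-length matches at shorter distances.

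There are three smaller corrections.

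First, without the tie rule, $K_D\le p$ is false. Offsets can be any multiples of $p_0$ up to $D_*$; a one-bucket index with cap $W$ on a constant input can cycle through $1,\ldots,W$. So the tables cost $O_W(1)+O(\log(n+1))$, not $O(\log n+\log W)$. This still suffices after dividing by $n$.

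Second, for~\eqref{eq:periodic-nearest-finite}, the number of short phrases is $Q=p+t_B+1=O(p+B)$, not $O(1)$.

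Third, the rounding form~\eqref{eq:four-field-huffman} is too weak for that bound. A stream of about $n/W$ symbols with $Q$ exceptions can have total empirical entropy of order $Q\log(n/(WQ))$. That grows like $(p+B)\log n$, which $O((p+B)B+\log(n+1))$ does not allow. Use instead the explicit competitor that gives each dominant value a one-bit word and codes each exception by a flag and an index of at most $B$ bits. With $M\le n/W+Q$, this gives a payload of $3n/W+O(QB)$. The labeled tables cost $O(QB)$ plus $O(\log(n+1))$ framing.
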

\noindent\emph{Proof.}
Every visited candidate has distance at most $D_*$. To see this, fix a
queried group $g$. If its available history is shorter than $pS_g$, all
candidates have smaller distance. Otherwise the $S_g$ positions at
distances $p,2p,\ldots,S_gp$ have the same seed as the query, overlapping
seeds included. With complete retention and nearest-first traversal, the
first $S_g$ bucket entries lie within that range whatever the
collisions, and early stopping only shortens this prefix. Distances also
never exceed $W$, and groups without a complete query seed contribute no
candidates.

The proof of Corollary~\ref{cor:piecewise-periodic} bounds the number of
conceptual phrases shorter than $W$ by $Q\eqdef p+t_B+1$, and the number
of emitted literals by
\[
 N\le p+t_B-1+2K_W/b.
\]
The condition $K_W<bW$ prevents literal fallback on a full-length phrase.
Thus all but finitely many matches have length $W$, all but finitely
many literal counts are zero, and $M/n\to1/W$. For each of these two
fields, give the dominant value a one-bit word and code the finitely
many exceptional values under the other branch; together they cost at
most $2M+O_W(1)$. A fixed code for offsets costs at most
$M\max\{1,\lceil\log D_*\rceil\}$, and literals cost $O_W(1)$. All four
supports and their labels are bounded at fixed $W$, uniformly in $n$, so
the tables and framing of~\eqref{eq:huffman-table-cost} cost
$O_W(1)+O(\log(n+1))$ bits. Huffman optimality and division by $n$
prove~\eqref{eq:periodic-entropy-payload} for the complete code.

For nearest ties, let $p_0$ be the primitive period of the infinite
extension of the period word, so $p_0\mid p$. A full-$W$ equality, with
$W\ge p$, covers every residue modulo $p$; its shift therefore preserves
the cyclic period word and is a multiple of $p_0$. Whenever a full-$W$
match can occur, distance $p_0$ is available. A group covering length
$W$ has $L_g=W$, and the witness has bucket rank at most
$p_0\le p\le S_g$. No closer candidate is a full-$W$ match, since its
distance would be a smaller positive multiple of $p_0$, so the witness is
visited even when the search stops at the first full match. Nearest ties
select $p_0$ for every full-$W$ match, so the offset field also has one
dominant symbol and finitely many exceptions. Giving each dominant
symbol one bit proves~\eqref{eq:periodic-nearest-complete}.

For the finite bound, $Q,N=O(p+B)$ and $M\le n/W+Q$. Each of the three
integer fields has at most $Q$ occurrences outside its dominant symbol,
and prefixing an index for each exceptional value gives a total integer
payload of $3M+O(Q\log(Q+1))$. The literal payload is $O(N)$. The tree
labels cost $O((Q+1)B)$, since $p\le W$ and all literal counts are at most
$N=O(p+B)$. Adding the framing and using $\log(Q+1)=O(B)$
proves~\eqref{eq:periodic-nearest-finite}.
\hfill$\square$

The decoder receives no period and no special token for periodic
patterns. To see the gain, compare two codes for the same parse with
nearest-distance ties. For $W=2^B$, let $C_{\rm fix}$ code the parse with
its fixed-width distance fields and a match cost of $B+a(W)$ for a
length-$W$ copy, where $0\le a(W)=O(B)$, and with complete header cost
$o(n)$ at fixed $W$. The proof gives $M_W/n\to1/W$ for the number $M_W$ of
length-$W$ copies, with only $O_W(1)$ other tokens. Consequently,
\[
 \lim_{n\to\infty}\frac{C_{\rm fix}(x_1^n)}n
   =\frac{B+a(W)}W=\Theta(B/W),
 \qquad
 \limsup_{n\to\infty}\frac{C_{\rm H}(x_1^n)}n\le\frac3W.
\]
For a fixed periodic source, Huffman coding thus saves a logarithmic
factor as $W\to\infty$ after the large-file limit. The comparison
concerns two codes for one parse; the fixed-field format of the original
LZ77 paper is compared in Theorem~\ref{thm:fixed-field-separation}. The
gain requires the dominant values to have bounded extra-bit cost: a raw
extra-bit field that grows with $W$ keeps a logarithmic cost even when
its range code is predictable.

With the optional full-$W$ stop, each full-length covering group
inspects at most $p_0$ nodes at a start that admits a full-$W$ match; on
a constant input this is one visit. For bounded $p$, the finite
remainder is $O(B^2+\log(n+1))$, which must still be counted when $W$
grows with $n$. The rate amortizes one set of fitted tables over the
whole input; refreshing the tables in successive blocks pays for them
repeatedly. Without nearest ties, a one-bucket index with cap $W$ on a
constant input can cycle through offsets $1,\ldots,W$, keeping offset
entropy $\log W$ and a rate of order $(\log W)/W$, so the tie rule
controls coding entropy as well as match selection.

The constant $3$ reflects symbolwise Huffman coding, not a limitation of
the parse. For $n\ge W$ and nearest ties, the integer streams have
default values $0,W,p_0$ and at most $Q$ exceptions each. Enumerative
coding~\cite{WLZCover1973} describes a length-$m$ stream over $U$ labels
with $s$ exceptions in
\[
 O(\log U+\log(s+1)+1)
 +\left\lceil\log\binom ms\right\rceil
 +s\left\lceil\log\max\{1,U-1\}\right\rceil
\]
bits: send the default value and the number of exceptions, rank the
exceptional positions, and send their values; an empty stream or a
singleton alphabet costs zero. Send $n,N,M,W$ self-delimitingly and the
$N$ literals at fixed width; the integer universes are $N+1$,
$\min\{n,W\}$, and $W$. At fixed $W,p$, the universes and the numbers of
exceptions are bounded, so the complete length is
$O_{W,p}(\log(n+1))$, with all default values, including $p_0$,
included. This changes only the representation of the same parse, and the
decoder again receives no period.

\subsection{Finite-input bounds and their optimality}
\label{sec:minimax}
The rates of Subsection~\ref{sec:rates} take the input-length limit
first. The same proof gives bounds that hold uniformly for finite inputs,
once the last reference block is counted by its substring complexity
rather than by its length. These bounds are optimal: no prefix code does
better on the worst input of the class, and WLZ attains this optimum for
every window schedule in a broad family.

\begin{corollary}[Finite-input complexity bound]
\label{cor:finite-polynomial}
Use the exact greedy family of Theorem~\ref{thm:block-transfer}, with
$t_B=O(B)$, $W=2^B\le n$, and supported lengths through at least $W$.
Suppose $D_m(x_1^n)\le A m^s$ for all $1\le m\le n$, with fixed
$A\ge1$ and $s\ge1$. Uniformly for sufficiently large $W\le n$,
\begin{equation}
 \frac{C_B(x_1^n)-\Gamma_n}{n}\le
 \begin{cases}
 O(BW^{-1/s}),&s>1,\\
 O(B\log(1+W)/W),&s=1.
 \end{cases}
 \label{eq:finite-polynomial}
\end{equation}
Consequently, choosing the largest power of two at most $n$ gives
$C_B/n\le \Gamma_n/n+O((\log n)n^{-1/s})$ for $s>1$, and
$C_B/n\le \Gamma_n/n+O((\log n)^2/n)$ for $s=1$.
\end{corollary}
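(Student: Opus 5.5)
Start from the block-transfer inequality \eqref{eq:block-transfer}: it holds for every finite $n$, so dividing by $n$ gives
\[
 \frac{C_B(x_1^n)-\Gamma_n}{n}\le\frac{K_W}{W}+\frac{[K_W+b(t_B-1)]\,Z_W(x_1^n)}{n}.
\]
Here $K_W=O(B)$ because $a_1\log(W+1)=O(B)$, and $b(t_B-1)=O(B)$ by hypothesis. Since $K_W/W$ is already within both target orders, it remains to show $Z_W(x_1^n)=O(nW^{-1/s})$ for $s>1$ and $Z_W(x_1^n)=O(n\log(1+W)/W)$ for $s=1$, uniformly in $n\ge W$.

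The reference parse consists of $\lfloor n/W\rfloor$ full blocks and one final block $y'$ of length $<W$. Each block is a substring of $x_1^n$, so $D_m(y)\le D_m(x_1^n)\le Am^s$ for every block $y$. For a full block, the proof of Corollary~\ref{cor:polynomial-convergence} already gives $z(y)=O(W^{1-1/s})$ for $s>1$ and $z(y)=O(\log(1+W))$ for $s=1$. That argument used only the envelope on the block, so it gives exactly these counts once the envelope is known on each block.

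The new work, and the main obstacle, is the final block: previously it was charged up to $W$ phrases, which is $O(1)$ per symbol only after $n\to\infty$. I would instead bound it by its own substring complexity.

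For $s=1$, $\delta(y')\le A$, and \eqref{eq:known-delta} gives $z(y')=O(\log(1+|y'|))=O(\log(1+W))$.

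For $s>1$, I would rerun the dyadic argument with $u$ a power of two within a factor two of $\min\{W^{1/s},|y'|/2\}$. Phrases of length at most $u$ number $O(u^{s-1}+\log u)$, and longer phrases number at most $|y'|/u+1$. If $|y'|\ge 2W^{1/s}$, this totals $O(W^{1-1/s})$. If $|y'|$ is smaller, the trivial bound $z(y')\le|y'|=O(W^{1/s})$ applies. For $s\ge2$ that is $O(W^{1-1/s})$; for $1<s<2$ it is still at most $O(W^{1/s})\le O(W)$. Since $n\ge W$, the final block then adds $O(W/n)$ phrases per symbol, but this is not yet within the target rate. So I would use the cleaner route: the final block's phrase count is at most $O(|y'|^{1-1/s}+\log|y'|)$ by the same dyadic computation with $u\approx|y'|^{1/s}$, hence $O(W^{1-1/s})$.

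Summing over blocks, using $\lfloor n/W\rfloor+1\le 2n/W$ for $n\ge W$, gives $Z_W=O(nW^{-1/s})$ and $Z_W=O(n\log(1+W)/W)$ respectively. Multiplying by $O(B)$ yields \eqref{eq:finite-polynomial}. The dyadic step needs $u\le W/2$ and $\log u=O(u^{s-1})$, which is why the statement holds only for $W$ sufficiently large. The constants depend only on $A$, $s$ and the code-cost constants, not on $n$.

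For the consequence, take $W=2^{\lfloor\log n\rfloor}$, so that $n/2<W\le n$ and $B=\Theta(\log n)$. Then $W^{-1/s}=O(n^{-1/s})$ and $\log(1+W)/W=O(\log n/n)$. Substituting gives $O((\log n)n^{-1/s})$ for $s>1$ and $O((\log n)^2/n)$ for $s=1$, plus $\Gamma_n/n$. Finally, check that the theorem's hypothesis $t_B\le W$ holds, which follows from $t_B=O(B)$ for large $W$.
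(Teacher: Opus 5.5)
Your proposal is correct and follows essentially the paper's proof. You divide \eqref{eq:block-transfer} by $n$ and bound every reference block, including the final short one, by its own substring complexity: $c|y|^{1-1/s}$ via the dyadic argument for $s>1$, and $c\log(1+|y|)$ via \eqref{eq:known-delta} for $s=1$, absorbing the final block through $W\le n$. The abandoned detour (choosing $u$ from $\min\{W^{1/s},|y'|/2\}$, then the trivial bound $z(y')\le|y'|$) can be deleted. The direct bound $z(y')=O(|y'|^{1-1/s})$ is all you need, with $z(y')\le|y'|=O(1)$ for the boundedly many short lengths where $u\le|y'|/2$ fails.
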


\noindent\emph{Proof.}
Write $n=aW+r$, $0\le r<W$. For $s>1$, the dyadic argument in
Corollary~\ref{cor:polynomial-convergence} bounds the ordinary phrase
count of every length-$m$ reference block by $c m^{1-1/s}$, since every
such block inherits the assumed substring bound. Hence
\[
 Z_W\le c\{aW^{1-1/s}+\mathbf1_{\{r>0\}}r^{1-1/s}\}.
\]
Substitution in~\eqref{eq:block-transfer}, whose coefficients are
$O(B)$, gives $O(BW^{-1/s}+BW^{1-1/s}/n)$. For $s=1$, use
$z(y)\le c\log(1+|y|)$ from~\eqref{eq:known-delta} to obtain
$O(B\log(1+W)/W+B\log(1+W)/n)$. In each case the second term is at most
the first because $W\le n$. \hfill$\square$

The encoder starts with an empty history and compresses from the first
symbol; no raw prefix of length $W$ is needed. Within this bound, the
largest allowed window gives the strongest guarantee, so there is no
intermediate optimal window. The constants are uniform in $n$ and $W$ but
depend on $A$, $s$, and the code. The header cost $\Gamma_n$ must be
controlled along the chosen $W(n)$; a bound $o(n)$ for each fixed $W$
does not suffice. The next theorem shows that these orders are optimal,
for complete code lengths.

\begin{theorem}[Finite-input minimax order]
\label{thm:finite-minimax}
Fix $A\ge2$ and $s\ge1$, independently of $n$, and let
\[
 \mathcal F_{n,s,A}\eqdef\{x\in\{0,1\}^n:
       D_m(x)\le A m^s\text{ for all }1\le m\le n\}.
\]
Let $C$ range over injective maps from $\{0,1\}^n$ to finite binary
strings whose images are prefix-free, with $n$ known to the decoder.
Then, as $n\to\infty$,
\begin{equation}
 L^*_{n,s,A}\eqdef\inf_C\max_{x\in\mathcal F_{n,s,A}}|C(x)|
 =\begin{cases}
   \Theta(n^{1-1/s}\log n),&s>1,\\
   \Theta((\log n)^2),&s=1.
  \end{cases}
 \label{eq:finite-minimax}
\end{equation}
The constants may depend on $A,s$. A single exact greedy WLZ format,
without knowledge of $A$ or $s$, attains both upper orders.
\end{theorem}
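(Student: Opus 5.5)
The proof splits into an upper bound, obtained from Corollary~\ref{cor:finite-polynomial}, and a lower bound by counting the class $\mathcal F_{n,s,A}$.

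\emph{Upper bound.} I will fix one exact greedy WLZ format as follows. The window is $W=2^B$, the largest power of two at most $n$. The threshold is $t_B=O(B)$, every length through $W$ is supported, and the code has logarithmic-cost lengths. Since $n$ is known to the decoder, $W$ and the schedule are determined by $n$, so $\Gamma_n=O(\log n)$ at most: framing plus end-of-input handling. Every $x\in\mathcal F_{n,s,A}$ satisfies $D_m(x)\le Am^s$ for all $m\le n$, so Corollary~\ref{cor:finite-polynomial} applies uniformly over the class. Multiplying it by $n$ gives $O(n\,B\,W^{-1/s})=O(n^{1-1/s}\log n)$ for $s>1$, because $W\ge n/2$. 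For $s=1$ it gives $O(B\log(1+W))=O((\log n)^2)$. The $\Gamma_n$ term is dominated in both cases. The format uses neither $A$ nor $s$, which gives the final claim. The map is injective with prefix-free image because the format is a complete prefix code (Section~\ref{sec:background}).

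\emph{Lower bound.} Any prefix-free injective code has $\max_{x\in\mathcal F}|C(x)|\ge\log|\mathcal F_{n,s,A}|$, so it suffices to exhibit many strings in the class. Both constructions consist of isolated ones in a sea of zeros. A length-$m$ substring containing at most one $1$ is one of at most $m+1\le 2m\le Am^s$ words, so only windows containing several ones need control.

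For $s>1$, I take $k\eqdef\lfloor c\,n^{1-1/s}\rfloor$ ones with pairwise gaps at least $u\eqdef\lceil n^{1/s}\rceil$, placed freely otherwise. For $m<u$ we have $D_m\le m+1$. For $m\ge u$, every length-$m$ substring is determined by its start position relative to the nearest $1$, together with which $1$ that is, so $D_m\le (k+1)m+1$. The constant $c$ is chosen so that this is at most $Am^s$, using $m^{s-1}\ge u^{s-1}\ge n^{1-1/s}$. The number of such placements is at least $\binom{n/2}{k}$ after reserving spacing, which gives $\log|\mathcal F|\ge k\log(n/(2k))=\Omega(n^{1-1/s}\log n)$.

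For $s=1$, I use gaps $a_1<a_2<\cdots$ growing geometrically, with $a_i$ ranging over a dyadic interval $[\lambda^{i},2\lambda^{i})$ for a large fixed ratio $\lambda$. Consider the length-$m$ substrings that contain two or more ones. Every gap strictly inside such a substring is smaller than $m$, and the geometric growth lets me bound the number of these substrings by $O(\sum_{a_i<m}a_i)+O(m)=O(m)$. About $\log_\lambda n$ gaps fit, and gap $i$ has about $\lambda^{i}$ choices, so $\log|\mathcal F|\gtrsim\sum_i i\log\lambda=\Omega((\log n)^2)$.

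\emph{Main obstacle.} The hard step is the $s=1$ construction with the exact constant $A\ge2$ rather than an unspecified $O(m)$. I would first try a large ratio $\lambda$, which makes the sum over short gaps at most a small multiple of $m$. The multi-one substrings must then fit in the slack $2m-(m+1)=m-1$ left by $D_m\le m+1$; I expect to spend care on the gap-boundary counting here, possibly by placing a single $1$ per gap so that a substring is determined by its first $1$ and its offset.
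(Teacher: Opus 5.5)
Your upper bound is the paper's argument: Corollary~\ref{cor:finite-polynomial} with $n/2<W\le n$. The paper does fix a concrete format, namely literals $0a$, a flag, the Elias gamma code of $\ell-1$, and a distance field of width $\log W_B(\ell)$ with $W_B(2)=1$ and $W_B(\ell)=W$ for $\ell\ge3$. This way prefix-freeness is checked rather than cited from Section~\ref{sec:background}, and you should do the same. Your $s>1$ lower bound is a harmless variant of the paper's construction. Your count $(k+1)m+1$ works with $c=1/2$, but the paper's direct bound $D_m\le n\le d^s\le m^s$ for $m\ge d$ is simpler. For $s=1$ the paper builds nothing: it cites the binary families of Kociumaka, Navarro, and Prezza, which satisfy $D_m\le2m$ and have log-cardinality $\Omega((\log n)^2)$. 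Your geometric-gap family is a viable self-contained alternative, but as written its decisive inequality is not proved.

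The gap is the step you flag yourself, and the estimate you give would not close it. Because $A=2$ is allowed, the factors containing two or more ones must number at most $m-1$. Your bound is of the form $O(\sum_{a_i<m}a_i)+O(m)$, and even the bare sum $\sum_{a_i<m}a_i$ cannot certify this. Let $a_J$ be the largest gap below $m$ and take $m=a_J+1$: the top gap alone contributes $m-1$. A large $\lambda$ makes the gaps below the top one small relative to $m$, but not the top one itself.

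The missing idea is to charge each multi-one factor to the gap \emph{preceding} its first one. Parametrize such a factor by its first one $p_j$ and its offset $o$. The previous one lies outside the factor, so $o<a_j$. The next one lies inside, so $a_{j+1}\le m-1-o$. Hence there are at most $\sum_{j:\,a_{j+1}<m}\min\{a_j,m-a_{j+1}\}$ distinct multi-one factors, and this sum omits $a_J$.

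To make this work, put the first one at position $1$ and set $a_0\eqdef1$; a long leading zero run would add up to $m-a_1$ further factors. Take an integer $\lambda\ge5$ and $a_i\in[\lambda^i,2\lambda^i)$, so the gaps strictly increase. The sum is then below $1+2\lambda^J/(\lambda-1)\le1+m/2$. This gives $D_m\le(m+1)+1+m/2\le2m$ for $m\ge4$. For $m\le a_1$ no factor contains two ones, so $D_m\le m+1\le2m$.

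Finally, $r=\log_\lambda n-O(1)$ gaps fit in $n$ symbols, and distinct gap sequences give distinct strings in $\mathcal F_{n,1,2}\subseteq\mathcal F_{n,1,A}$. Therefore $\log|\mathcal F_{n,1,A}|\ge\sum_{i\le r}i\log\lambda=\Omega((\log n)^2)$, which completes your route without the citation.
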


\noindent\emph{Proof.}
For the upper bound, take $n\ge4$, $B=\lfloor\log n\rfloor$, and
$W=2^B$. Use minimum match length two and the decoder-known schedule
\[
 W_B(2)\eqdef1,\qquad W_B(\ell)\eqdef W\quad(\ell\ge3).
\]
Support all feasible lengths and overlapping copies, and parse by exact
longest admissible matching with a fixed tie rule. Code a binary literal
$a$ as $0a$, and a match as a $1$ flag, the Elias gamma code of $\ell-1$,
and $d-1$ in exactly $\log W_B(\ell)$ bits. The gamma code reveals the
distance-field width; length-two matches have distance one and no
distance bits. The literal cost is $b=2$, and a match costs at most
\[
 \log W_B(\ell)+2+2\log(\ell+1).
\]
Apply the literal-fallback rule. Decoding stops after $n$ symbols, so the
complete codewords form a prefix code. The schedule and $B$ are
determined by $n$, so no schedule table or fitted model is sent, and the
header is constant for a shared $n$, or $O(\log n)$ bits when $n$ is sent
self-delimitingly. Small $n$ can be coded literally. This format has
$t_B=3=O(B)$ and logarithmic length costs, so
Corollary~\ref{cor:finite-polynomial}, with $n/2<W\le n$, gives both
upper bounds for the complete code.

For the lower bound when $s>1$, put
\[
 d=\lceil n^{1/s}\rceil,\qquad k=\lfloor n/(2d)\rfloor.
\]
Partition the first $2dk$ positions into $k$ blocks of length $2d$. In
each block independently place one $1$ in one of its first $d$
positions, and fill all other positions with zero. There are $d^k$
distinct strings. Consecutive ones are at distance at least $d+1$, so
every length-$m$ substring with $m<d$ contains at most one $1$. Hence
\[
 D_m\le m+1\le2m^s\quad(m<d),\qquad
 D_m\le n\le d^s\le m^s\quad(m\ge d).
\]
The family lies in $\mathcal F_{n,s,2}\subseteq\mathcal F_{n,s,A}$, and
Kraft's inequality forces every prefix code to have length at least
$\log(d^k)$ on some member. For fixed $s>1$,
\[
 k\log d\sim\frac1{2s}n^{1-1/s}\log n.
\]

For $s=1$, use the binary prefix families of Kociumaka, Navarro, and
Prezza~\cite[Definition IV.2, Lemma IV.3, and the binary construction
in the proof of Lemma IV.4]{WZIPKociumaka2023}. Their members satisfy
$D_m\le2m$, and the logarithm of their number is $\Omega((\log n)^2)$.
They therefore lie in $\mathcal F_{n,1,A}$, and the same counting argument
proves the lower bound; this converse for linear complexity is theirs.
\hfill$\square$

Single-window LZ77 attains these orders as well; the theorem shows that
they are optimal when the window is comparable to the input length. The
hard strings may depend on $n$. The next corollary shows that the windows
do not destroy this optimality.

\begin{corollary}[Preservation across window schedules]
\label{cor:schedule-minimax}
Fix $A\ge2$ and $s\ge1$. Consider a collection of complete prefix-coded
exact greedy WLZ families satisfying Theorem~\ref{thm:block-transfer},
with common bounds on literal and logarithmic length costs,
$t_B\le c_tB$, and $cn\le W=2^B\le n$ for common $c_t,c>0$.
Schedules are shared or transmitted, and all header costs are
$O((\log n)^2)$, uniformly over inputs and the collection.
Every such family satisfies
\[
 \max_{x\in\mathcal F_{n,s,A}} C_{W,n}(x)=\Theta(L^*_{n,s,A}),
\]
with uniform constants over the collection and the orders
in~\eqref{eq:finite-minimax}.
\end{corollary}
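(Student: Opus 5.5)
The plan is to prove the two inequalities of $\Theta$ separately, with the lower bound coming for free. Each family in the collection gives a complete prefix code on $\{0,1\}^n$ with $n$ known, because headers and schedules are counted in $C_{W,n}$. Hence
\[
 \max_{x\in\mathcal F_{n,s,A}}C_{W,n}(x)\ge L^*_{n,s,A}
\]
by the definition of $L^*_{n,s,A}$. Theorem~\ref{thm:finite-minimax} then supplies the orders $n^{1-1/s}\log n$ and $(\log n)^2$ from below, with constants depending only on $A,s$ and therefore uniform over the collection.

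For the upper bound I will rerun the proof of Corollary~\ref{cor:finite-polynomial} while tracking constants, rather than citing it as a black box. Take any $x\in\mathcal F_{n,s,A}$ and a family in the collection with window $W$.
\begin{itemize}
\item By~\eqref{eq:block-transfer},
\[
 C_{W,n}(x)-\Gamma_n\le\frac{K_W}{W}n+[K_W+b(t_B-1)]Z_W(x).
\]
\item The common bounds $b$, $a_0$, $a_1$ and $t_B\le c_tB$ give $K_W\le b+B+a_0+a_1\log(W+1)=O(B)$ and $K_W+b(t_B-1)=O(B)$, with constants uniform over the collection.
\item Because $cn\le W\le n$, we have $B=\log n+O(1)$ uniformly, and $n/W\le1/c$. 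The first term is therefore $O(B)=O(\log n)$.
\item The phrase-count bound depends only on the input. Write $n=aW+r$ with $a\le1/c$ blocks. Every block inherits $D_m\le Am^s$, so the dyadic argument of Corollary~\ref{cor:polynomial-convergence} gives $Z_W\le c'(a+1)W^{1-1/s}=O(n^{1-1/s})$ for $s>1$. For $s=1$,~\eqref{eq:known-delta} gives $Z_W=O((a+1)\log(1+W))=O(\log n)$. The constant $c'$ depends only on $A,s$.
\end{itemize}
Multiplying by the $O(B)$ coefficient yields $O(n^{1-1/s}\log n)$ and $O((\log n)^2)$ respectively. Adding the header cost, which is $O((\log n)^2)$ and hence dominated by $L^*$ in both cases since $n^{1-1/s}\log n\ge(\log n)^2$ eventually for $s>1$, gives the upper order.

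The main obstacle is uniformity. The dyadic argument in Corollary~\ref{cor:polynomial-convergence} needs ``sufficiently large $W$'' so that $u\approx W^{1/s}<W/2$ and $\log u=O(u^{s-1})$. That threshold depends only on $s$ and $A$, because $W\ge cn$ grows with $n$, so for $n$ beyond a threshold depending only on $(s,A,c)$ the argument applies uniformly. I will check that no step uses a property of the particular schedule beyond $t_B\le c_tB$ and the full window at lengths $\ge t_B$. The transfer theorem's $\tau(d)\le t_B$ only requires this. For the finitely many small $n$ below the threshold, I will bound $C_{W,n}$ by a constant times $n$ plus header, using $K_W=O(B)$ and at most $n$ phrases, which is absorbed into the $\Theta$ constants.
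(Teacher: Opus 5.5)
Your proposal is correct and follows the paper's own argument: each complete encoder is one of the prefix codes in the converse of Theorem~\ref{thm:finite-minimax}, which gives the lower order, and Corollary~\ref{cor:finite-polynomial} together with the $O((\log n)^2)$ header bound gives the upper order. Your explicit bookkeeping — using $cn\le W\le n$ to turn the per-symbol bounds into $O(n^{1-1/s}\log n)$ and $O((\log n)^2)$, and checking that the large-$W$ threshold depends only on $A,s,c$ — spells out the uniformity over the collection that the paper's two-line proof leaves implicit.
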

\noindent\emph{Proof.}
Corollary~\ref{cor:finite-polynomial} and the header bound give the
upper orders. Each complete encoder is one of the prefix codes in
Theorem~\ref{thm:finite-minimax}, so its converse gives the lower orders.
\hfill$\square$

For example, the literal/gamma format in the proof of the theorem allows
arbitrary integer widths $w_B(\ell)\in\{0,\ldots,B\}$ for
$2\le\ell<t_B$, below a threshold $t_B\ge2$, and width $B$ thereafter.
For $t_B=O(B)$, a self-delimiting threshold followed by its table of short
widths costs $O(B\log(B+1))$ bits. Substantial restrictions of short
matches, nonmonotone tables included, thus keep the optimal orders even
when their descriptions are transmitted; the length-two schedule is the
simplest example.

If the fitted tables are given to the decoder for free, as in
Proposition~\ref{prop:field-entropy}, the same comparison bounds the
payload more sharply. Here every selected conceptual match is coded as a
copy, without literal fallback.

\begin{corollary}[Linear-complexity Huffman payload]
\label{cor:linear-entropy-payload}
Use the exact greedy schedule from Theorem~\ref{thm:finite-minimax},
with $W=2^{\lfloor\log n\rfloor}$, $W_B(2)=1$, and
$W_B(\ell)=W$ for $\ell\ge3$. Encode every selected conceptual
match as a copy and every unmatched symbol as a literal, using the
four Huffman streams. For binary inputs satisfying
$D_m(x_1^n)\le Am$ for every $1\le m\le n$, with fixed $A\ge1$,
the payload obeys
\begin{equation}
 C_{\rm H}^{\rm pay}(x_1^n)=O(\log n\log\log n).
 \label{eq:linear-entropy-payload}
\end{equation}
Adding $O(\log n)$ stream framing preserves this order.
\end{corollary}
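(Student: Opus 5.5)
The plan is to reduce everything to the support bound~\eqref{eq:four-field-support} of Proposition~\ref{prop:field-entropy}. That bound gives $C_{\rm H}^{\rm pay}=O(P\log(P+1))$, where $P=N+M$ is the number of emitted tokens. It therefore suffices to show $P=O(\log n)$, since then $P\log(P+1)=O(\log n\log\log n)$. The alphabet is binary, so the fixed literal width $b_0$ is a constant. The bound needs no independence between streams and applies to any valid parse, so no property of the Huffman code beyond optimality is used.

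The first step is to bound the reference phrase count. With $W=2^{\lfloor\log n\rfloor}>n/2$, the block partition of Theorem~\ref{thm:block-transfer} has at most two blocks. Each block $y$ inherits $D_m(y)\le D_m(x_1^n)\le Am$, so $\delta(y)\le A$. Then~\eqref{eq:known-delta} gives $z(y)=O(\log(1+|y|))$, and hence $Z_W(x_1^n)=O(\log n)$.

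The second step transfers this count to the WLZ parse. Theorem~\ref{thm:block-transfer} bounds cost, not phrases, and here there is no literal fallback. So I would repeat its counting argument for conceptual starts directly, with $F=0$ for exact greedy search.
\begin{itemize}
\item For the schedule $W_B(2)=1$, $W_B(\ell)=W$ ($\ell\ge3$), every distance has $\tau(d)\le3$.
\item Inside a reference copy, suffix closure holds once the remaining length is at least $\tau(d_j)$. A start that is not the last start in the copy therefore lies among the final $\tau(d_j)-1\le2$ positions. Each copy thus contains at most $3$ WLZ starts.
\item A reference literal contains at most one start.
\end{itemize}
Hence the number of conceptual phrases is at most $3Z_W=O(\log n)$. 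Each unmatched symbol is a phrase of length one and counts as one literal, so $N+M\le3Z_W$. Phrases crossing a block boundary are counted where they start, as in the theorem.

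The third step is the payload bound. Substituting $P=O(\log n)$ into~\eqref{eq:four-field-support}, with $b_M=O(\log\log n)$ and $N=O(\log n)$, gives~\eqref{eq:linear-entropy-payload}. Adding $O(\log n)$ bits of framing leaves the order unchanged. Small $n$ are absorbed into the constants.

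I expect the main obstacle to be the second step. The transfer theorem is phrased in bits, and I must check that its per-copy accounting yields a genuine phrase count even without literal fallback. In particular, I need that phrases in the last $\tau-1$ positions are at most $\tau-1$ in number, and that overlapping copies and the length-two distance-one matches do not break suffix closure for remaining lengths $r\ge3$. I would verify this directly: a suffix of length $r\ge3$ of a copy at distance $d\le W$ is admissible because $W_B(r)=W$.
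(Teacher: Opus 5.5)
Your proposal is correct and follows the paper's own proof. The paper likewise reruns the start-counting from the proof of Theorem~\ref{thm:block-transfer} to get $P\le t_BZ_W=3Z_W$: at most $t_B-1$ non-last starts in the tail of each reference copy, one last start, and one start per reference literal. It then bounds $Z_W=O(\log n)$ over the at most two reference blocks via~\eqref{eq:known-delta} and concludes with the support bound~\eqref{eq:four-field-support}.
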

\noindent\emph{Proof.}
The counting of starts in the proof of Theorem~\ref{thm:block-transfer}
gives $P\le t_BZ_W$ for the number of conceptual phrases: at most $t_B-1$
starts other than the last lie in the final $t_B-1$ positions of each
reference copy, at most one further start is the last, and a reference
literal contains at most one start. This also counts long phrases and
phrases that cross reference boundaries. Here $t_B=3$, and the at most two
reference blocks have $Z_W=O(\log n)$ phrases in total
by~\eqref{eq:known-delta}. Thus $P\le3Z_W=O(\log n)$,
and~\eqref{eq:four-field-support} proves
\eqref{eq:linear-entropy-payload}. \hfill$\square$

The parse and its phrase boundaries are unchanged; only their
representation changes under the fitted codes. The result improves the
logarithmic factor of fixed-width fields for this schedule, not the
polynomial exponent: for fixed $s>1$ and $D_m(x_1^n)\le Am^s$ for every
$1\le m\le n$, the same reasoning gives $P=O(n^{1-1/s})$ and keeps the
order $O(n^{1-1/s}\log n)$. An arbitrary threshold $t_B=O(B)$ can add a
factor to $P\le t_BZ_W$, so the refinement does not hold uniformly over
the schedules of Corollary~\ref{cor:schedule-minimax}.

The free tables hide a cost, which the existing converse quantifies. Let
$X$ be uniform on the binary family used in the proof of
Theorem~\ref{thm:finite-minimax} for $s=1$, so that
$\mathsf H(X)=\Omega((\log n)^2)$ and $D_m\le2m$
\cite[Definition IV.2 and Lemmas IV.3--IV.4]{WZIPKociumaka2023}. With
deterministic parsing and Huffman tie rules, let $T\eqdef T(X)$ be the
labeled tables, and let $Q(X)$ be the payload with its $O(\log n)$
framing. For each fixed $T$, these framed payloads form a conditional
prefix code that recovers $X$. Therefore
\begin{equation}
 \mathsf H(T)=\mathsf H(X)-\mathsf H(X\mid T)
 \ge\mathsf H(X)-\mathbb E|Q(X)|
 =\Omega((\log n)^2).
 \label{eq:huffman-table-information}
\end{equation}
Any standalone prefix description of the tables has at least this
expected length on the family. Sending the $O(\log n)$ framing bits first
does not change this order, since conditioning on them removes at most
$O(\log n)$ bits of entropy. The bound holds on average over the family,
hence for some input in it, not for every input.

Conversely, $P=O(\log n)$ bounds the support sizes, and every integer
label needs $O(\log n)$ bits, so~\eqref{eq:huffman-table-cost} gives a
complete length of $O((\log n)^2)$. For fixed $A\ge2$, the complete
four-field format therefore attains the worst-case order
$\Theta((\log n)^2)$ on $\mathcal F_{n,1,A}$: its smaller payload moves the
distinguishing information into the labeled tables while keeping the
minimax order of the complete code.

\subsection{Universality}
\label{sec:greedy-universality}
The comparison with a reference parse also extends the classical
universality of sliding-window LZ~\cite{WZIPWynerZiv1994} to WLZ. The
proof divides the input into blocks of length $m$ and pays for each block
either about one phrase, if its content recurs within the window, or its
literals otherwise. This avoids any quantitative bound on recurrence
times at the starts chosen by the greedy parser.

\begin{lemma}[Cost of matched and unmatched blocks]
\label{lem:greedy-block-charge}
Retain the exact greedy model and uniform code-cost bounds above, and put
$A_0\eqdef\max\{b,a_0\}$. Partition $x_1^n$ into $J$ nonempty consecutive
blocks of lengths at most $m$, where $t_B\le m\le W$. Designate as
\emph{good} any set of full length-$m$ blocks that each have a copy
within distance $W$ starting inside the input, and call all other blocks
\emph{bad}. Let $J_g$ count the good blocks and $V$ the symbols in bad
blocks. Then
\begin{equation}
 C_B^{\rm greedy}(x_1^n)-\Gamma_n
 \le J\left[B+A_0+a_1\log\left(1+\frac nJ\right)\right]
       +b(t_B-1)J_g+bV.
 \label{eq:greedy-block-charge}
\end{equation}
A match cap of $W$ suffices.
\end{lemma}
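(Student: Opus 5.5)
The plan is to charge every conceptual phrase of the exact greedy parse to the block in which it starts, and to mirror the counting in the proof of Theorem~\ref{thm:block-transfer}, with the blocks playing the role of reference phrases. The first step is a suffix-closure argument inside a good block. Let the block occupy $[p,p+m-1]$ and have a copy at distance $d\le W$. For a start $i$ in the block, the remaining length is $r=p+m-i$. If $r\ge t_B$, then every length from $r$ through $W$ has the full window, so the suffix of length $r$ is an admissible match at distance $d$. This holds also when copies overlap, since the copy starts inside the input and so $d<i$. Exact greedy therefore chooses a phrase of length at least $r$, which reaches the end of the block. Hence at most one start in a good block lies before its final $t_B-1$ positions, and that phrase is the last start in the block.

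The second step is to count starts per block and attach costs. In a good block, the starts other than the last lie within the final $t_B-1$ positions and cover disjoint intervals there. By literal fallback, their total cost is at most $b(t_B-1)$. In a bad block, the phrases that start and end inside it cost at most $b$ times the symbols they cover, which is at most $bV$ in total over all bad blocks. Each block then has at most one \emph{exiting} phrase, meaning a phrase that starts in it and crosses or reaches its end; the last phrase of the input counts as exiting. A phrase contained in its block is charged by literals. A phrase that continues past the block boundary is charged once, as the exiting phrase of its starting block. Blocks containing no start cost nothing.

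The third step bounds the exiting phrases, at most $J$ of them. Each such phrase, of length $\ell_k$, costs at most $\max\{b,\,a(\ell_k)+w(\ell_k)\}\le A_0+B+a_1\log(\ell_k+1)$, using $w\le B$. Writing $\ell_k+1\le 1+\ell_k$ and using the concavity of $\log$ together with the disjointness of these phrases, so that $\sum\ell_k\le n$, Jensen's inequality gives $\sum_k a_1\log(1+\ell_k)\le J a_1\log(1+n/J)$. If the number of exiting phrases $J'$ is smaller than $J$, the term $J'\log(1+n/J')$ is monotone in $J'$ and is therefore at most the $J$ version. This yields the first term of \eqref{eq:greedy-block-charge}.

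The main obstacle is to make sure the charges are disjoint and complete. A phrase counted in a good block's tail must not also be an exiting phrase, and the phrase that reaches the block end exactly must be treated as exiting, with length-based cost rather than literals. When a long phrase crosses several blocks, those later blocks may have no starts at all, which is fine. Literal fallback must be applied per phrase with the greedy boundaries kept. Finally, a match cap of $W$ suffices because the suffix matches used have length $r\le m\le W$.
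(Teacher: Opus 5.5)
Your proposal is correct and follows the paper's proof essentially step for step. It charges each phrase to the block where it starts, and uses suffix closure at the witness distance to show that any start in a good block with at least $t_B$ remaining symbols is the block's last start. It bounds the other phrases in good and bad blocks by literal fallback, and bounds the at most $J$ last-start (your ``exiting'') phrases using disjointness and concavity of the logarithm. The only cosmetic difference is how you handle fewer than $J$ such phrases: you use monotonicity of $J'\log(1+n/J')$, whereas the paper pads the list of lengths with zeros.
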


\noindent\emph{Proof.}
Assign each conceptual phrase to the block containing its start, and
distinguish the last start in each block that contains one. In a good
block, any start with at least $t_B$ symbols remaining has a legal copy
of the entire remaining suffix at the witness distance, so greedy parsing
reaches the end of the block. Every other start lies in the final
$t_B-1$ symbols, and its phrase ends before another start in that block;
by literal fallback these disjoint intervals cost at most $b(t_B-1)$ in
total. In a bad block, the phrases other than the last likewise stay
inside the block and cost at most $b$ times its length.

The phrases of the distinguished last starts are disjoint even when they
cross block boundaries. There are at most $J$ of them, their lengths sum
to at most $n$, and each costs at most $B+A_0+a_1\log(\ell+1)$, a literal
included. Padding their list of lengths with zeros to $J$ entries and
using the concavity of $\log(1+u)$ bounds their total cost by the first
term in~\eqref{eq:greedy-block-charge}. Summing these costs proves the
result; the blocks serve only the analysis and do not reset the encoder.
\hfill$\square$

\begin{theorem}[Universality of exact greedy WLZ]
\label{thm:greedy-universality}
Suppose $t_B=o(B)$, every length from $t_B$ through $W=2^B$ has the full
window, and the exact greedy family satisfies the preceding uniform
cost bounds. Assume complete prefix coding with $\Gamma_n=o(n)$ for
each fixed $B$. For every finite-alphabet stationary ergodic source of
entropy rate $h\ge0$, put
$\overline K_m\eqdef B+A_0+a_1\log(m+1)$. For fixed
$t_B\le m\le W$,
\begin{equation}
 \limsup_{n\to\infty}\frac{C_B^{\rm greedy}(X_1^n)}n
 \le\frac{\overline K_m+b(t_B-1)}m+b\Pr\{R_m>W\}
 \quad\text{almost surely},
 \label{eq:greedy-recurrence}
\end{equation}
where $R_m$ is the backward recurrence distance in a two-sided stationary
extension. Consequently,
\begin{equation}
 \lim_{B\to\infty}\limsup_{n\to\infty}
       \frac{C_B^{\rm greedy}(X_1^n)}n=h
 \quad\text{almost surely}.
 \label{eq:greedy-universality}
\end{equation}
Both conclusions hold with a match cap of $W$.
\end{theorem}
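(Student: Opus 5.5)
The plan is to apply Lemma~\ref{lem:greedy-block-charge} with fixed-length blocks and an ergodic count of the good blocks, and then let $m$ and $B$ grow. Fix $t_B\le m\le W$ and partition $x_1^n$ into $J=\lceil n/m\rceil$ consecutive blocks of length $m$, the last possibly shorter. Call a full block $x_{km+1}^{(k+1)m}$ good when its content occurs earlier at some distance $d$ with $1\le d\le W$ and $d\le km$, so that the copy starts inside the input. All other blocks are bad.

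\textbf{The inequality for fixed $m$.} Lemma~\ref{lem:greedy-block-charge} gives
\[
 \frac{C_B^{\rm greedy}(X_1^n)-\Gamma_n}{n}\le \frac Jn\left[B+A_0+a_1\log\Bigl(1+\frac nJ\right)\Bigr]+\frac{b(t_B-1)J_g}{n}+\frac{bV}{n}.
\]
Since $n/J\le m$ and $J/n\to 1/m$, the first two terms have $\limsup$ at most $[\overline K_m+b(t_B-1)]/m$. For $V/n$, the first $\lceil W/m\rceil$ blocks may be bad because the history is too short, but they contribute $O(W)/n\to0$. For the remaining blocks, badness is the event $\{R_m^{(km)}>W\}$, where $R_m^{(j)}$ is the backward recurrence distance of the word $X_{j+1}^{j+m}$ in the two-sided extension. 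Because only the past of length $W$ is involved, this event is a function of $X_{j+1-W}^{j+m}$.

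The average $\frac1J\sum_k \mathbf 1\{R_m^{(km)}>W\}$ samples only every $m$-th position, and ergodicity of the shift does not imply ergodicity of $T^m$. I will therefore average over the $m$ phases. The bound from the lemma holds for every offset partition, in which the first block has length less than $m$. So for each phase $\phi\in\{0,\dots,m-1\}$ the bound holds with $V_\phi$. Averaging the $m$ bounds gives $\frac1m\sum_\phi V_\phi/n$. This is at most $\frac{m}{n}\sum_{j}\mathbf 1\{R_m^{(j)}>W\}+o(1)$, since each position $j$ starts a block in exactly one phase. By Birkhoff's theorem for the ergodic shift, this tends to $\Pr\{R_m>W\}$ almost surely. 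The subtlety is that averaging bounds on one quantity $C$ yields a bound on $C$ itself, so the phase averaging is legitimate. Dividing $\Gamma_n=o(n)$ by $n$ then gives~\eqref{eq:greedy-recurrence}.

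\textbf{Universality.} For the lower bound, every complete prefix code satisfies $\liminf C/n\ge h$ almost surely by the standard Barron / Kieffer pointwise converse, so it suffices to prove $\limsup_B\le h$. For the upper bound, fix $\varepsilon>0$ and take $m=m_B=\lfloor B/(h+\varepsilon)\rfloor$; when $h=0$, take $m=\lfloor B/\varepsilon\rfloor$. Since $t_B=o(B)$, we have $m\ge t_B$ eventually and $m\le W$. The coding term is
\[
 \frac{B+A_0+a_1\log(m+1)+b(t_B-1)}{m}\to h+\varepsilon.
\]
The remaining step is $\Pr\{R_m>2^{m(h+\varepsilon)}\}\to0$ as $m\to\infty$. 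This follows from the Ornstein--Weiss recurrence theorem, $\frac1m\log R_m\to h$ almost surely, whose convergence in probability suffices here. Finally, let $\varepsilon\downarrow0$. A match cap of $W$ is harmless because the lemma already allows it.

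\textbf{Main obstacle.} The delicate step is identifying the limit of the bad-block fraction: the stride-$m$ sampling, and matching "copy within distance $W$ inside the input" to the two-sided recurrence event. The phase-averaging trick above is the first thing I would try.
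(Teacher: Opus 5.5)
Your proposal is correct and follows the paper's proof essentially step for step. The common ingredients are: the phase argument over the $m$ offsets, where the paper applies Lemma~\ref{lem:greedy-block-charge} to the phase with the fewest bad symbols and you average the $m$ bounds, which is equivalent; Birkhoff's theorem for the one-step shift; the choice $m_B=\lfloor B/(h+\epsilon)\rfloor$ together with the Ornstein--Weiss recurrence theorem; and the Kraft/Borel--Cantelli/Shannon--McMillan--Breiman converse, intersected over the countably many $B$, $m$, and rational $\epsilon$. There is one arithmetic slip: each bad full block carries $m$ symbols and averaging divides by $m$, so $\frac1{mn}\sum_\phi V_\phi\le\frac1n\sum_j\mathbf 1\{R_m^{(j)}>W\}+O((W+m)/n)$, not $\frac mn\sum_j$, and it is this expression that converges to $\Pr\{R_m>W\}$ as you state.
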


\noindent\emph{Proof.}
For each of the $m$ phases, partition the input into length-$m$ blocks
with shorter end blocks. Mark the end blocks and the full blocks starting
at positions at most $W$ as bad, and call every other block good exactly
when $R_m\le W$. Across all phases, each interior position is a block
start once, so the average fraction of symbols in bad blocks is at most
\[
 \frac1n\sum_{i=W+1}^{n-m+1}
       \mathbf1\{R_m(T^{i-1}X)>W\}+O((W+m)/n),
\]
where $T$ is the one-symbol shift. Ergodicity bounds the upper limit of
this average by $\Pr\{R_m>W\}$; no ergodicity under the $m$-symbol shift
is needed. Apply Lemma~\ref{lem:greedy-block-charge} to a phase with the
fewest symbols in bad blocks. For every phase, $J/n\to1/m$, $n/J\to m$,
and $J_g\le J$, which proves \eqref{eq:greedy-recurrence}.

The recurrence theorem gives $\log R_m/m\to h$ almost surely
\cite{WZIPOrnsteinWeiss1993,WZIPJainBansal2014}. For any $\epsilon>0$,
choose $m_B\eqdef\lfloor B/(h+\epsilon)\rfloor$, which eventually lies
between $t_B$ and $W$. The recurrence theorem and bounded convergence
give $\Pr\{R_{m_B}>2^B\}\to0$. Also
$\overline K_{m_B}/m_B\to h+\epsilon$ and $t_B/m_B\to0$.
Thus the outer upper limit in~\eqref{eq:greedy-universality} is at most
$h+\epsilon$, also when $h=0$.

For the converse, write $I_n\eqdef-\log P_n(X_1^n)$. For every fixed $B$,
Kraft's inequality gives
\[
 \Pr\{C_B^{\rm greedy}(X_1^n)<I_n-2\log n\}\le n^{-2}.
\]
The Borel--Cantelli lemma and the Shannon--McMillan--Breiman theorem
yield $\liminf_n C_B^{\rm greedy}(X_1^n)/n\ge h$ almost surely.
Intersecting the probability-one events over the countably many $B$,
$m$, and positive rational $\epsilon$, then letting $\epsilon\downarrow0$,
proves~\eqref{eq:greedy-universality}.\hfill$\square$

Minimum-bit WLZ in the same format costs no more than exact greedy WLZ,
so it inherits universality:
\begin{equation}
 \lim_{B\to\infty}\limsup_{n\to\infty}
       \frac{C_B^*(X_1^n)}n=h\quad\text{almost surely}.
 \label{eq:ideal-universality}
\end{equation}
For this ideal parser, matched $m$-blocks and literal-coded failures give
the slightly sharper bound, averaged over phases,
\begin{equation}
 \limsup_{n\to\infty}\frac{C_B^*(X_1^n)}n
 \le\frac{B+a(m)}m+b\Pr\{R_m>W\}.
 \label{eq:ideal-recurrence}
\end{equation}
The greedy result needs neither unbounded match lengths nor a rate at
which the recurrence probability vanishes.

Universality concerns the limit; how fast the redundancy vanishes depends
on the length code, not on the windows. Appendix~\ref{app:obstruction}
proves that exact greedy parsing with fixed-width long pointers and
separately self-delimiting lengths has expected redundancy at least a
constant times $\log B/B$ on a uniform iid source, whatever the windows
of a fixed number of short-match classes, and
Section~\ref{sec:strict-comparisons} shows how a different length code
reduces it. The obstruction does not cover adaptive Huffman length codes
or a raw-block fallback coded at the entropy; earlier lower bounds for
sliding-window LZ likewise depend on the coding
model~\cite{WZIPLastras2006}. The modified LZ scheme of Wyner and Wyner
attains $O(1/B)$ under finite-memory
assumptions~\cite{WZIPWynerWyner1995}; removing a factor $\log B$ is a
different question from achieving a power $B^{-p}$ with $p>1$.

\section{What the Windows Gain, and What Sets the Leading Term}
\label{sec:strict-comparisons}
This section compares actual code lengths on the same source, not upper
bounds. Subsection~\ref{sec:strict-window-gain} shows that a calibrated
schedule never loses against a single window and gains a polynomially
small rate. Subsection~\ref{subsec:fixed-field-comparison} shows that the
leading redundancy term is set by the length code: a short codeword for a
growing longest match lowers it below that of the original fixed-field
LZ77 format, whatever cap that format uses.
Subsection~\ref{sec:run-escape} adds a separation from coding runs of one
symbol by their count. The bounds of Section~\ref{sec:convergence} do not
give these strict inequalities, which need separately specified encoder
families.

\subsection{What the windows gain}
\label{sec:strict-window-gain}
A calibrated length-dependent schedule can lower the actual asymptotic
rate of a single-window code. The baseline has literal cost $b>0$,
maximum distance $W=2^B$, and match cost $B+a(\ell)$, with $a(\ell)\ge0$;
both formats allow overlapping copies. Put
$L_B^{\mathrm{cut}}\eqdef\lfloor B/b\rfloor$ and use the schedule
\begin{equation}
 w_B(\ell)\eqdef\begin{cases}
 0,&\ell\le L_B^{\mathrm{cut}},\\
 B,&\ell>L_B^{\mathrm{cut}}.
 \end{cases}
 \label{eq:safe-window-schedule}
\end{equation}
Short matches thus have distance one, while long matches keep the entire
history. Supported lengths start at a fixed minimum and are unbounded.
Write $C_0^*$ and $C_W^*$ for the minimum token costs in the two formats;
headers of $o(n)$ bits for fixed $B$ do not affect the rates below.

\begin{proposition}[Pointwise comparison]
\label{prop:no-loss-windows}
For every input $x$, schedule~\eqref{eq:safe-window-schedule} satisfies
$C_W^*(x)\le C_0^*(x)$.
\end{proposition}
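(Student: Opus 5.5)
Take an optimal single-window parse $P$ realizing $C_0^*(x)$ and transform it into an admissible WLZ parse of no greater cost, via Lemma~\ref{lem:reference-parse}. Partition the matches of $P$ into $\mathcal A$ (admitted by schedule~\eqref{eq:safe-window-schedule}) and $\mathcal F$ (excluded). Lemma~\ref{lem:reference-parse} gives $C_W^*(x)\le C_0^*(x)-G(P)+R(P)$. It therefore suffices to show $R(P)\le G(P)$; the natural route is $R(P)\le0$ and $G(P)\ge0$.

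\emph{Sign of $G$.} Every width satisfies $w_B(\ell)\in\{0,B\}\le B$, so each admitted match contributes $B-w_B(\ell)\ge0$, and hence $G(P)\ge0$.

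\emph{Sign of $R$.} A match of length $\ell>L_B^{\mathrm{cut}}$ has width $B$ and the full window $W$, so it is admitted whenever the comparator admits it. Excluded matches therefore have $\ell\le L_B^{\mathrm{cut}}=\lfloor B/b\rfloor$ and distance $d\ge2$. For them $b\ell\le B$, so $b\ell-a(\ell)-B\le -a(\ell)\le0$. Thus every term of $R(P)$ is nonpositive and $R(P)\le0$. Combining the two signs gives $C_W^*\le C_0^*$.

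\emph{Admissibility.} The step needing most care is confirming that Lemma~\ref{lem:reference-parse} applies verbatim. Replacing an excluded match by its literals leaves every later phrase with the same decoded history, so the modified parse is valid. The minimum match length of the WLZ format must also equal the comparator's, which holds since both formats share the fixed minimum and allow unbounded lengths. Finally, the match cost of WLZ is $a(\ell)+w_B(\ell)$ with the same $a(\ell)$, so the accounting in $G$ and $R$ is exact.

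The point of the calibration is that the cutoff $\lfloor B/b\rfloor$ makes every excluded comparator match no cheaper than its literals. Replacing such a match by literals therefore never costs more, even though suffix closure fails. No further obstacle arises, because the comparison is pointwise and uses the optimal comparator parse directly rather than a greedy one.
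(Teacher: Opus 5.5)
Your proof is correct and is essentially the paper's argument: you transform an optimal baseline parse by replacing short matches with literals, and the key inequality is $b\ell\le B\le B+a(\ell)$ for $\ell\le L_B^{\mathrm{cut}}$. The only difference is that the paper replaces every match of length at most $L_B^{\mathrm{cut}}$ by literals, whereas you keep the admitted distance-one matches and do the accounting through Lemma~\ref{lem:reference-parse}, which gives the slightly sharper bound $C_W^*\le C_0^*-G(P)$.
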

\noindent\emph{Proof.}
Replace every baseline match of length $\ell\le L_B^{\mathrm{cut}}$ by its
literals. This cannot increase the cost, because $B+a(\ell)\ge B\ge b\ell$.
All remaining matches keep their full window and original cost in the
new format, and the decoded history is unchanged. \hfill$\square$

The schedule thus never loses. It gains strictly when the input contains
short repeats that the baseline cannot afford, as on the following fixed
iid source.

\begin{theorem}[Strict saving with a polynomial lower bound]
\label{thm:strict-polynomial-saving}
Assume $0\le a(\ell)\le a_0+a_1\log_2(\ell+1)$ for fixed
$a_0,a_1\ge0$. For any $\beta>2$, take a binary iid source with
$\Pr\{X_i=1\}=p=2^{-b\beta}$, with $b$ large enough to encode a
binary literal and its discriminator. Under
schedule~\eqref{eq:safe-window-schedule}, almost surely,
\begin{equation}
 \Rrate_{W,B}^*\le \Rrate_{0,B}^*
       -\Omega\bigl((\log W)W^{-\beta}\bigr),
 \label{eq:strict-polynomial-gap}
\end{equation}
where $\Rrate_{j,B}^*\eqdef\limsup_{n\to\infty}C_j^*(X_1^n)/n$.
The implicit positive constant is independent of $B$.
\end{theorem}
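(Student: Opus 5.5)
We couple the two optimal parses on the same realization and exhibit, at a positive density of positions, a local modification that saves $\Omega(B)$ bits in the WLZ format. We start from an optimal baseline parse $P_0$ of $x_1^n$. The proof of Proposition~\ref{prop:no-loss-windows} turns it into a WLZ parse $P_1$ of no greater cost: short baseline matches are replaced by literals, and long matches are kept. We then improve $P_1$ locally.

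The source has $p=2^{-b\beta}$, so it is dominated by long runs of zeros separated by isolated ones. The pattern we exploit is a block of the form $1\,0^{L}\,1$ with $L$ moderate, or more simply a run of zeros of length $k$ with $L^{\mathrm{cut}}_B<k$. Inside a zero run, WLZ can code the continuation $0^{\ell}$ with $\ell\le L^{\mathrm{cut}}_B$ at distance one for only $a(\ell)$ bits. The baseline must pay either $B+a(\ell)$ or $b\ell$ for the same symbols. The clean event is therefore a short zero run: a one, then exactly $\ell_0$ zeros with $2\le\ell_0\le L^{\mathrm{cut}}_B$, then a one. In the baseline these $\ell_0+2$ symbols cost at least something comparable to literals. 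More carefully, we will argue that any baseline parse covering the run with phrases confined to it pays at least $\min\{b\ell_0, B\}$-type costs. WLZ can instead emit a literal $0$ and then a distance-one match of length $\ell_0-1$, for $b+a(\ell_0-1)$ bits.

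To make this rigorous without controlling the global structure of $P_0$, we choose events where $P_0$ is forced. We pick an isolated window around a position that contains a pattern $1\,0^{\ell_0}\,1$ with $\ell_0\approx L^{\mathrm{cut}}_B$ (say $\ell_0=L^{\mathrm{cut}}_B$). We require this pattern, together with a surrounding context of length about $c\,B$, to have no earlier occurrence within distance $W$. Then no baseline match of length exceeding about $cB$ can cover the pattern. Every baseline phrase intersecting the pattern is either a literal or a match of cost at least $B$ per phrase, and replacing the pattern symbols in $P_0$ by a cheaper WLZ local parse saves at least $\min\{B,b\ell_0\}-b-a(\ell_0)-O(1)=\Omega(B)$ bits. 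We have to be careful at the phrase boundaries: a phrase straddling the pattern is cut, and its remainder is recoded, either with the literal-fallback bound or by keeping the straddling match shortened, which stays admissible because it is a baseline match of length greater than $L^{\mathrm{cut}}_B$ only if the remainder is long. This boundary surgery is the main obstacle. We would handle it by demanding that the pattern be flanked by a $1$ on both sides inside the non-recurring context, so that any long match through it would repeat the context.

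We then estimate the frequency of the event. The pattern $1\,0^{\ell_0}\,1$ has probability $p^2(1-p)^{\ell_0}\ge c\,p^2=c\,2^{-2b\beta}$. Since $B\approx b L^{\mathrm{cut}}_B$, we get $p^2\approx 2^{-2\beta B}$ up to constants, which is $W^{-2\beta}$. This is too small, so we need a different event. The stated bound $W^{-\beta}$ suggests using only one $1$: the event is a run $0^{\ell_0}$ starting immediately after a one, which has probability about $p\approx W^{-\beta}$ up to $2^{O(b)}$. The non-recurrence of the context is then guaranteed with probability bounded below, because a context containing a one and about $\beta B$ further symbols has probability well below $1/W$ when $\beta>2$. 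Here $\beta>2$ is used to make the union bound over $W$ distances vanish.

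By ergodicity, the density of disjoint good events is almost surely $\Omega(W^{-\beta})$, each saving $\Omega(B)$ bits. This gives $C_W^*\le C_0^*-\Omega(nBW^{-\beta})$ eventually, and hence the rate gap $\Omega((\log W)W^{-\beta})$ with constants independent of $B$.
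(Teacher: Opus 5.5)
Your proposal has a genuine gap. The event you choose cannot carry the saving you attribute to it, and your frequency estimate is off by a factor that grows with $B$.

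The parameter $p=2^{-b\beta}$ is a constant independent of $B$. So $p^2$ is not $2^{-2\beta B}$, and $p$ is not $W^{-\beta}$. You have conflated $p$ with $p^{L_B^{\mathrm{cut}}}=2^{-b\beta\lfloor B/b\rfloor}\ge2^{-\beta B}=W^{-\beta}$, which is the probability of a run of $L_B^{\mathrm{cut}}$ \emph{ones}.

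Taken at face value, events of density of order $p$ or $p^2$, each saving $\Omega(B)$ bits, would give a rate gap of order $B$. That is impossible, because the gap is at most $\Rrate_{0,B}^*\le b$. The per-event saving does fail for runs of the frequent symbol:
\begin{itemize}
\item Each zero carries only $\log\frac1{1-p}=O(p)$ bits of self-information. A word of length about $\beta B$ therefore has probability below $1/W$ only if it contains on the order of $B/(b\beta)$ ones.
\item So the non-recurrence you require is not ``guaranteed with probability bounded below''. A zero run after a one, with an ordinary context, recurs many times within the window.
\item The baseline then covers it inside long full-window phrases, which cost the same in both formats. Cutting such a phrase to insert your cheap distance-one match costs an extra full-window token.
\end{itemize}
This is exactly the boundary surgery you leave open, and for this event it cannot be closed.

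The missing idea is to build the event from the rare symbol, so that there is a region where the normalized baseline parse is forced to consist of literals. Fix $0<\epsilon<\beta/2-1$, put $r_B=\lceil(1+\epsilon)B/(b\beta)\rceil$ and $L=L_B^{\mathrm{cut}}$. Call a run at $i$ good if $X_{i-1}=0$, $X_i^{i+L-1}=1^L$, $X_{i+L}=0$, and the preceding $W$ symbols contain no $1^{r_B}$.

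A baseline match longer than $L$ that covers a position of the middle interval $[i+r_B,i+L-r_B-1]$ must extend past an end of the run. Copying the first $r_B$ ones, or the last $r_B$ ones and the following zero, then forces one of two contradictions:
\begin{itemize}
\item a source position with the wrong symbol, namely the zero at $i-1$ or a one inside the run; or
\item a forbidden $1^{r_B}$ in the preceding window.
\end{itemize}

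After short baseline matches are replaced by literals, as in Proposition~\ref{prop:no-loss-windows}, the middle interval consists only of literal phrases. It can then be replaced by one distance-one match of length $k_B=L-2r_B$ without touching any other phrase, which saves $bk_B-a(k_B)=\Theta(B)$. Here $\beta>2$ is what keeps $k_B=\Theta(B)$ while $Wp^{r_B}\le W^{-\epsilon}$; it is not needed merely to make a union bound vanish.

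A good run has probability at least $(1-p)^2W^{-\beta}(1-W^{-\epsilon})$, and good runs are disjoint. The ergodic theorem gives their density, which yields the gap $\Omega((\log W)W^{-\beta})$.
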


\noindent\emph{Proof.}
Fix $0<\epsilon<\beta/2-1$ and define
\[
 r_B\eqdef\left\lceil\frac{(1+\epsilon)B}{b\beta}\right\rceil,
 \qquad k_B\eqdef L_B^{\mathrm{cut}}-2r_B.
\]
For large $B$, $k_B$ is a supported positive length. Call a run starting
at $i$ good if the preceding $W$ symbols contain no $1^{r_B}$,
$X_{i-1}=0$, $X_i^{i+L_B^{\mathrm{cut}}-1}=1^{L_B^{\mathrm{cut}}}$, and
$X_{i+L_B^{\mathrm{cut}}}=0$, and consider its middle interval
$[i+r_B,i+L_B^{\mathrm{cut}}-r_B-1]$, of length $k_B$.

No baseline match longer than $L_B^{\mathrm{cut}}$ can cover a middle
position, since such a match must extend beyond at least one end of the
run. If it extends left, copying the first $r_B$ ones of the run would
require an occurrence starting at $i-d\ge i-W$. For $d<r_B$, this
occurrence crosses the zero at $i-1$; for $d\ge r_B$, it is a forbidden
earlier $1^{r_B}$ wholly in the preceding window. If the match extends
right, it copies the last $r_B$ ones and the following zero. For
$d\le L_B^{\mathrm{cut}}$, that zero has source position
$i+L_B^{\mathrm{cut}}-d$ inside the current run, a contradiction. For
$d>L_B^{\mathrm{cut}}$, the source of the last $r_B$ ones lies wholly in
the preceding window, again a forbidden $1^{r_B}$. The argument uses
source positions at the run boundaries, so it also covers phrases
starting before the preceding window and overlapping copies.

Normalize any baseline parse by replacing all matches of length at most
$L_B^{\mathrm{cut}}$ by literals, as in
Proposition~\ref{prop:no-loss-windows}. Every middle interval is then
literal. Replace it by a match of length $k_B$ at distance one, which
saves
\[
 s_B\eqdef bk_B-a(k_B)
 =\left(1-\frac{2(1+\epsilon)}\beta\right)B-O(\log B)
 =\Theta(B)>0.
\]
Good runs, and hence the replaced intervals, are disjoint. Therefore, if
$N_B(x_1^n)$ counts the good runs fully contained in the input with
complete preceding windows,
\begin{equation}
 C_W^*(x_1^n)\le C_0(P)-s_BN_B(x_1^n)
 \label{eq:strict-finite-gap}
\end{equation}
for every baseline parse $P$, including a minimum-cost one.

Condition on the preceding zero. A union bound on the remaining iid
history shows that a good run has probability
\[
 \pi_B\ge (1-p)^2p^{L_B^{\mathrm{cut}}}(1-Wp^{r_B})
 \ge(1-p)^2W^{-\beta}(1-W^{-\epsilon}).
\]
For each fixed $B$, the ergodic theorem gives $N_B(X_1^n)/n\to\pi_B$
almost surely. Intersect these probability-one events over integer $B$,
divide~\eqref{eq:strict-finite-gap} by $n$, and take upper limits. Since
$s_B=\Theta(B)$, this proves~\eqref{eq:strict-polynomial-gap}.
\hfill$\square$

For a finite file, the explicit transformation in the proof saves at
least $s_BN_B(x_1^n)-\Delta\Gamma$ bits, headers included, with
$\Delta\Gamma$ as in~\eqref{eq:wzip-gain-condition}. For a heuristic WLZ
parse the saving is at least $s_BN_B(x_1^n)-E_W-\Delta\Gamma$, with $E_W$
from the same equation; either bound proves a strict improvement when it
is positive, and greedy parsing alone does not control $E_W$.

For example, $b=2$, $\beta=3$, and $p=1/64$ give a fixed binary source
with a saving of $\Omega((\log W)/W^3)$. This is a polynomial correction
to the single-window rate; it does not change the leading redundancy
term, of order $1/\log W$. The short classes in
\eqref{eq:safe-window-schedule} grow with $B$, unlike a fixed window table.
Minimum-bit parsing attains the theorem, and the explicit transformation
of a baseline parse attains~\eqref{eq:strict-finite-gap} without computing
either optimum. Any fixed short-distance width $w_0$ may replace zero,
reducing each run saving by only $w_0$ bits, so distance-one windows are
a convenient extreme rather than a requirement. The baseline here has
fixed-width distances; an entropy-coded single-window format is a
different comparator.

\subsection{The length code sets the leading term}
\label{subsec:fixed-field-comparison}
Appendix~\ref{app:obstruction} shows that with logarithmic-cost length
codes, exact greedy parsing has redundancy of order at least $\log B/B$ on
uniform iid data, whatever windows a fixed number of short-match classes
use. The leading term is therefore a property of the length code. This
subsection shows that a short codeword for a growing longest match lowers
it to $O(\log\log B/B)$, while the fixed-field format of the original
LZ77 paper, which copies a prefix and appends one source symbol with
fixed-width pointer and length fields~\cite[p.~338]{LZ77}, stays at
$\Omega(\log B/B)$ whatever its cap. The cap is of order $\log W$, a
different regime from the families with unbounded lengths of
Section~\ref{sec:convergence}. A single-window encoder with the same token
code obtains the same gain; the windows play no role in it.

\begin{lemma}[Uniform probability of finding a word]
\label{lem:uniform-word-hit}
In a uniform iid source on $q\ge2$ symbols, the probability that a
specified word of length $L\ge1$ does not occur in a block of $W$ symbols
is at most
\begin{equation}
 \epsilon(W,L)\eqdef2^{-\lfloor W/(8q^L)\rfloor}
 \le 2\exp\{- (\ln2)Wq^{-L}/8\}.
 \label{eq:uniform-word-hit}
\end{equation}
The bound holds uniformly over the specified word.
\end{lemma}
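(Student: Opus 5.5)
The plan is to split the block of $W$ symbols into disjoint sub-blocks of length $L$ and use independence across them. Put $K\eqdef\lfloor W/L\rfloor$ disjoint consecutive sub-blocks. The word fails to occur in the block only if it fails to occur as each of these aligned sub-blocks. Each aligned sub-block equals the specified word with probability exactly $q^{-L}$, whatever the word. Independence then gives
\[
 \Pr\{\text{no occurrence}\}\le(1-q^{-L})^{K}\le\exp\{-Kq^{-L}\}.
\]
This bound is uniform over the word, which is the uniformity claimed in the statement.

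The remaining work is to reach the stated form $2^{-\lfloor W/(8q^L)\rfloor}$, which may be weaker than what this argument gives. I will group the $K$ sub-blocks into $G\eqdef\lfloor W/(8q^L)\rfloor$ disjoint groups of $\lceil q^L\rceil$ consecutive sub-blocks each. Such a group spans $\lceil q^L\rceil L$ symbols, and the groups must fit in $W$ symbols, so I need $G\lceil q^L\rceil L\le W$. When $G\ge1$ we have $q^L\le W/8$. If $L\le 8$, then $G q^L L\le W$ is immediate. For larger $L$ I cannot fit groups of $q^L$ sub-blocks naively, so I will use smaller groups of size $\lceil q^L/L\rceil$ instead.

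This is the main obstacle: a group of only $q^L/L$ sub-blocks contains the word with probability only about $1/L$, not a constant. The loss of a factor $L$ must therefore be avoided by another route.

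The route I will try first is to allow overlapping positions but use only starts at distance at least $L$ apart, and to apply a second-moment (Janson or Chebyshev) argument to the count of all occurrences within each chunk of $8q^L$ symbols. Let $N$ be the number of occurrences starting in a chunk of length $8q^L$, with starts fully inside the chunk. Then $\mathbb E N\approx 8$. The variance is controlled by overlapping pairs. A pair at shift $s<L$ can co-occur only if $s$ is a period of the word, and then with probability $q^{-L-s}$, so the overlap contribution is at most $\mathbb E N\sum_{s\ge1}q^{-s}\le\mathbb E N$. Paley–Zygmund then gives
\[
 \Pr\{N>0\}\ge\frac{(\mathbb E N)^2}{\mathbb E N^2}\ge\frac{64}{8+8+64}>\frac12,
\]
with slack left for the boundary loss of $L-1$ starts (which is negligible because $L\le\log_q(W/8)\ll q^L$).

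The $G$ chunks are disjoint, so the failure events are independent and each has probability below $1/2$, giving the bound $2^{-G}$. The displayed second inequality follows from $\lfloor y\rfloor\ge y-1$:
\[
 2^{-\lfloor y\rfloor}\le 2\cdot 2^{-y}=2\exp\{-(\ln2)y\},\qquad y\eqdef\frac{W}{8q^L}.
\]
When $G=0$ the bound is the trivial value $1$.
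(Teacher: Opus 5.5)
Your final argument is the paper's proof: bound the number $N$ of occurrences in each chunk of $8q^L$ symbols by a second-moment estimate, using joint probability $q^{-(L+s)}$ for overlaps at a period $s$, and then use independence across the $\lfloor W/(8q^L)\rfloor$ disjoint chunks. You use the inequality $\Pr\{N>0\}\ge(\mathbb E N)^2/\mathbb E N^2$ where the paper uses Chebyshev, and the aligned-sub-block detour is unnecessary. One small slip: $\mathbb E N^2$ sums over ordered pairs, so overlaps contribute up to $2\,\mathbb E N/(q-1)$, not $\mathbb E N$. Since $\mathbb E N=8-(L-1)q^{-L}\ge7$, you still get $\Pr\{N>0\}\ge\mathbb E N/(\mathbb E N+3)\ge7/10>1/2$, so the conclusion stands.
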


\noindent\emph{Proof.}
First take a block of $M=8q^L$ symbols, and let $Z$ count the occurrences
of the word at its $N\eqdef M-L+1$ possible starting positions. Its mean
is $\mu\eqdef Nq^{-L}\ge7$. Indicators at distance at least $L$ are
independent; at distance $1\le d<L$, their joint probability is either
zero or $q^{-(L+d)}$, according to whether the word permits that overlap.
Therefore
\[
 \operatorname{Var} Z
 \le \mu+2Nq^{-L}\sum_{d=1}^{L-1}q^{-d}
 \le \mu\left(1+\frac{2}{q-1}\right)\le3\mu.
\]
Chebyshev's inequality gives $\Pr\{Z=0\}\le3/\mu\le3/7<1/2$. Partition
the $W$ symbols into $\lfloor W/M\rfloor$ independent blocks of size $M$.
Absence from the whole window implies absence from every block, which
proves the first bound; the second follows by rounding. \hfill$\square$

\begin{theorem}[Smaller-order redundancy at a common phrase cap]
\label{thm:fixed-field-separation}
Let the source be uniform iid on $q=2^h$ symbols, where $h\ge1$ is an
integer. Let $W=2^B$, with $B$ a multiple of $h$, and give both encoders
the same history size $W$ and maximum phrase length
\begin{equation}
 L_B^{\mathrm{cap}}\eqdef\left\lfloor\frac{B-2\log\log B}{h}\right\rfloor.
 \label{eq:improved-phrase-cap}
\end{equation}
The baseline uses the original fixed-field triple format. WLZ uses exact
greedy parsing, supports all lengths from a fixed threshold $t$ through
$L_B^{\mathrm{cap}}$ in the full window, and may use smaller windows below
$t$. A WLZ cap token costs exactly $B+k_0$ bits for a constant $k_0$, and
all other token costs lie between zero and
$K_B=O(B+\log L_B^{\mathrm{cap}})$. Initialization and header costs are
$o(n)$ for fixed $B$. Then almost surely, for either the lower or the
upper large-file rate,
\begin{align}
 \Rrate_{77}(B)
 &=h+h\frac{\log B}{B}
       +2h\frac{\log\log B}{B}+O(B^{-1}),
 \label{eq:original-cap-expansion}\\
 \Rrate_W(B)
 &=h+2h\frac{\log\log B}{B}+O(B^{-1}).
 \label{eq:wzip-cap-expansion}
\end{align}
In particular, the comparison holds directly on the same input:
\begin{equation}
 \liminf_{n\to\infty}\frac{C_{77,B}(X_1^n)-C_{W,B}(X_1^n)}n
 \ge h\frac{\log B}{B}-O(B^{-1}).
 \label{eq:fixed-field-direct-gain}
\end{equation}
\end{theorem}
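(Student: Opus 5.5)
The plan is to show that, on a uniform iid source at this phrase cap, essentially every phrase of either encoder is a cap-length phrase. The rates then reduce to the cost of one cap token divided by the number of symbols it covers. The tool is Lemma~\ref{lem:uniform-word-hit}. The next $L\eqdef L_B^{\mathrm{cap}}$ symbols at position $i$ are independent of the preceding $W$ symbols, since they are disjoint blocks of an iid sequence. So the probability that they have no copy wholly inside the preceding window is at most
\[
2\exp\{-(\ln2)Wq^{-L}/8\}.
\]
Since $Wq^{-L}\ge2^{2\log\log B}=B^2$, this is at most $\eta_B\eqdef2\exp\{-(\ln 2)B^2/8\}$. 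Call $i$ a \emph{miss position} if no such copy exists. For fixed $B$, the indicator of a miss at $i>W$ is a fixed function of a window of the shifted process. The ergodic theorem then gives, almost surely, that the frequency of miss positions tends to a limit at most $\eta_B$. I intersect these events over the countably many $B$.

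Next I count phrases deterministically, so that the parser's data-dependent starts need no independence. In WLZ, every supported length from $t$ through $L$ has the full window, and exact greedy parsing is capped at $L$. Hence a phrase shorter than $L$ can start only at a miss position, within the first $W$ positions, or within the last $L$ positions. Each such phrase covers fewer than $L$ symbols. The cap phrases therefore cover $n-O(L\cdot\#\mathrm{miss}+W+L)$ symbols, and their number $M$ satisfies $M/n\to1/L+O(\eta_B)$ almost surely.

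The baseline triple format copies a longest prefix of length at most $L$ and appends one symbol. A triple covering fewer than $L+1$ symbols likewise starts only at a miss position or near the ends, so its triple count is $n/(L+1)+O(n\eta_B)$. Every triple costs exactly $F\eqdef B+\lceil\log(L+1)\rceil+h$ bits. This gives both the lower and the upper large-file rate $F/(L+1)+O(F\eta_B)$. For WLZ, the upper rate is at most $(B+k_0)/L+O(K_B\eta_B)+o(1)$. The lower rate is at least $(B+k_0)/L-O(B\eta_B)$, because each cap token costs exactly $B+k_0$ and the other tokens cost at least zero. The terms in $\eta_B$ are $O(B\,e^{-cB^2})=o(B^{-1})$.

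The expansions then follow from
\[
L=\frac{B-2\log\log B}{h}+O(1),\qquad
\log L=\log B-\log h+o(1).
\]
Writing $L+1=(B-2\log\log B+O(1))/h$ and inverting,
\[
\frac{B+\log B+O(1)}{L+1}
 =h\Bigl(1+\frac{\log B}{B}+\frac{2\log\log B}{B}+O(B^{-1})\Bigr),
\]
where cross terms such as $\log B\log\log B/B^2$ are absorbed into $O(B^{-1})$. The same computation gives
\[
\frac{B+k_0}{L}=h\Bigl(1+\frac{2\log\log B}{B}+O(B^{-1})\Bigr).
\]
For \eqref{eq:fixed-field-direct-gain}, the baseline rate has an almost-sure limit. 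Therefore the lower limit of the difference is at least the baseline limit minus the upper WLZ rate, and both are evaluated on the same realization.

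I expect the main obstacle to be the literal-fallback rule in the WLZ lower bound. If $bL<B+k_0$, a cap phrase might be coded as literals at cost below $B+k_0$. I would read the hypothesis ``a WLZ cap token costs exactly $B+k_0$'' as the cost of the emitted cap phrase. If fallback is permitted, I would instead use that a literal for a uniform $q$-ary symbol costs at least $h$. Then $\min\{B+k_0,bL\}\ge B-2\log\log B-O(1)$, which still yields the $2h\log\log B/B$ term only as an upper bound. So I would check that the statement intends the cap token to be emitted. A lesser point is the boundary effects: the first $W$ positions lack complete windows, and copies overlapping the current position are allowed. Both contribute $O(W+L)$ per input and vanish in the limit $n\to\infty$ at fixed $B$.
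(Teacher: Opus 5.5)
Your proposal is correct and follows the paper's proof essentially step for step. It uses bad positions controlled by Lemma~\ref{lem:uniform-word-hit} and the ergodic theorem, then a deterministic count in which every other phrase start away from the ends yields a cap phrase for both encoders, and then sandwich bounds on both large-file rates, combined for \eqref{eq:fixed-field-direct-gain} through $\liminf_n(a_n-b_n)\ge\liminf_n a_n-\limsup_n b_n$ (so you do not need the almost-sure limit of the baseline rate that you assert).

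Two slips are harmless. First, $2^{2\log\log B}=(\log B)^2$, not $B^2$, so $\eta_B\le 2B^{-(\log B)/8}$, which still decays faster than every inverse power of $B$. Second, the common cap $L$ of the original format includes its appended symbol, so its copies have length at most $L-1$ and its rate is the triple cost divided by $L$ rather than by $L+1$; this changes the expansion only by $O(B^{-1})$.
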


\noindent\emph{Proof.}
Write $L\eqdef L_B^{\mathrm{cap}}$, and call a source position bad if its
next $L$ symbols have no copy wholly contained in the preceding $W$
symbols. Conditional on these next symbols, the past is independent and
uniform, so Lemma~\ref{lem:uniform-word-hit} bounds the probability that a
position is bad by $\epsilon_B\eqdef\epsilon(W,L)$, and the ergodic
theorem bounds the almost-sure density of bad positions by $\epsilon_B$,
simultaneously for the countably many permitted $B$. Counting all bad
source positions bounds the number of bad phrase starts, so no
independence of the greedy phrase boundaries is needed.

At a start that is not bad and lies away from the end of the file, WLZ
emits a cap token. The original encoder also emits a phrase of the
maximum length, since it only needs to copy its first $L-1$ symbols. The
original codeword costs exactly
\[
 F_B\eqdef B+h\lceil\log_q L\rceil+h=B+\log L+O(1)
\]
bits, keeping the radix-$q$ fields of~\cite{LZ77}; for $q=2$ the cost is
$B+\lceil\log L\rceil+1$. If $N_{\rm bad}$ counts bad source positions,
each parser has at most $n/L+N_{\rm bad}+O(1)$ phrases. The original has
at least $n/L-O(1)$ phrases, and the number of WLZ cap tokens is between
$n/L-N_{\rm bad}-O(1)$ and $n/L$. Endpoint and initialization terms vanish
for fixed $B$. Thus almost surely
\begin{align*}
 \frac{F_B}{L}
 &\le\liminf_n\frac{C_{77,B}}n
 \le\limsup_n\frac{C_{77,B}}n
 \le\frac{F_B}{L}+F_B\epsilon_B,\\
 \frac{B+k_0}{L}-(B+k_0)\epsilon_B
 &\le\liminf_n\frac{C_{W,B}}n
 \le\limsup_n\frac{C_{W,B}}n
 \le\frac{B+k_0}{L}+K_B\epsilon_B.
\end{align*}
Now $Wq^{-L}\ge(\log B)^2$, so $B\epsilon_B$ decays faster than every
inverse power of $B$. Subtracting the WLZ upper bound from the baseline
lower bound, and expanding $L$, proves all the assertions.
\hspace*{\fill}$\square$

The cost condition has an explicit prefix-code realization. Give the cap
match token the codeword $0$, and code each other literal or match-length
symbol by $1$ followed by a fixed-width index over at most
$q+L_B^{\mathrm{cap}}$ alternatives, appending the distance field to a
match. Then $k_0=1$ and $K_B\le B+1+\lceil\log(q+L_B^{\mathrm{cap}})\rceil$;
constant format tags change only $k_0$. A Huffman code over the same token
alphabet is no longer in total than this code, so with $o(n)$ model
overhead and at most $B$ bits per distance it inherits the WLZ upper bound
and the strict gain~\eqref{eq:fixed-field-direct-gain}. The equality
\eqref{eq:wzip-cap-expansion} concerns the specified token costs; a better
entropy coder may compress further. The cap symbol must have no extra
length field whose cost grows with $B$. DEFLATE already gives its maximum
length 258 a distinct symbol with no extra length
bits~\cite[Section 3.2.5]{DEFLATE}; the theorem requires a growing cap and
the stated cost bound, not DEFLATE's fixed limits or an arbitrary choice
of Huffman tables.

The comparison extends, again almost surely, to a baseline that chooses
its own cap.

\begin{proposition}[Original-format lower bound for every cap]
\label{prop:original-any-cap}
Under the same iid source and history size, let the original
fixed-field triple encoder choose any finite phrase cap $J_B\ge1$.
Use exact full-window greedy copying followed by its mandatory final
symbol, with no entropy-coded raw-block fallback. Uniformly over these
cap choices, almost surely
\begin{equation}
 \liminf_{n\to\infty}\frac{C_{77,B,J_B}(X_1^n)}n
 \ge h+h\frac{\log B}{B}-O(B^{-1}),
 \label{eq:original-any-cap}
\end{equation}
and the same bound holds for $\liminf_n\mathbb E C_{77,B,J_B}(X_1^n)/n$.
Consequently the WLZ rate in~\eqref{eq:wzip-cap-expansion}, which also
holds in expectation, has smaller-order redundancy than this original
format, almost surely and in expectation, even when the baseline chooses
its own cap.
\end{proposition}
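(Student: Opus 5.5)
The plan is to use the fact that, in the original triple format, every phrase has the \emph{same} cost
\[
F_{B,J}\eqdef B+h\lceil\log_q J_B\rceil+h\ \ge\ B+\log J_B+h .
\]
So the rate is $F_{B,J}$ times the number of phrases $P$, divided by $n$. Because the phrase lengths sum to $n$, it suffices to show that the \emph{average} phrase length is almost surely at most $\min\{J_B,B/h\}+c$ for a constant $c$ independent of $B$ and $J_B$. The bound then follows from the elementary inequality
\[
\frac{B+\log J+h}{\min\{J,B/h\}+c}\ \ge\ h+h\frac{\log B}{B}-O(B^{-1}),
\]
uniformly in $J\ge1$. I would check this in two cases. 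If $J\ge B/h$, the left side is at least $(B+\log(B/h))/(B/h+c)$, which expands to the right side. If $J<B/h$, the map $J\mapsto(B+\log J+h)/(J+c)$ is nonincreasing on $[1,B/h]$ for large $B$, so it is again at least its value at $J=B/h$.

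The heart of the argument is the average-length bound, and the step I expect to be delicate is avoiding the weak union bound over \emph{all} positions. A positional count only gives a density $q^{-k}$ of positions whose longest match exceeds $B/h+k$. Compared with the phrase density $h/B$, that allows an excess of order $\log B$ per phrase, which would destroy the $\log B/B$ term. I would avoid this by observing that greedy phrase starts are stopping times. In the triple format the phrase starting at $i$ ends with its mandatory final symbol, so the next start is determined by $x_1^{i'-1}$ alone. Conditional on $\mathcal F_{i-1}=\sigma(X_1^{i-1})$ and on $i$ being a start, the future $X_i,X_{i+1},\ldots$ is therefore still uniform iid. A union bound over the at most $W$ distances then gives
\[
\Pr\{\text{a copy of length }\ge B/h+k\text{ exists at }i\mid\mathcal F_{i-1}\}\le Wq^{-(B/h+k)}=q^{-k}.
\]
This uses $W=q^{B/h}$. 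For an overlapping distance $d<\ell$, the event forces the word to be $d$-periodic given the past, which still has probability at most $q^{-\ell}$, so overlaps need no special treatment. Since a phrase has length at most its copy length plus one and at most $J_B$, this yields $\mathbb E[\ell_i\mid\mathcal F_{i-1}]\le\min\{J_B,B/h\}+c$ with $c=2+\sum_k q^{-k}$.

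Next I would pass from conditional expectations to the empirical average. Enumerate the phrases as $j=1,2,\ldots$ and consider the martingale differences $\ell_j-\mathbb E[\ell_j\mid\mathcal F_{s_j-1}]$, where $s_j$ is the start of phrase $j$. These have uniformly bounded conditional second moments, because of the geometric tail above together with the cap $J_B$. A martingale strong law then gives $\limsup_n n/P(n)\le\min\{J_B,B/h\}+c$ almost surely, where $P(n)$ counts phrases in $X_1^n$ and the truncated final phrase contributes only $O(1)$. This settles the almost-sure statement, and the constants do not depend on $J_B$, which gives the required uniformity over cap choices.

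For the expectation version, I would use optional stopping (Wald's identity) on the same martingale: $\mathbb E\sum_{j\le P(n)}\ell_j\le(\min\{J_B,B/h\}+c)\,\mathbb E P(n)+O(1)$. Since the left side is at least $n$, and $C_{77,B,J_B}=F_{B,J}P(n)$ exactly, this gives the same lower bound on $\liminf_n\mathbb E C_{77,B,J_B}/n$. Finally, comparing with~\eqref{eq:wzip-cap-expansion} shows that WLZ has smaller-order redundancy. The expectation form of that expansion follows from the bounded-density argument in the proof of Theorem~\ref{thm:fixed-field-separation}, via bounded convergence of the per-symbol cost, which is at most $K_B$.
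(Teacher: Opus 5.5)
Your proposal is correct and follows essentially the same route as the paper: phrase starts are stopping times because the mandatory final symbol consumes the mismatch, and a union bound over the $W$ distances on the fresh iid future gives a conditional mean phrase length of at most $\min\{J_B,B/h+O(1)\}$. As in the paper, a martingale argument gives the almost-sure phrase count (the paper uses Azuma--Hoeffding with increments bounded by $J_B$ plus Borel--Cantelli, you use a second-moment martingale strong law), a Wald-type count gives the expected one, and the constant per-phrase cost $F_{B,J_B}$ finishes the proof, with uniformity in $J_B$ obtained by the paper's split at $J_B<B/(2h)$ (rate above $2h$) where you use monotonicity of $(B+\log J+h)/(J+c)$ on $[1,B/h]$.
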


\noindent\emph{Proof.}
At the start of an original phrase, the unrevealed input is fresh iid:
the previous phrase consumed its first mismatching symbol or stopped at
its cap. Formally, the position just before each phrase start is a
stopping time for the natural filtration of prefixes. Conditional on that
prefix, a copy of length $r$ at any specified distance has probability
$q^{-r}$, overlapping copies included. A union bound over the $W$
distances, followed by summation of tails, bounds the conditional mean
phrase length by
\[
 \mu_B\eqdef B/h+1+(q-1)^{-1}.
\]
It is also bounded by $J_B$; put $m_B\eqdef\min\{J_B,\mu_B\}$. Applying
these conditional bounds to successive phrases that start in a finite
file gives $\mathbb E N\ge(n-O(W))/m_B$, up to a terminal term that
vanishes for fixed $B,J_B$. This uses conditional expectations, not
independence of phrase lengths.

For the almost-sure bound, parse the infinite sequence $X_1^\infty$ in the
same way. Let $L_j$ be the length of its $j$th phrase and $\mathcal F_j$
the $\sigma$-field of the input before that phrase. Then $1\le L_j\le J_B$
and $\mathbb E[L_j\mid\mathcal F_j]\le m_B$, so the partial sums of
$L_j-\mathbb E[L_j\mid\mathcal F_j]$ form a martingale with increments
bounded by $J_B$. The Azuma--Hoeffding inequality and the Borel--Cantelli
lemma give $\limsup_k k^{-1}\sum_{j\le k}L_j\le m_B$ almost surely. Each
phrase decision depends only on the history and the next $J_B$ symbols,
so the parse of $X_1^n$ agrees with this parse in its $k_n$ phrases that
start at or before $n-J_B$, which cover at least $n-J_B$ symbols. Hence
$\liminf_n N/n\ge\liminf_n k_n/n\ge1/m_B$ almost surely.

Each phrase costs $F_{B,J_B}\eqdef B+h\lceil\log_q J_B\rceil+h$ bits, so
the large-file rate is at least $F_{B,J_B}/m_B$, almost surely and in
expectation. If $J_B<B/(2h)$, this exceeds $2h$. Otherwise
$F_{B,J_B}\ge B+\log B-O(1)$, and division by $\mu_B$ gives
\eqref{eq:original-any-cap}, with constants independent of $J_B$. The WLZ
bound in expectation follows by counting bad positions in expectation in
the preceding proof.
\hfill$\square$

In terms of $W$, the WLZ redundancy is $O(\log\log\log W/\log W)$,
against $\Omega(\log\log W/\log W)$ for the original format on this
source. The separation is caused by the length code, not by the smaller
windows. Modified LZ constructions attain $O(1/B)$ redundancy under
finite-memory assumptions~\cite{WZIPWynerWyner1995}; the present result is
a direct comparison with the original format, not a claim of the best
redundancy order.

\subsection{Runs of one symbol}
\label{sec:run-escape}
Long runs of one symbol exceed any match cap of order $\log W$. This
subsection calibrates the cap, adds a token for such runs, and shows what
it gains.

The cap can be calibrated to the recurrence scale. At a fixed position of
a two-sided finite-alphabet stationary ergodic source with entropy rate
$h>0$, let $\Lambda_W$ be the longest match available at a backward
distance of at most $W$. Inverting the recurrence law
$\log_2 R_m/m\to h$ gives
\begin{equation}
 \frac{\Lambda_W}{\log_2 W}\longrightarrow\frac1h
 \quad\text{almost surely},
 \label{eq:ordinary-match-scale}
\end{equation}
the classical consequence of the Ornstein--Weiss theorem for match
lengths~\cite{WZIPOrnsteinWeiss1993,WZIPJainBansal2014}. Thus
$L_{\rm ord}(W)\asymp(\log_2 W)/h$ is a natural ordinary cap for a fixed
source of positive entropy, and the cap of
Theorem~\ref{thm:fixed-field-separation} approaches this scale from
below. The constant depends on the source, so no single cap is optimal
for all data. With $W=2^B$, a fixed-width length field over a range of
order $B/h$ needs $\log_2 B+O(1)$ bits for fixed $h$, whereas a
full-window distance field needs $B$ bits.

For uniform iid data on $q\ge2$ symbols, with $h=\log_2 q$, there is also
the elementary bound at a fixed position
\begin{equation}
 \Pr\{\Lambda_W\ge m\}\le\min\{1,Wq^{-m}\},\qquad m\ge1.
 \label{eq:ordinary-match-tail}
\end{equation}
At a fixed distance $d$, the match constraints leave $d$ free symbols
among $m+d$ iid symbols, so a match has probability $q^{-m}$, also with
overlap, and a union bound over the $W$ distances proves the bound.
Consequently a match of length $\lceil(1+\epsilon)B/h\rceil$ has
probability at most $W^{-\epsilon}$ at a fixed position, for any
$\epsilon>0$. This concerns a fixed position, not the longest match in a
file, and~\eqref{eq:ordinary-match-scale} does not apply to zero-entropy
data.

In the run-enabled format, the length symbols are
$1,2,3,\ldots,L_{\rm ord}$. Symbol $1$ announces a run: the pair $(r,1)$
is coded as $(1,r)$, meaning $r$ further copies of the last symbol at the
implicit distance one. Its second field is a variable-length count,
which may exceed $L_{\rm ord}$; the symbol $1$ is a selector, not a phrase
of length one. Symbols $\ell\ge2$ are ordinary match lengths, followed by
distances in the window $W(\ell)$, and all other repetitions use these
ordinary matches. A run of length $u\ge2$ that starts afresh needs a seed
literal and the count $u-1$; a single symbol needs only its literal. Let
$g(r)$ be the full cost of the selector and the count, with unambiguous
decoding.

A concrete prefix code uses the Elias gamma code $\gamma$:
\begin{equation}
 \begin{array}{ll}
 \text{binary literal }a: & 0a,\\
 \text{ordinary match }(\ell,d): &
       1\mid\gamma(\ell)\mid\operatorname{bin}_B(d-1),
       \quad 2\le\ell\le L_{\rm ord}(B),\\
 \text{run of }r\text{ additional copies}: &
       1\mid\gamma(1)\mid\gamma(r)=11\mid\gamma(r).
 \end{array}
 \label{eq:run-prefix-code}
\end{equation}
Since $\gamma(1)=1$ and $\gamma(\ell)$ starts with zero for $\ell\ge2$,
the ordinary code leaves the whole $11$ branch unused, so adding runs
keeps every ordinary codeword and its cost. Replacing a supported
distance-one match, $2\le r\le L_{\rm ord}(B)$, by the run token keeps its
expansion and saves exactly
\begin{equation}
 [B+1+|\gamma(r)|]-[2+|\gamma(r)|]=B-1
 \label{eq:run-swap-saving}
\end{equation}
bits, which is positive for $B>1$; counts beyond the ordinary cap give
the further benefit of longer expansions. The literal cost is $b=2$,
$a(\ell)=1+|\gamma(\ell)|$, and $g(r)=2+|\gamma(r)|\le3+2\log_2(r+1)$,
uniformly in $B$. The identity relies on this construction with
fixed-width distances: an arbitrary Huffman tree need not have unused
prefix space, refitting its codewords must pay for the changed costs, and
a byte-oriented implementation must be evaluated with its actual $g$.

\begin{proposition}[Coding by the number of runs]
\label{prop:run-count}
Let $x_1^n$ have $k$ maximal constant-symbol runs, of lengths
$u_1,\ldots,u_k$. With literal cost at most $\lambda$ and framing cost
$\Gamma_n$, there is a run parse of cost
\begin{equation}
 C_{\rm run}(x_1^n)\le \Gamma_n+\lambda k+
       \sum_{i:u_i\ge2}g(u_i-1).
 \label{eq:run-general-cost}
\end{equation}
If, for all $r\ge1$, $g(r)\le A+D\log_2(r+1)$ with constants
$A,D\ge0$, then
\begin{equation}
 C_{\rm run}(x_1^n)\le \Gamma_n+(\lambda+A)k+
                 Dk\log_2(n/k).
 \label{eq:run-count-cost}
\end{equation}
Both bounds also hold for the minimum-cost encoder admitting this parse.
\end{proposition}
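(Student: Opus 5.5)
The plan is to exhibit the parse explicitly and bound its cost run by run. For each maximal run $x_{s_i}^{s_i+u_i-1}$ of the symbol $c_i$, emit one literal for $x_{s_i}=c_i$. If $u_i\ge2$, follow it with the run token of the format above, coding $r=u_i-1$ further copies of the last decoded symbol at implicit distance one. The run token has no cap, so a single token suffices for the whole remainder of the run, however long. Decoding proceeds left to right: after the literal, the last decoded symbol is $c_i$, and the token expands to exactly $u_i-1$ copies of it. Hence the concatenation of these tokens, preceded by the framing of cost $\Gamma_n$, decodes $x_1^n$. Since the selector and count are coded unambiguously with cost $g(r)$, the code remains prefix-free. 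Each run contributes at most $\lambda+g(u_i-1)\mathbf 1\{u_i\ge2\}$ bits, and summing over the $k$ runs gives~\eqref{eq:run-general-cost}.

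For~\eqref{eq:run-count-cost}, I will substitute $g(u_i-1)\le A+D\log_2 u_i$ for the runs with $u_i\ge2$. I will then enlarge the sum to all $k$ runs, which is harmless because $\log_2 u_i\ge0$ and $A\ge0$. This leaves the term $D\sum_{i=1}^k\log_2u_i$. Since $\sum_iu_i=n$, Jensen's inequality for the concave logarithm, or equivalently AM--GM, gives $\sum_i\log_2u_i\le k\log_2(n/k)$. Collecting terms yields $\Gamma_n+(\lambda+A)k+Dk\log_2(n/k)$.

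The final assertion holds because the minimum-cost encoder ranges over a set of parses that includes this one. Its cost is therefore no larger, with the framing the same. The only step that needs care is that the framing already absorbs the first literal's context and that the run token refers to the immediately preceding symbol. Beyond that, I expect no real obstacle; the Jensen step is the only inequality involved.
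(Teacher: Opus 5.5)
Your proposal is correct and follows the paper's proof: you code each run by a seed literal plus one uncapped run token for the remaining count $u_i-1$, and then apply concavity of the logarithm with $\sum_i u_i=n$ to get $\sum_i\log_2u_i\le k\log_2(n/k)$. Your extra remarks (extending the sum to runs with $u_i=1$ because $A+D\log_2 1\ge0$, and that the minimum-cost encoder can only do better since this parse is feasible) are exactly the small details the paper's short proof leaves implicit.
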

\noindent\emph{Proof.}
Code each run by its seed and its remaining count; a single symbol needs
no count. Since $\sum_i u_i=n$, the concavity of the logarithm gives
$\sum_i\log_2 u_i\le k\log_2(n/k)$. \hfill$\square$

The proposition gives a further gain over the original fixed-field
format. Keep the phrase cap $L=L_B^{\mathrm{cap}}=\Theta(B)$ of
Theorem~\ref{thm:fixed-field-separation}, with $B=\log_2 W$. Each original
phrase costs $F_B=B+h\lceil\log_q L\rceil+h=\Theta(B)$ bits and emits at
most $L$ source symbols, its mandatory final symbol included. After at
most $O(W)$ initialization symbols, the number of phrases is at least
$(n-O(W))/L-O(1)$. Hence, when $W\to\infty$ and $W=o(n)$,
$C_{77,B}\ge F_B((n-O(W))/L-O(1))=\Omega(n)$, with all initialization and
framing costs included; this holds on every input, without an iid
assumption. With logarithmic count costs, $\Gamma_n=O(\log n)$, and
$k=O(n^\alpha)$ for fixed $0\le\alpha<1$, it follows that
\begin{equation}
 C_{77,B}(x_1^n)=\Omega(n),\qquad
 C_{\rm run}(x_1^n)=O(n^\alpha\log n)=o(n).
 \label{eq:run-original-boost}
\end{equation}
For example, $W=2^{h\lfloor\rho\log_2 n/h\rfloor}$ with fixed $0<\rho<1$
is an admissible joint scaling. The original cap includes its final
symbol, so the comparison counts phrase expansions consistently.

The run representation can be added to a complete ordinary code without
losing its guarantees. For complete prefix codes $C_0$ and $C_{\rm run}$ on
the same inputs, prefix their codewords by a one-bit selector and choose
the shorter branch. The resulting complete code satisfies
\begin{equation}
 C_{\rm aug}(x_1^n)
 \eqdef1+\min\{C_0(x_1^n),C_{\rm run}(x_1^n)\},
 \label{eq:run-augmentation}
\end{equation}
with all decoder parameters shared or included in the branch costs. With
$C_0=C_{77,B}$, this keeps the original code within one bit and
attains~\eqref{eq:run-original-boost} on inputs with few runs. With the
capped WLZ code as $C_0$, it keeps the uniform-iid bound of
Theorem~\ref{thm:fixed-field-separation} and adds this guarantee for few
runs. Unlike the periodic Huffman comparison, the run branch changes the
available phrase expansions, so its gain depends on the ordinary cap: the
original format can also code a constant input in $O(\log n)$ bits if it
may use a phrase cap as large as the input and a fixed window.

More generally, suppose every output-producing token of a baseline format
expands to at most $L_n$ symbols and costs at least a fixed $c>0$ bits.
Every baseline parse then satisfies
\begin{equation}
 C_{\rm cap}(x_1^n)\ge c\lceil n/L_n\rceil.
 \label{eq:run-cap-converse}
\end{equation}
Under logarithmic count costs, if $\Gamma_n=O(\log n)$, $k=O(n^\alpha)$,
and $L_n=O(n^\delta)$, with $\alpha,\delta\ge0$ and $\alpha+\delta<1$, then
\begin{equation}
 \frac{C_{\rm run}(x_1^n)}n
     =O\!\left(n^{-(1-\alpha)}\log n\right),\qquad
 \frac{C_{\rm cap}(x_1^n)}n=\Omega(n^{-\delta}).
 \label{eq:run-exponent-separation}
\end{equation}
The run code thus has a strictly larger decay exponent, up to the
logarithm. For logarithmic caps $L_n=O(\log W_n)$ with
$2\le W_n\le n^\eta$ and fixed $\eta>0$, \eqref{eq:run-cap-converse} gives
$\Omega(n/\log n)$ bits, more than the run code's $O(n^\alpha\log n)$ for
every fixed $\alpha<1$. On a constant input with a fixed cap $L$, the
corresponding rates are $O(\log n/n)$ and at least $c/L$. Unrestricted
overlapping LZ77 with a logarithmic length code also uses $O(\log n)$
bits on that input: a fixed history alone does not cap the expansion of a
match.

The rule for extending the count decides which conclusion applies.
Positional varints and universal integer codes satisfy the logarithmic
hypothesis, whereas one extension byte per additional 255 symbols, as in
the LZ4 match-length format~\cite{LZ4}, costs $\Theta(r)$ bits. More
generally, if $g(r)/r\to\gamma$ and $\Gamma_n=o(n)$, the parse with a
single run has limiting rate $\gamma$ on a constant input, which still
improves strictly on the capped format whenever $\gamma<c/L$; for one byte
per $K$ additional symbols with bounded other overhead, $\gamma=8/K$. A
fixed cap on run counts, or compulsory splitting into fixed-size blocks,
must be included in~\eqref{eq:run-general-cost}, and then the conclusion
\eqref{eq:run-exponent-separation} for unbounded counts does not hold.

The good runs in the proof of Theorem~\ref{thm:strict-polynomial-saving}
also give a run-token analogue on that source: if a decodable
augmentation keeps every baseline token at its original cost and admits
run tokens whose count cost is at most $\kappa r+O(\log(r+1))$ for a
fixed $0\le\kappa<b$, as~\eqref{eq:run-prefix-code} does with $\kappa=0$,
then coding each literal middle interval, which follows a one, as a run
saves $(b-\kappa)k_B-O(\log B)=\Theta(B)$ bits, so the minimum-cost rate
again falls by $\Omega((\log W)W^{-\beta})$ almost surely, under any
common ordinary cap. Like that theorem, this improves a polynomial
correction rather than the leading redundancy order, and the few-run
condition in~\eqref{eq:run-exponent-separation} describes finite-input
classes, since for a stationary ergodic source $k/n\to\Pr\{X_0\ne X_1\}$
and sublinear run counts force a constant sequence.

Dedicated run tokens have precedents. Pucrunch combines capped LZ77
matches with longer run-length tokens~\cite{WLZPucrunch}, and the Linux
LZO-RLE format reinterprets a reserved distance pattern as a zero-run
instruction, aimed at data with many zeros such as zram
workloads~\cite{WLZLZORLE}; Pucrunch also restricts two-byte matches to
shorter distances than longer matches. The results here quantify the
reserved-length variant under explicit cost and parsing assumptions.
WZIP~\cite{WZIPWuCodecs2026} implements it in WZIP\_L, the variant used for
inputs of 32\,KB or more~\cite{WZIPCode2026}: a reserved length symbol
announces a run of copies of the preceding byte, and the run's count, not
bound by the match cap, is coded in the offset field with that field's
Huffman code and raw extra bits. Runs leave the offset cache unchanged,
and consecutive runs merge into one token.

\section{Window Selection and Entropy Estimation}
\label{sec:finite-data}
The previous sections fix the window or let it grow. This section chooses
it from the data. Subsection~\ref{sec:window-objective} states the
objective for a given input, and Subsection~\ref{sec:entropy-estimation}
turns window selection into an entropy-rate estimator.

\subsection{Choosing the windows for a given input}
\label{sec:window-objective}
For a given input $x_1^n$, let $C_n(\boldsymbol W;x)$ be the complete
code length under a specified parser and format, where
$\boldsymbol W$ lists the windows of the length classes. The compression
objective under a memory budget $M$ is
\begin{equation}
 \min_{\boldsymbol W:\,\mathsf M(\boldsymbol W)\le M}
 C_n(\boldsymbol W;x),\qquad
 \Rrate_n\eqdef\frac{C_n}{n},\qquad \mathrm{CR}_n\eqdef\frac{b_{\rm raw}}{\Rrate_n}.
 \label{eq:finite-objective}
\end{equation}
Here $\Rrate_n$ is in bits per input symbol, $b_{\rm raw}$ is the raw
storage cost per symbol (8 for bytes), and $\mathrm{CR}_n$ is the raw size
divided by the compressed size. The memory includes the indexes of
Section~\ref{sec:wzip-theory}, and a time constraint can be added. The
code length $C_n$ includes headers, schedules, dictionaries not already
shared with the decoder, and entropy models. The best schedule generally
depends on the input and the coding model, not on the length $n$ alone.
Practical defaults, such as zlib's 32\,KiB history, the largest DEFLATE
permits, and Zstandard's parameters by level and input
size~\cite{WZIPZlibManual,DEFLATE,WZIPZstdAPI2025}, are not chosen to
minimize this objective for each input; adaptive choice of the LZ77
history also has theoretical precedents for nonstationary
sources~\cite{WZIPReifStorer2001}.

If enlarging a window only adds admissible matches and keeps every
existing token cost, the minimum-bit cost cannot increase, since the old
parse remains feasible. An optimal window strictly inside the range
therefore requires some other effect, such as distance fields that widen
with the window, model overhead, or a resource constraint; a heuristic
search budget can also change the parse. In particular, the fixed-width
cost $\log W$ and the variable distance costs of a modern format are
different optimization models.

Selecting among $K$ candidate codes known to the decoder costs little:
sending the index of the shortest in $\lceil\log K\rceil$ bits, followed
by its codeword, is within $\lceil\log K\rceil$ bits of the best candidate
on every input. This is standard two-part coding. For the grid
$W\in\{1,2,4,\ldots,2^{\lfloor\log n\rfloor}\}$, selection costs
$O(\log\log n)$ bits. For $J$ nondecreasing integer widths between $0$ and
$B$, there are $\binom{B+J}{J}$ schedules, so a shared enumeration costs
at most $\lceil\log\binom{B+J}{J}\rceil$ bits. The guarantee requires
evaluating the candidates on the whole input; selecting from a sample
needs a separate argument.

\subsection{Entropy estimation}
\label{sec:entropy-estimation}
The target is the entropy rate $h\eqdef\lim_{n\to\infty}\mathsf H(P_n)/n$
of an unknown finite-alphabet stationary ergodic source, observed through
one sequence $X_1^n$. Estimating it by compression is well
established~\cite{WZIPKontoyiannis1998,WZIPGao2008}. Here, window
selection turns the iterated limit of
Theorem~\ref{thm:greedy-universality}, first in $n$ and then in $B$, into
one consistent estimator along the observed sequence, with explicit
selection and finite-data costs.

Fix one candidate family with parsing convention
$\mathrm{enc}\in\{\mathrm{greedy},*\}$, meaning exact greedy or
minimum-bit parsing; Appendix~\ref{app:exact-parsing} bounds the time to
compute the latter. For each $B\ge1$, let $C^{\mathrm{enc}}_{B,n}(x_1^n)$
be its complete binary prefix-code length, with maximum history $2^B$ and
the hypotheses of Theorem~\ref{thm:greedy-universality}. The decoder knows
$n$ and the candidate family, and $C^{\mathrm{enc}}_{B,n}$ includes the
initialization and all other required metadata; the parsing convention is
fixed in advance and needs no selector bit. These are actual code lengths,
not idealized token costs. Let $b_{\rm raw}$ be the cost per symbol of a
valid fixed-width raw representation. For $n\ge2$, choose deterministic
integers
\[
 1\le K_n\le\lfloor\log n\rfloor,\qquad K_n\longrightarrow\infty,
 \qquad d_n\eqdef\lceil\log K_n\rceil;
\]
the simplest choice is $K_n=\lfloor\log n\rfloor$. Define
\begin{equation}
 \begin{split}
 \mathcal L_n(x_1^n)
  &\eqdef1+\min\left\{nb_{\rm raw},\;
       d_n+\min_{1\le B\le K_n}C^{\mathrm{enc}}_{B,n}(x_1^n)\right\},\\
 \widehat h_n&\eqdef\mathcal L_n(X_1^n)/n.
 \end{split}
 \label{eq:wlz-entropy-estimator}
\end{equation}
The code sends a one-bit raw/compressed flag; in the compressed branch it
sends a minimizing $B$ in $d_n$ bits, with a deterministic rule for ties,
followed by the WLZ codeword. The selected window in that branch is
$\widehat W_n\eqdef2^{\widehat B_n}$, and the raw branch guarantees
$\mathcal L_n\le nb_{\rm raw}+1$. If $n$ is not shared, its
self-delimiting description adds $O(\log(n+1))$ bits to all bounds below.

\begin{theorem}[Consistency under window selection]
\label{thm:wlz-entropy-consistency}
Assume the hypotheses of Theorem~\ref{thm:greedy-universality}:
all lengths from $t_B=o(B)$ through $W=2^B$ use the full window,
fixed match costs are at most $B+a_0+a_1\log(\ell+1)$ uniformly in $B$,
and complete initialization and header costs are $o(n)$ for every
fixed $B$. A cap at $W$ is permitted. Exact greedy candidates use the
literal-fallback rule.
Then, for every finite-alphabet stationary ergodic source,
\begin{equation}
 \widehat h_n\longrightarrow h\quad\text{almost surely},
 \qquad \mathbb E|\widehat h_n-h|\longrightarrow0.
 \label{eq:wlz-entropy-consistency}
\end{equation}
Selection costs at most $1+d_n=O(\log\log n)$ bits beyond the best
available WLZ candidate.
\end{theorem}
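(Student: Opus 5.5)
The plan is to prove almost-sure convergence by a lower and an upper bound, and then to get $L^1$ convergence from a uniform bound by the raw branch.

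\emph{Lower bound.} The map $x_1^n\mapsto$ (flag, branch codeword) is a prefix code on $\Sigma^n$ with length exactly $\mathcal L_n$. Kraft's inequality gives
\[
 \Pr\{\mathcal L_n(X_1^n)<I_n-2\log n\}\le n^{-2},
\]
with $I_n\eqdef-\log P_n(X_1^n)$. Borel--Cantelli and the Shannon--McMillan--Breiman theorem then yield $\liminf_n\widehat h_n\ge h$ almost surely. This is the same argument as the converse in Theorem~\ref{thm:greedy-universality}, now applied to the selected code as a whole.

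\emph{Upper bound.} Fix $\epsilon>0$. By~\eqref{eq:greedy-universality} there is a $B_0$ with
\[
 \limsup_n C^{\mathrm{enc}}_{B_0,n}(X_1^n)/n\le h+\epsilon
\]
almost surely. For the minimum-bit convention, use~\eqref{eq:ideal-universality}, or the fact that $C^*\le C^{\rm greedy}$ in the same format. Since $K_n\to\infty$, we have $B_0\le K_n$ eventually. Hence
\[
 \widehat h_n\le \frac{1+d_n+C^{\mathrm{enc}}_{B_0,n}}{n},
\]
and $d_n=O(\log\log n)=o(n)$. Therefore $\limsup_n\widehat h_n\le h+\epsilon$. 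Intersecting over rational $\epsilon$ gives almost-sure convergence.

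The main obstacle I expect is making sure the iterated limit is not needed uniformly. A single fixed $B_0$ must suffice eventually, and the candidate code $C_{B_0,n}$ must not depend on $n$ in any way that breaks the theorem's hypotheses. One concern is that the header depends on $K_n$; it does not, because $d_n$ is counted separately and each $C_{B,n}$ is the fixed-$B$ code of Theorem~\ref{thm:greedy-universality}. A second concern is that the almost-sure events for the countably many $B_0$ must be intersected; this is fine since there are countably many.

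\emph{Mean convergence.} From the raw branch, $0\le\widehat h_n\le b_{\rm raw}+1/n$, so $|\widehat h_n-h|$ is uniformly bounded. Dominated convergence then turns almost-sure convergence into $\mathbb E|\widehat h_n-h|\to0$. For the selection-cost statement, the compressed branch costs exactly $1+d_n$ bits beyond $\min_{B\le K_n}C_{B,n}$, and $d_n\le\lceil\log\log n\rceil$.
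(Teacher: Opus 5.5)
Your proposal is correct and follows the paper's proof essentially step for step. The lower bound uses Kraft's inequality with $s=2\log n$, then Borel--Cantelli and Shannon--McMillan--Breiman. The upper bound uses a fixed window that eventually enters the candidate list, together with $d_n=o(n)$ and Theorem~\ref{thm:greedy-universality}. Convergence in mean follows by dominated convergence from the raw-branch bound $\widehat h_n\le b_{\rm raw}+1/n$. Your $B_0$ can be chosen pointwise on the probability-one event, as the paper's $\inf_B$ implicitly does, or deterministically from the bound~\eqref{eq:greedy-recurrence}, so either reading of ``there is a $B_0$ almost surely'' is fine.
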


\noindent\emph{Proof.}
The flag and the fixed-width index make $\mathcal L_n$ a prefix code for
each known $n$. Every fixed $B$ is eventually among the candidates, so
\[
 \limsup_{n\to\infty}\widehat h_n
 \le\inf_{B\ge1}\limsup_{n\to\infty}\frac{C^{\mathrm{enc}}_{B,n}(X_1^n)}n
 \le h
 \quad\text{almost surely},
\]
by Theorem~\ref{thm:greedy-universality} and $d_n=o(n)$.
For the reverse inequality put $I_n\eqdef-\log P_n(X_1^n)$. Kraft's
inequality gives, for every $s>0$,
\begin{equation}
 \Pr\{\mathcal L_n(X_1^n)<I_n-s\}
 \le 2^{-s}\sum_{x_1^n}2^{-\mathcal L_n(x_1^n)}
 \le2^{-s}.
 \label{eq:wlz-kraft-tail}
\end{equation}
Taking $s=2\log n$, the Borel--Cantelli lemma and the
Shannon--McMillan--Breiman theorem give $\liminf_n\widehat h_n\ge h$
almost surely. Finally $0\le\widehat h_n\le b_{\rm raw}+1/n$, and
dominated convergence proves convergence in mean absolute
error.\hfill$\square$

\begin{corollary}[Finite-data redundancy and estimation error]
\label{cor:wlz-entropy-error}
The expected redundancies
$r_{B,n}\eqdef\mathbb E C^{\mathrm{enc}}_{B,n}/n-h$ and
$\rho_n\eqdef\mathbb E\widehat h_n-h$ satisfy
\begin{equation}
 0\le\rho_n\le\frac1n+
 \min\left\{b_{\rm raw}-h,\;
       \frac{d_n}{n}+\min_{1\le B\le K_n}r_{B,n}\right\}.
 \label{eq:wlz-entropy-oracle}
\end{equation}
Writing $u_+\eqdef\max\{u,0\}$, the absolute estimation error satisfies
\begin{equation}
 \mathbb E|\widehat h_n-h|
 \le \rho_n+2\mathbb E(h-I_n/n)_+ +\frac{2}{n\ln2}.
 \label{eq:wlz-entropy-error}
\end{equation}
In particular, for an iid source with single-symbol law $p$ and
$\sigma^2\eqdef\operatorname{Var}[-\log p(X_1)]$,
\begin{equation}
 \mathbb E|\widehat h_n-h|
 \le\rho_n+\frac{\sigma}{\sqrt n}+\frac{2}{n\ln2}.
 \label{eq:wlz-iid-entropy-error}
\end{equation}
For uniform iid data, $\sigma=0$.
\end{corollary}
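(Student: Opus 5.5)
The plan is to prove the three displays in order, using only the definition \eqref{eq:wlz-entropy-estimator}, Kraft's inequality, and the Shannon lower bound.

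\emph{Oracle bound \eqref{eq:wlz-entropy-oracle}.} The minimum inside $\mathcal L_n$ is at most each of its branches, so taking expectations gives
\[
 \mathbb E\mathcal L_n\le1+\min\Bigl\{nb_{\rm raw},\,d_n+\min_{B\le K_n}\mathbb E C^{\mathrm{enc}}_{B,n}\Bigr\}.
\]
Here the expectation of a minimum is at most the minimum of the expectations. Dividing by $n$ and subtracting $h$ gives the upper bound. For the lower bound $\rho_n\ge0$, I would use that $\mathcal L_n$ is a prefix code for known $n$. Hence $\mathbb E\mathcal L_n\ge\mathsf H(X_1^n)$, and subadditivity together with stationarity gives $\mathsf H(X_1^n)\ge nh$.

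\emph{Absolute error \eqref{eq:wlz-entropy-error}.} I would use $|u|=u+2u_-$ with $u\eqdef\widehat h_n-h$, so that
\[
 \mathbb E|u|=\rho_n+2\mathbb E(h-\widehat h_n)_+.
\]
Writing $\widehat h_n=I_n/n+(\mathcal L_n-I_n)/n$ and using $(a+b)_+\le a_++b_+$ gives
\[
 (h-\widehat h_n)_+\le(h-I_n/n)_++(I_n-\mathcal L_n)_+/n.
\]
The last term is controlled by the Kraft tail \eqref{eq:wlz-kraft-tail}, which is the main step:
\[
 \mathbb E(I_n-\mathcal L_n)_+=\int_0^\infty\Pr\{\mathcal L_n<I_n-s\}\,ds\le\int_0^\infty2^{-s}\,ds=\frac1{\ln2}.
\]
Multiplying by $2/n$ produces the $2/(n\ln2)$ term. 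I expect no real obstacle here; the only point to check is that \eqref{eq:wlz-kraft-tail} holds for every real $s>0$, which its derivation gives.

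\emph{The iid case \eqref{eq:wlz-iid-entropy-error}.} For an iid source, $I_n/n$ is an average of iid terms $-\log p(X_i)$ with mean $h$. I would bound $2\mathbb E(h-I_n/n)_+$ using $(v)_+=(|v|+v)/2$ and $\mathbb E v=0$. This gives
\[
 2\mathbb E(h-I_n/n)_+=\mathbb E|h-I_n/n|\le\sqrt{\operatorname{Var}(I_n/n)}=\sigma/\sqrt n
\]
by Cauchy--Schwarz. For uniform iid data, $-\log p(X_1)$ is constant, so $\sigma=0$.
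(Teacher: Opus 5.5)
Your proof is correct and follows the paper's own argument step for step: the prefix-code entropy bound together with $\mathbb E\min\le\min\mathbb E$ gives \eqref{eq:wlz-entropy-oracle}, the identity $\mathbb E|\widehat h_n-h|=\rho_n+2\mathbb E(h-\widehat h_n)_+$ combined with the integrated Kraft tail gives \eqref{eq:wlz-entropy-error}, and Cauchy--Schwarz applied to the centered iid information gives the last display. Your justifications of $\mathsf H(X_1^n)\ge nh$ via subadditivity and of the tail bound for every real $s>0$ only spell out details that the paper leaves implicit.
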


\noindent\emph{Proof.}
The entropy bound for prefix codes gives
$\mathbb E\mathcal L_n\ge\mathsf H(P_n)\ge nh$ and likewise $r_{B,n}\ge0$.
Taking expectations in~\eqref{eq:wlz-entropy-estimator} and using
$\mathbb E\min_j Z_j\le\min_j\mathbb E Z_j$ proves
\eqref{eq:wlz-entropy-oracle}. Integrating~\eqref{eq:wlz-kraft-tail}
gives $\mathbb E(I_n-\mathcal L_n)_+\le1/\ln2$. Now use
\[
 \mathbb E|\widehat h_n-h|
 =\rho_n+2\mathbb E(h-\widehat h_n)_+,
 \qquad
 (h-\widehat h_n)_+
 \le(h-I_n/n)_++(I_n-\mathcal L_n)_+/n.
\]
For iid data, $\mathbb E(I_n/n)=h$ and
$\operatorname{Var}(I_n/n)=\sigma^2/n$, so
$2\mathbb E(h-I_n/n)_+=\mathbb E|I_n/n-h|\le\sigma/\sqrt n$.
\hspace*{\fill}$\square$

These bounds separate the coding redundancy from the fluctuation of the
source information. Any finite-$n$ bound proved for the candidate codes
carries over through~\eqref{eq:wlz-entropy-oracle}: the estimator inherits
the smallest such bound over $B\le K_n$, plus $(1+d_n)/n$. Other prefix
codes, such as the family with a cheap cap token, may be added to the
selector; keeping the universal WLZ candidates preserves consistency, the
index must then cover the enlarged list, and each redundancy bound applies
only to the format for which it was proved. The maximum history $2^{K_n}$
may grow arbitrarily slowly, since every fixed window eventually becomes
available, and the selected window itself need not grow. A claim about
finite-sample accuracy on a corpus requires control of both terms
in~\eqref{eq:wlz-entropy-error}; the size of the corpus alone controls
neither.

For a fixed nonuniform iid source, the bound~\eqref{eq:wlz-entropy-oracle}
can be made explicit without adding another compression method to the
selector, by an elementary occurrence and concentration argument that
covers both candidate families.

\begin{proposition}[Nonuniform iid sources]
\label{prop:nonuniform-iid}
Let $p$ be a fixed nonuniform iid law with full support on a finite
alphabet of at least two symbols. Put
\[
 h\eqdef-\sum_a p_a\log p_a>0,\qquad
 \Delta_p\eqdef\log(p_{\max}/p_{\min})>0.
\]
For either candidate family, put
$r^{\mathrm{enc}}_{B,n}\eqdef\mathbb E C^{\mathrm{enc}}_{B,n}/n-h$.
Let $W=2^B$, let literals cost at most $b$, and let matches of length
$m$ cost at most $B+a(m)$. The deterministic bound $\Gamma(n,B)$
includes all header and framing costs beyond the literals counted below.
For every $\epsilon>0$ and integer $t_B\le m\le W\le n$, assuming
that every distance through $W$ is allowed at length $m$, define
\[
 e(W,m,\epsilon)\eqdef
 \exp\!\left(-\frac{2m\epsilon^2}{\Delta_p^2}\right)
 +\exp\!\left(-\left\lfloor\frac Wm\right\rfloor
                         2^{-m(h+\epsilon)}\right).
\]
The minimum-bit candidate satisfies
\begin{equation}
 0\le r^*_{B,n}\le
 \frac{B+a(m)}m-h+b\,e(W,m,\epsilon)
 +\frac{b(W+m)+\Gamma(n,B)}n.
 \label{eq:nonuniform-iid-finite}
\end{equation}
For exact greedy candidates satisfying Lemma~\ref{lem:greedy-block-charge},
with $\overline K_m$ from Theorem~\ref{thm:greedy-universality},
\begin{equation}
 \begin{split}
 0\le r^{\mathrm{greedy}}_{B,n}
 &\le\frac{\overline K_m+b(t_B-1)}m-h+b\,e(W,m,\epsilon)\\
 &\quad+\frac{2\overline K_m+b(W+m)+\Gamma(n,B)}n.
 \end{split}
 \label{eq:nonuniform-iid-greedy-finite}
\end{equation}
For $n<W$, literal fallback instead gives
$r^{\mathrm{enc}}_{B,n}\le b-h+\Gamma(n,B)/n$.
If $t_B=o(B)$ and $a(m)=O(\log(m+1))$ uniformly in $B$, then,
uniformly for sufficiently large $B$ and $n\ge2^B$,
\begin{equation}
 r^*_{B,n}\le O\!\left(\sqrt{\frac{\ln B}{B}}\right)
       +O\!\left(\frac{2^B+B}{n}\right)+\frac{\Gamma(n,B)}n.
 \label{eq:nonuniform-iid-rate}
\end{equation}
For exact greedy candidates, the same bound has the additional term
$O(t_B/B)$. The constants may depend on the fixed law $p$ and code-cost
bounds. For the minimum-bit selector with $K_n=\lfloor\log n\rfloor$, if
$\Gamma(n,B_n)=O(\log n)$ at $B_n=\lfloor(\log n)/2\rfloor$, then
\begin{equation}
 \mathbb E|\widehat h_n-h|
 =O\!\left(\sqrt{\frac{\log\log n}{\log n}}\right).
 \label{eq:nonuniform-entropy-rate}
\end{equation}
For the exact greedy selector, the bound in~\eqref{eq:nonuniform-entropy-rate}
has the additional term $O(t_{B_n}/B_n)$. In particular, the same
square-root order holds when $t_B=O(\sqrt{B\ln B})$; the weaker
$t_B=o(B)$ still ensures consistency.
\end{proposition}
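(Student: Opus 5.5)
The plan is to run the block-charging argument of Lemma~\ref{lem:greedy-block-charge} (for greedy) and its minimum-bit analogue~\eqref{eq:ideal-recurrence}, but quantitatively in expectation rather than through the ergodic theorem. Fix $m$ and $\epsilon$. For each of the $m$ phases, partition $x_1^n$ into length-$m$ blocks. Declare bad every end block and every block starting within the first $W$ positions; these contribute at most $b(W+m)$ bits. Declare an interior block good when its content occurs wholly inside the preceding $W$ symbols. Averaging over phases, as in the proof of Theorem~\ref{thm:greedy-universality}, the expected number of symbols in bad interior blocks is at most $n\,\Pr\{\text{a fixed interior }m\text{-block has no copy in its preceding }W\text{ symbols}\}$. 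So the whole task reduces to showing that this miss probability is at most $e(W,m,\epsilon)$.

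To bound the miss probability, condition on the block content $y=X_i^{i+m-1}$. The past is independent of $y$. Split on whether $y$ is typical.
\begin{itemize}
\item[(i)] The event $-\log p(y)>m(h+\epsilon)$ is a deviation of a sum of $m$ iid terms $-\log p(X_k)$. Each term lies in an interval of length $\Delta_p$, so Hoeffding's inequality bounds this event by $\exp(-2m\epsilon^2/\Delta_p^2)$.
\item[(ii)] Otherwise $p(y)\ge2^{-m(h+\epsilon)}$. Cut the preceding $W$ symbols into $\lfloor W/m\rfloor$ disjoint, independent length-$m$ slots. Each slot equals $y$ with probability $p(y)$, so all miss with probability $(1-p(y))^{\lfloor W/m\rfloor}\le\exp(-\lfloor W/m\rfloor2^{-m(h+\epsilon)})$.
\end{itemize}
Summing the two cases gives $e(W,m,\epsilon)$. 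Since distances through $W$ are allowed at length $m$ and $m\ge t_B$, a good block is covered by one match.

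For the minimum-bit bound, construct an explicit parse: good blocks in the best phase are coded as single matches of cost $B+a(m)$, and everything else as literals. Taking expectations with the phase average (the minimum over phases is at most the average) gives $\mathbb E C^*\le n(B+a(m))/m+bn\,e+b(W+m)+\Gamma$. Subtracting $nh$ yields~\eqref{eq:nonuniform-iid-finite}; the lower bound $r\ge0$ is the prefix-code entropy bound. For greedy, apply~\eqref{eq:greedy-block-charge} directly with $J\le n/m+2$ blocks. The $\log(1+n/J)$ term is bounded via $\overline K_m$, and the extra two blocks give the $2\overline K_m/n$ term; take expectations of $V$ as above. When $n<W$, literal fallback bounds every phrase by $b$ per symbol, which gives the stated case.

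For the rates, choose $\epsilon=\Delta_p\sqrt{(\ln B)/m}$ and $m=\lfloor(B-2\log B)/(h+\epsilon)\rfloor$. Then the first term of $e$ is at most $B^{-2}$. For the second term, $\lfloor W/m\rfloor2^{-m(h+\epsilon)}\ge B^2/m\gtrsim B$, so it is exponentially small. This choice makes $(B+a(m))/m-h=O(\epsilon+\log B/B)=O(\sqrt{\ln B/B})$, which gives~\eqref{eq:nonuniform-iid-rate}; greedy adds $b(t_B-1)/m=O(t_B/B)$. Finally, insert $B_n=\lfloor(\log n)/2\rfloor$ into~\eqref{eq:wlz-entropy-oracle}. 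Then $2^{B_n}/n=n^{-1/2}$ and $d_n=O(\log\log n)$, and~\eqref{eq:wlz-iid-entropy-error} contributes $\sigma/\sqrt n$, which gives~\eqref{eq:nonuniform-entropy-rate}.

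I expect the main obstacle to be the choice of $m$ and $\epsilon$. The choice must keep $m\ge t_B$ and $m\le W$, and it must balance the concentration loss $\epsilon$ against the recurrence term; this is where the square root comes from. A secondary difficulty is justifying the slot independence in step (ii) when the block's own symbols are excluded from the window.
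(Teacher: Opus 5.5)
Your proposal is correct and follows the paper's proof. The ingredients are the same: a fixed-position block comparison parse for the minimum-bit bound, Lemma~\ref{lem:greedy-block-charge} for greedy, the Hoeffding split on the self-information of the query block, independent disjoint length-$m$ slots in the preceding window, and a choice $m\approx(B-O(\log B))/(h+\epsilon)$ balancing the two failure terms.

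Two small repairs are needed. First, the phase averaging is unnecessary in expectation, and it costs you the stated constant. In a generic phase the blocks starting at or before position $W$, together with the terminal block, can hold up to $W+2m-2$ symbols. Fixing the single phase aligned with position $W+1$, as the paper does, gives exactly $b(W+m)$. Second, your $\epsilon$ and $m$ are defined in terms of each other. You avoid this by taking $\epsilon_B=\Delta_p\sqrt{2h\ln B/B}$: since $hm_B/B\ge 1/2$ for large $B$, the first failure term is still at most $B^{-2}$.
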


\noindent\emph{Proof.}
For a comparison parse, send the first $W$ symbols as literals, then
process consecutive blocks of length $m$ at fixed positions, and send the
final fewer than $m$ symbols as literals. At each full block, use one
match if its word occurs among $N\eqdef\lfloor W/m\rfloor$ disjoint
length-$m$ blocks in the preceding window, and literals otherwise. All
these matches are admissible. Fixed block starts, unlike starts selected
by a greedy parser, allow the iid law to be used directly, and no
independence between the success events at different blocks is needed.
The chosen history blocks end before the query begins and start at
distances between $m$ and $W$, so the comparison uses only nonoverlapping
copies, although the format allows overlap.

The query word is independent of its preceding window. Conditional on
query $u$, the selected past words are independent, each equal to $u$
with probability $P_m(u)$, so all of them differ from $u$ with probability
at most $(1-P_m(u))^N\le\exp(-NP_m(u))$. The self-information
$I_m\eqdef-\log P_m(X_1^m)$ of the query is a sum of independent variables
with mean $h$ and range length $\Delta_p$, and Hoeffding's inequality
\cite[Theorem 2]{WZIPHoeffding1963} gives
\[
 \Pr\{I_m>m(h+\epsilon)\}
 \le\exp(-2m\epsilon^2/\Delta_p^2).
\]
On the complementary event, $P_m(X_1^m)\ge2^{-m(h+\epsilon)}$. Splitting
the failure probability over these events proves the bound
$e(W,m,\epsilon)$; a bound exceeding one is harmless.

Write $n-W=Jm+r$, where $0\le r<m$. Each full block has expected cost at
most $B+a(m)+bm\,e(W,m,\epsilon)$, so the complete comparison code costs
at most
\[
 bW+J\{B+a(m)+bm\,e(W,m,\epsilon)\}+br+\Gamma(n,B)
\]
in expectation, and minimum-bit WLZ costs no more. Using $Jm\le n$ and
$r<m$ proves~\eqref{eq:nonuniform-iid-finite}; the lower bound follows
from the entropy bound for prefix codes. The cost of the first $W$
literals belongs to this comparison parse, not to WLZ itself.

For exact greedy parsing, split the first $W$ symbols into full $m$-blocks
and at most one shorter block, and call them bad. Partition the remaining
symbols into full $m$-blocks and a possibly shorter bad terminal block.
There are at most $n/m+2$ blocks, at most $n/m$ of them good, and the
expected number of symbols in bad blocks is at most
$W+m+n e(W,m,\epsilon)$. Apply Lemma~\ref{lem:greedy-block-charge}; since
every block has length at most $m$, its logarithmic term is at most
$\overline K_m$ per block. This proves
\eqref{eq:nonuniform-iid-greedy-finite} with no independence assumption
at greedy starts; again the initial window is part of the analysis, not a
raw start required of the encoder.

For the asymptotic choice, take
\[
 \epsilon_B=\Delta_p\sqrt{\frac{2h\ln B}{B}},\qquad
 m_B=\left\lfloor\frac{B-3\log B}{h+\epsilon_B}\right\rfloor.
\]
Since $m_B\sim B/h$, the full window eventually admits this length.
Also $hm_B/B\ge1/2$ for sufficiently large $B$, so the first failure term
is at most $B^{-2}$. For the second,
\[
 \left\lfloor\frac W{m_B}\right\rfloor
 2^{-m_B(h+\epsilon_B)}
 \ge\frac{B^3}{2m_B}=\Omega(B^2).
\]
The floor in $m_B$ and the length-code cost contribute $O((\log B)/B)$ to
$(B+a(m_B))/m_B-h$, leaving $O(\sqrt{(\ln B)/B})$, and substitution
proves \eqref{eq:nonuniform-iid-rate}. In the greedy bound,
$\overline K_{m_B}=B+O(\log B)$ and $b(t_B-1)/m_B=O(t_B/B)$, and the
remaining boundary terms are $O((2^B+B)/n)$.

Finally, the default candidate list contains $B_n$, with $2^{B_n}$ within
a factor two of $\sqrt n$. Its redundancy bound, the selector overhead
$O((\log\log n)/n)$, and the iid fluctuation term $O(n^{-1/2})$ in
Corollary~\ref{cor:wlz-entropy-error} imply
\eqref{eq:nonuniform-entropy-rate}, with the additional term
$O(t_{B_n}/B_n)$ for greedy candidates.\hfill$\square$

The length $m_B$ depends on the source but defines only a comparison
parse or an analysis partition; neither the candidate encoders nor the
selector need to know $p$, $h$, or $\Delta_p$. Every match already sends
its own length in the WLZ token code, so the comparison needs no separate
block-size message or probability model, and the literals and matches use
the candidate's own decoder. The rate requires the stated candidate list
and a header bound along $B=B_n$; headers that are $o(n)$ for each fixed
window, or a list that grows arbitrarily slowly, do not suffice. The
guarantee holds for each fixed nonuniform iid law; the uniform case
$\Delta_p=0$ is covered by the uniform-source analysis of
Section~\ref{sec:strict-comparisons}.

\section{Conclusion}
\label{sec:conclusion}
Match-length-dependent windows break the suffix closure on which the
classical analyses of LZ77 rely. The transfer theorem bounds the cost of
this loss, and of an inexact search, by the phrase count of a
single-window reference parse. With it, WLZ keeps the convergence rates
and the optimal finite-input rates of LZ77 when the full window starts at
length $O(\log W)$, and it is universal when the full window starts at
length $o(\log W)$. What the windows gain is real but small: a calibrated
schedule never loses and saves a polynomially small rate. The leading
redundancy term is set by the length code instead: on uniform iid data,
with logarithmic-cost lengths and exact greedy parsing, it cannot fall
below order $(\log\log W)/\log W$, whatever the short windows, and a short
codeword for a growing longest match lowers it below that of the original
LZ77 format. For codec design, short matches
below a full-window threshold of $o(\log W)$ can therefore be restricted
arbitrarily to save distance bits, search, and index memory, while the
leading compression gains must come from how lengths and distances are
coded. The measurements of WZIP and WLZ4~\cite{WZIPWuCodecs2026} agree with
this division: ablations indicate that length-conditioned offset coding
improves WZIP's compression, while the restricted windows reduce its
indexing resources.

A main theoretical question is whether the polynomial saving of
calibrated windows is sharp: matching upper and lower bounds for the
saving on a fixed source, with the length code held fixed, would settle it
and show when the windows could affect the leading term. Quantitative
results for dependent sources also call for bounds on recurrence tails and
search losses.

An algorithmic question is how to choose window sizes and hash-chain
inspection budgets jointly under memory and time constraints, in the
spirit of bicriteria parsing, which trades compressed size against
decompression time~\cite{WZIPBicriteria2013}. The goal is an approximation
guarantee for the complete code length, including model and selector
costs, without encoding every candidate. Combined with quantitative
source bounds, such a guarantee would extend the entropy-estimation
results to practical encoders.

\appendices
\section{A Positive-Entropy Redundancy Obstruction}
\label{app:obstruction}

This appendix proves the lower bound cited in
Subsections~\ref{sec:greedy-universality}
and~\ref{subsec:fixed-field-comparison}: with logarithmic-cost lengths,
no choice of windows for a fixed number of short-match classes reduces
the redundancy of exact greedy parsing below order $\log B/B$.

\begin{samepage}
\begin{theorem}[Fixed-threshold, separately coded lengths]
\label{thm:obstruction}
Let the input be iid uniform on $q\ge2$ symbols, with entropy
$h=\log q$. Let $W=2^B$. Suppose the exact greedy parser has a fixed
integer threshold $t\ge1$ independent of $B$, assigns every length
$\ell\ge t$ the full window and exactly $B$ distance bits, and supports
unbounded overlapping matches. Shorter supported lengths may use any
smaller windows. Suppose a long match costs $B+a(\ell)$, where
\[
 a(\ell)\ge c\log(\ell+1)-C,
 \qquad c>0,
\]
and all other costs are nonnegative. The initial $W$ symbols are emitted
before greedy parsing. Phrase boundaries are determined by exact
longest admissible matching, without future-dependent parse optimization.
Then
\begin{equation}
 \liminf_{n\to\infty}\frac{\mathbb E C_B(X_1^n)}n
 \ge h+ch\frac{\log B}{B}-O(B^{-1}).
 \label{eq:obstruction}
\end{equation}
The same conclusion permits replacing a conceptual matched phrase by
its literals at fixed cost $b\ell$ if $b>h$, provided its conceptual
greedy boundary is retained.
\end{theorem}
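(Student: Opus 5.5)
The plan is to compare the greedy WLZ code with the entropy $nh$. First I will show that almost all symbols lie in long phrases, of length at least $t$ and at full window. Next I will show that each long phrase can carry only about $B+O(1)$ bits of source information, which forces $\mathbb E N_L\gtrsim nh/(B+O(1))$ long phrases. Each of them pays $B+c\log(\ell+1)-C$ with $\ell\approx B/h$, and multiplying out gives~\eqref{eq:obstruction}.

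\emph{Step 1 (short phrases are negligible).} A conceptual phrase shorter than $t$ can start at a position $i>W$ only if the word $x_i^{i+t-1}$ has no copy in the preceding $W$ symbols. The full window covers every length $\ge t$, so any such copy would be admissible. As in the proof of Theorem~\ref{thm:fixed-field-separation}, I count all bad source positions rather than bad greedy starts. Lemma~\ref{lem:uniform-word-hit} bounds the density of bad positions by $\epsilon(W,t)=2^{-\lfloor W/(8q^t)\rfloor}$, which is superpolynomially small in $B$ because $t$ is fixed. Hence short phrases cover at most $t\,\epsilon(W,t)\,n+O(W)$ symbols in expectation. The same estimate with $L=\lfloor(B-2\log\log B)/h\rfloor$ in place of $t$ shows that, outside a negligible set of starts, every long phrase has $\ell\ge L$. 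Its cost is therefore at least $B+c\log(B/h)-C-O(1)=B+c\log B-O(1)$.

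\emph{Step 2 (information per long phrase).} Phrase starts are stopping times for the prefix filtration, and the parse is determined by exact greedy matching without lookahead beyond the phrase. Applying the chain rule for entropy along phrases,
\[
 nh=\mathsf H(X_1^n)\le\mathsf H(X_1^W)+\mathbb E\sum_{\text{phrases}}\mathsf H(\text{phrase content}\mid\text{past}).
\]
A short phrase contributes at most $th$. A long phrase's content is determined by its distance, at most $B$ bits, together with its length $\ell$. I will show that $\mathsf H(\ell\mid\text{past})=O(1)$ uniformly in $B$ by centering at $\ell_0\eqdef\lfloor B/h\rfloor$. The upper tail follows from~\eqref{eq:ordinary-match-tail}: $\Pr\{\ell\ge\ell_0+k\}\le q^{-k+O(1)}$, since conditional on the past the next symbols are fresh iid at a stopping time. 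The lower tail follows from Lemma~\ref{lem:uniform-word-hit}: $\Pr\{\ell<\ell_0-k\}\le 2\exp(-cq^{k})$. A distribution on $\mathbb Z$ with such two-sided geometric tails has bounded entropy. Combining the two tails with Step 1 gives $nh\le \mathbb E N_L\,(B+O(1))+o_B(1)n+O(W)$, so $\liminf_n\mathbb E N_L/n\ge h/(B+O(1))-o(B^{-2})$.

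\emph{Step 3 (assembly).} Costs are nonnegative and literal fallback keeps the boundaries. If $b>h$, a long phrase replaced by literals costs $b\ell\ge h\ell$, which is still at least $B+c\log B-O(1)$ once $\ell\ge L$. Then
\[
 \frac{\mathbb E C_B}{n}\ge\frac{h\,(B+c\log B-O(1))}{B+O(1)}-o(B^{-1})
 =h+ch\frac{\log B}{B}-O(B^{-1}).
\]
The main obstacle is Step 2. One must justify the conditional tail bounds at the greedy starts without assuming independence between phrases. The point is that a greedy phrase start is a stopping time at which the unread input is fresh iid, exactly as in the proof of Proposition~\ref{prop:original-any-cap}. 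One must also bound the length entropy by $O(1)$ rather than $O(\log B)$, since an $O(\log B)$ bound would swamp the constant $c$. If the two-sided tail argument proves delicate at starts inside a window that is shorter than $W$, I would instead absorb those starts into the bad-position count of Step 1.
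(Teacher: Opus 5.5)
\textbf{Gap in Step 2: the lower tail of the phrase length.} The claim $\mathsf H(\ell\mid\text{past})=O(1)$ is not established. The stopping-time property you invoke (the unread input is fresh iid) justifies only the upper tail. The lower tail $\Pr\{\ell<\ell_0-k\}\le2\exp(-cq^k)$ is a statement about the \emph{already read} window containing the next word. Lemma~\ref{lem:uniform-word-hit} requires that window to be uniform and independent of the word.
\begin{itemize}
\item Conditional on the past, the window is fixed, so the bound fails pointwise. The conditional law of $\ell$ is whatever the window's word content makes it, and nothing forces it within $O(1)$ of $\ell_0$.
\item Averaged over greedy starts, the bound is not justified either, because those starts are chosen from the symbols already read.
\item Your Step~1 device of counting all bad source positions gives $\mathbb E\#\{j:\ell_j<\ell_0-k\}\le2n\exp(-cq^k)$. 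But there are only about $nh/B$ phrases, so per phrase this is $O(B)\exp(-cq^k)$, which is small only once $k\gtrsim\log_q\log B$.
\end{itemize}
Through the concavity of entropy, this bounds the average length entropy only by $O(\log\log\log B)$. Your assembly then gives $h+ch(\log B)/B-O((\log\log\log B)/B)$: the leading term survives, but the stated $O(B^{-1})$ error term does not. As you note, the trivial bound $\log B+O(1)$ would lose the result entirely for $c\le1$.

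\textbf{How to repair it.} The entropy detour is unnecessary. A lower bound on the number of phrases follows from an upper bound on their conditional mean length, which needs only the upper tail.
\begin{itemize}
\item Condition on $\mathcal F_{S_j+t-1}$, not on the prefix before $S_j$. A start is not itself a stopping time for the prefix filtration: ending the preceding phrase can require reading the symbol at the new start (its first mismatch) and, for short classes, a bounded number of further symbols.
\item Then $\Pr\{L_j\ge r\mid\mathcal F_{S_j+t-1}\}\le\min\{1,Wq^{-(r-t)}\}$, so $\mathbb E[L_j\mid\mathcal F_{S_j+t-1}]\le\mu_B\eqdef t+\lceil B/h\rceil+(q-1)^{-1}$.
\item Summing truncated lengths over the starts in the file gives $\mathbb E N\ge(n-W)/\mu_B$, with no independence between phrases.
\end{itemize}
The coverage estimate is then needed only to bound each phrase's cost from below. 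There, counting bad source positions is harmless: the failure probability is superpolynomially small in $B$ and absorbs the factor $B$. This is the paper's proof.

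\textbf{Separate slip in Step 3.} Your literal-fallback step needs $b>h$ strictly. The bound $h\ell\ge hL\approx B-2\log\log B$ lies \emph{below} $B+c\log B$. By contrast, $bL=(b/h)B(1-o(1))$ exceeds $B+c\log B$ for large $B$. The paper takes the coverage length $\theta B/h$ with $h/b<\theta<1$ for this reason.
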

\end{samepage}

\noindent\emph{Proof.}
First bound the mean phrase length while accounting for lookahead.
Extend the file to an infinite iid sequence. Let $S_j$ and $L_j$ be the
successive conceptual phrase starts and lengths of the infinite parse,
and let $\mathcal F_i$ reveal all
symbols through position $i$. The time $T_j\eqdef S_j+t-1$ is a stopping
time. Indeed, if the full-window maximum is at least $t$, it is also
the longest admissible match, and its decision requires only its first
mismatch symbol. If that maximum is shorter than $t$, at most a fixed
$t$-symbol prefix is needed to resolve all shorter classes. A long
phrase ending immediately before $i$ is resolved by symbol $i$; a
shorter phrase beginning at $s<i$ is resolved by symbols through
$s+t-1\le i+t-1$. Induction makes $\{S_j=i\}$ measurable in
$\mathcal F_{i+t-1}$, proving the stopping-time assertion.

Condition on $\mathcal F_{T_j}$. For any fixed distance $d\le W$, a
match of length $r>t$ requires $r-t$ fresh iid symbols to equal
previously determined symbols. Overlap does not change the probability:
when each new symbol is exposed, its required predecessor has already
been determined. A union bound over distances yields
\[
 \Pr\{L_j\ge r\mid\mathcal F_{T_j}\}
 \le\min\{1,Wq^{-(r-t)}\},\qquad r>t.
\]
Summing these tails gives
\begin{equation}
 \mathbb E[L_j\mid\mathcal F_{T_j}]
 \le\mu_B\eqdef t+\lceil B/h\rceil+(q-1)^{-1}
 =B/h+O(1).
 \label{eq:obstruction-mean}
\end{equation}
Let $N$ count infinite-parse starts in $[W+1,n]$. Truncating their
lengths at the file boundary makes the sum exactly $n-W$.
Multiplying the conditional bound by
the indicator that the corresponding start lies in the file and summing
at most $n-W$ terms gives
\begin{equation}
 \mathbb E N\ge(n-W)/\mu_B.
 \label{eq:obstruction-count}
\end{equation}
This step does not assume independent phrase lengths.

Choose a constant $0<\theta<1$, and put $m=\lfloor\theta B/h\rfloor$.
A past window contains $\lfloor W/m\rfloor$ independent disjoint
length-$m$ blocks. The probability that it fails to contain every one
of the $q^m$ possible words is at most
\begin{equation}
 \delta_B\eqdef q^m\exp\{-\lfloor W/m\rfloor q^{-m}\}.
 \label{eq:coverage-failure}
\end{equation}
This decreases faster than every inverse power of $B$.
At a position where full coverage holds and at least $m$ symbols remain,
the greedy phrase has length at least $m$. For large $B$ this is in the
full-window class and costs at least
\[
 J_B\eqdef B+c\log(m+1)-C=B+c\log B-O(1).
\]
The finite parser agrees with these infinite-parse starts until its
terminal region: a match truncated to length at least $t$ remains legal,
and any discrepancy below $t$ occurs in the final $t-1$ symbols.
Consequently every covered infinite-parse start with at least $m\ge t$
symbols remaining yields a finite-file phrase with the stated cost.
The number of phrase starts without coverage is bounded by the number
of such source positions, whose expectation is at most
$(n-W)\delta_B$. At most $m$ additional starts lie in the terminal
region. Therefore
\[
 \mathbb E C_B\ge J_B\left[(n-W)/\mu_B-(n-W)\delta_B-m\right].
\]
Divide by $n$, let $n\to\infty$, and expand $J_B/\mu_B$ to obtain
\eqref{eq:obstruction}.

For the permitted literal fallback, choose $h/b<\theta<1$.
Then $bm>J_B$ for large $B$, so both possible representations of a
covered long phrase have cost at least $J_B$. The same proof applies.
\hfill$\square$

The theorem rules out redundancy $O(B^{-p})$ for $p>1$, and even
$O(B^{-1})$, in this model. Its assumption on length costs holds for
common universal integer codes but does not follow from prefix
decodability alone: a length code can give a short codeword to the
typical match length, as the cap token of
Theorem~\ref{thm:fixed-field-separation} does. A raw block coded at $h$
bits per symbol on this uniform source also escapes the bound, as it
should. Minimum-bit parsing, decisions with unbounded lookahead, and
reparsing after a rejected long match are not covered. A better exponent
for practical WLZ would thus need a proof about its model for long
phrases and its parsing policy, not merely about its short-match
windows.
\section{Proof of the Significant-Bit Coding Bound}
\label{app:significant-bit-locality}

We specialize the uniform-prefix mixture argument of
Ryabko, Astola, and Egiazarian~\cite[Thm.~1 and Appendix]{WLZRyabko2003}.
For a fixed partition, $\Delta=\mathbb E r_J-H(X\mid J)$ is convex
in the distribution, since conditional entropy is concave
\cite[Thm.~2.1]{WLZSaid2004}; it is nonnegative since a bucket has at
most $2^{r_J}$ valid values. Separate the exact head and cap atoms.
Set $a_m\eqdef p_m$ for $k\le m<c$ and $a_c\eqdef0$.
The tail is a mixture of uniforms $U[k,m]$, with
\[
 w_m\eqdef(m-k+1)(a_m-a_{m+1})\ge0,
 \qquad \sum_{m=k}^{c-1}w_m=\tau.
\]
Each exact atom has zero loss. For $U[k,m]$, the loss is zero if $m<T$.
Otherwise all occupied buckets except the last are complete. If that
last bucket has capacity $s$ and $t$ occupied values, then
$s=2^{\lfloor\log m\rfloor-\nu+1}\le2m/T$ and
\[
 \Delta(U[k,m])
 =\frac{t}{m-k+1}\log\frac{s}{t}
 \le\frac{s}{e\ln2\,(m-k+1)}
 \le\frac{2}{e\ln2\,(T-k+1)}.
\]
Here $\max_{0<z\le1}z\log(1/z)=1/(e\ln2)$ and
$m/(m-k+1)\le T/(T-k+1)$ for $m\ge T$.
Convexity gives~\eqref{eq:significant-bit-locality}, including a
partially used last bucket immediately before the exact cap.
Huffman coding of labels costs less than $H(J)+1$ bits per value;
whole-value Huffman coding costs at least $H(X)$ and is no worse than
the grouped prefix code. Use zero-bit singleton codes in both cases.
Summing over the three empirical streams gives
\eqref{eq:significant-bit-streams}.

For the stated Zstandard comparison, a uniform positive prefix $U[1,m]$
ending in a dyadic bucket $[s,2s-1]$ has loss
$t\log(s/t)/m\le1/(e\ln2)$, since $s\le m$.
Mixtures give the same bound for decreasing positive probabilities;
an arbitrary atom at zero contributes no loss. Refining a bucket
cannot increase the grouping loss, by nonnegativity of relative entropy
between its refined-label probabilities and relative bucket sizes.
The cited Zstandard tables are such refinements, including the ordinary
dyadic coded-offset table. This argument concerns the emitted numeric
histograms, not an assumed source distribution. \hfill$\square$

\section{Exact Minimum-Cost Parsing}
\label{app:exact-parsing}

The main results concern exact greedy parsing and the minimum token cost
$C_W^*$ as $n\to\infty$. This appendix concerns $C_W^*(x_1^n)$ for a given
input: which single-window shortcuts fail under length-dependent windows,
and how quickly $C_W^*$ and a minimizing parse can be computed. A parse is
a sequence of phrases, each a literal of cost $b$ or an admissible match
of length at most a cap $L$; phrase counts include literals. A phrase of
length $\ell$ starting at position $i$ ends before $i+\ell$.

\subsection{Two single-window properties that fail}

Under one window, a suffix of an admissible match is an admissible match
at the same distance. If every length from one is supported, exact greedy
parsing then minimizes the number of phrases. Let greedy and another parse
have their $t$th phrases at $g_t\ge p_t$. Either $g_t\ge p_{t+1}$, or the
other parse's $t$th phrase covers $g_t$ and its suffix from $g_t$ is a
phrase ending at $p_{t+1}$; in both cases $g_{t+1}\ge p_{t+1}$
(see~\cite{WLZCrochemore2014} for the general closure conditions).
Likewise, for match costs nondecreasing in length, some minimum-cost parse
uses only maximal matches, the longest of each cost from each
position~\cite{WZIPBitOptimal2013}. Both arguments shorten a match from the
left. Under length-dependent windows the shortened match may fall below
$t_B$ and lie outside its own window, the loss of suffix closure that
Theorem~\ref{thm:block-transfer} accounts for, and both properties
fail.

\emph{Example 1 (phrase count).} Let lengths at most 2 have window 4 and
longer lengths the whole history. After the history
\texttt{abcx\#xyz\#}, exact greedy parsing codes \texttt{abcxyz} as
\texttt{abcx}, \texttt{y}, \texttt{z}: the longest admissible match
\texttt{abcx} leaves \texttt{yz}, whose only earlier copy lies 7 symbols
back. The parse \texttt{abc}, \texttt{xyz} has two phrases.

\emph{Example 2 (maximal matches).} Let $x_1^{12}=\texttt{010000001000}$,
$L=8$, window 1 for lengths 2--4 and window 8 for lengths 5--8, literal
cost $b=2$, and match cost 7. Three literals, the match $(4,1)$ and the
match $(5,7)$ cost 20 bits, the minimum. Keeping only the longest
admissible match at each position, here $(5,1)$ at position 4, leaves
\texttt{1000}, whose only earlier copy lies at distance 7 and is
admissible only for lengths of at least 5; the best parse from literals
and longest matches costs 21 bits. With window 8 for every length, both
minima are 20 bits. Figure~\ref{fig:mwparse} shows these parses.

\begin{figure}[t]
\centering
\begin{tikzpicture}[font=\small,
  cell/.style={draw, minimum width=6mm, minimum height=6mm, inner sep=0pt, font=\ttfamily,
    text height=1.6ex, text depth=0.3ex},
  lit/.style={draw, fill=black!12},
  mat/.style={draw},
  row/.style={anchor=east, inner sep=0pt, align=right, font=\footnotesize},
  cost/.style={anchor=west, inner sep=0pt}]
\foreach \c [count=\i] in {0,1,0,0,0,0,0,0,1,0,0,0} {
  \node[cell] at (0.6*\i,0) {\c};
  \node[font=\scriptsize] at (0.6*\i,0.5) {\i};}
\newcommand{\phr}[5]{\draw[#5] (0.6*#2-0.3,#1-0.28) rectangle (0.6*#2+0.6*#3-0.3,#1+0.28);
  \node[font=\footnotesize, text height=1.5ex, text depth=0.3ex] at (0.6*#2+0.3*#3-0.3,#1) {#4};}
\node[row] at (0.0,-0.9) {minimum cost};
\phr{-0.9}{1}{1}{\texttt{0}}{lit} \phr{-0.9}{2}{1}{\texttt{1}}{lit} \phr{-0.9}{3}{1}{\texttt{0}}{lit}
\phr{-0.9}{4}{4}{$(4,1)$}{mat} \phr{-0.9}{8}{5}{$(5,7)$}{mat}
\node[cost] at (7.75,-0.9) {20 bits};
\node[row] at (0.0,-1.65) {longest match\\only};
\phr{-1.65}{1}{1}{\texttt{0}}{lit} \phr{-1.65}{2}{1}{\texttt{1}}{lit} \phr{-1.65}{3}{1}{\texttt{0}}{lit}
\phr{-1.65}{4}{5}{$(5,1)$}{mat}
\phr{-1.65}{9}{1}{\texttt{1}}{lit} \phr{-1.65}{10}{1}{\texttt{0}}{lit}
\phr{-1.65}{11}{1}{\texttt{0}}{lit} \phr{-1.65}{12}{1}{\texttt{0}}{lit}
\node[cost] at (7.75,-1.65) {21 bits};
\node[row] at (0.0,-2.55) {one window,\\longest match};
\phr{-2.55}{1}{1}{\texttt{0}}{lit} \phr{-2.55}{2}{1}{\texttt{1}}{lit} \phr{-2.55}{3}{1}{\texttt{0}}{lit}
\phr{-2.55}{4}{5}{$(5,1)$}{mat} \phr{-2.55}{9}{4}{$(4,7)$}{mat}
\node[cost] at (7.75,-2.55) {20 bits};
\end{tikzpicture}
\caption{\label{fig:mwparse}Example~2 as parses. With window~1 for lengths
2--4 and window~8 for lengths 5--8 (literals 2 bits, shaded; matches
7 bits), the admissible lengths at position~4 are $[2,4]\cup[5,5]$
(Lemma~\ref{lem:admissible-intervals}). The minimum-cost parse uses the
shorter match $(4,1)$; taking the longest match $(5,1)$ leaves
\texttt{1000}, whose only copy, at distance~7, is admissible only for
lengths of at least~5. With one window of 8 for every length, the
longest-match parse is again optimal.}
\end{figure}

\subsection{Length-only token costs}

Under the cost model~\eqref{eq:wzip-cost}, a match costs
$c(\ell)=a(\ell)+w(\ell)$, independent of its distance. The schedule need
not be monotone. Partition the supported lengths into the $K$ maximal
intervals $[s_k,e_k]$ on which $W(\ell)$ is constant, equal to $W_k$, with
$e_K=L$; for a nondecreasing schedule, $K$ is the number of distinct
windows. For a position $i$, let
$\lambda_k(i)$ be the largest $\ell\le\min\{L,n-i+1\}$ such that
$x_i^{i+\ell-1}$ has a copy at a distance
$d\le\min\{W_k,i-1\}$, or $0$ if there is none.

\begin{lemma}[Admissible lengths]
\label{lem:admissible-intervals}
The admissible match lengths at position $i$ are exactly
\[
 \bigcup_{k=1}^K\,[s_k,\min\{e_k,\lambda_k(i)\}].
\]
\end{lemma}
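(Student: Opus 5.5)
The plan is to prove the two inclusions separately, working inside one constant-window interval $[s_k,e_k]$ at a time. First I would fix the meaning of admissibility at position $i$. A length $\ell$ is admissible exactly when it is supported, satisfies $\ell\le\min\{L,n-i+1\}$, and has some distance $d\le\min\{W(\ell),i-1\}$ with $x_{i+r}=x_{i-d+r}$ for $0\le r<\ell$. Since the intervals $[s_k,e_k]$ partition the supported lengths, each supported $\ell$ lies in exactly one of them. For that $k$ we have $W(\ell)=W_k$. So it suffices to show, for each $k$, that $\ell\in[s_k,e_k]$ is admissible if and only if $\ell\le\lambda_k(i)$.

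The key step is a prefix-closure observation: with the distance bound held fixed, prefixes of matches are matches. Suppose $x_i^{i+\ell'-1}$ has a copy at some distance $d\le\min\{W_k,i-1\}$. Then for every $\ell\le\ell'$, the prefix $x_i^{i+\ell-1}$ has a copy at the same distance $d$, because we simply drop the later equalities. This also holds for overlapping copies, since each equality $x_{i+r}=x_{i-d+r}$ is a separate symbol condition. This is the left-anchored counterpart of the suffix closure that fails. It survives because shortening from the right stays inside the same window interval as long as $\ell\ge s_k$.

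The two directions then follow. If $s_k\le\ell\le\min\{e_k,\lambda_k(i)\}$, take the witness distance of the length-$\lambda_k(i)$ match from the definition of $\lambda_k(i)$. Prefix closure gives a length-$\ell$ match at that distance, and $d\le\min\{W_k,i-1\}=\min\{W(\ell),i-1\}$. The bound $\ell\le\lambda_k(i)\le\min\{L,n-i+1\}$ gives feasibility, so $\ell$ is admissible. Conversely, if $\ell\in[s_k,e_k]$ is admissible, its witness distance satisfies $d\le\min\{W_k,i-1\}$, and $\ell\le\min\{L,n-i+1\}$. Hence $\ell$ is among the candidates in the maximum defining $\lambda_k(i)$, which gives $\ell\le\lambda_k(i)$. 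When $\lambda_k(i)<s_k$, including the case $\lambda_k(i)=0$, the $k$th interval is empty and no length in $[s_k,e_k]$ is admissible, consistent with the formula.

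I expect no real obstacle. The only care needed is bookkeeping at the boundaries: the cap $L$, the end of input $n-i+1$, the condition $d\le i-1$ for $i=1$, and a possibly nonmonotone schedule. Working interval by interval, rather than with a single threshold, handles nonmonotonicity automatically. It also explains why the union can have gaps, as at position 4 in Example~2, where the admissible lengths are $[2,4]\cup[5,5]$.
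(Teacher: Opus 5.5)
Your proof is correct and follows the paper's argument: the forward inclusion takes the length-$\ell$ prefix of the copy that defines $\lambda_k(i)$ at the same distance, and the converse notes that an admissible $\ell\in[s_k,e_k]$ is one of the candidates in the maximum defining $\lambda_k(i)$.

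One slip in your closing remark, which is not part of the proof: at position 4 of Example~2 the set $[2,4]\cup[5,5]$ is the contiguous set $\{2,\dots,5\}$, so it does not show a gap. A gap requires $\lambda_k(i)<e_k$ while a later interval is nonempty. For example, with the same windows, a distance-one match of length 2 together with a distance-7 match of length 5 gives $[2,2]\cup[5,5]$.
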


\noindent\emph{Proof.}
If $s_k\le\ell\le\min\{e_k,\lambda_k(i)\}$, the first $\ell$ symbols of
the copy defining $\lambda_k(i)$ form a length-$\ell$ match at a distance
at most $W_k=W(\ell)$. Conversely, an admissible length-$\ell$ match with
$\ell\in[s_k,e_k]$ has distance at most $W_k$, so
$\lambda_k(i)\ge\ell$. \hfill$\square$

\begin{theorem}[Exact parsing with length-only costs]
\label{thm:exact-parsing}
Let $Y$ be the number of maximal runs on which $a(\ell)$ is constant,
over the supported lengths up to $\min\{L,n\}$. For a finite alphabet,
$C_W^*(x_1^n)$ and a minimizing parse can be computed offline in
$O(n(K+Y)\log n)$ time using $O(n)$ words of space.
\end{theorem}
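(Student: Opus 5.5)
The plan is to compute $C_W^*$ by a shortest-path dynamic program over positions, $\mathrm{OPT}[n+1]=0$ and
\[
 \mathrm{OPT}[i]=\min\Bigl\{b+\mathrm{OPT}[i+1],\ \min_{\ell\ \mathrm{admissible\ at}\ i}\bigl(c(\ell)+\mathrm{OPT}[i+\ell]\bigr)\Bigr\},
\]
evaluated from right to left. Lemma~\ref{lem:admissible-intervals} says the admissible lengths at $i$ form the union of at most $K$ intervals $[s_k,\min\{e_k,\lambda_k(i)\}]$. We refine each of these intervals by the $Y$ runs on which $a(\ell)$ is constant. On every piece of the refinement, $c(\ell)=a(\ell)+w(\ell)$ is then a single constant $\gamma$. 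The inner minimum over that piece therefore becomes $\gamma+\min\{\mathrm{OPT}[j]:j\in[i+\ell_1,i+\ell_2]\}$, which is a range-minimum query over already computed values. There are two ways to store $\mathrm{OPT}$: a segment tree with point updates and $O(\log n)$ queries, or a Fenwick-style suffix-minimum structure, since values are filled in from right to left. Either uses $O(n)$ words and answers each query in $O(\log n)$.

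\textbf{Step 1: count the pieces.} The breakpoints of the window partition and of the $a$-runs together split $[1,\min\{L,n\}]$ into at most $K+Y$ pieces. Truncating at the thresholds $\lambda_k(i)$ only removes or shortens pieces, so each position issues $O(K+Y)$ range queries. The total query cost is $O(n(K+Y)\log n)$. Traceback is done by storing the argmin piece and recovering a minimizing index from the tree, which keeps the space at $O(n)$. Ties and the literal option are routine.

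\textbf{Step 2: compute all $\lambda_k(i)$, the main obstacle.} We need every $\lambda_k(i)$, the longest prefix of $x_i^n$ (capped at $L$ and $n-i+1$) that has an earlier copy at distance at most $W_k$, within $O(nK\log n)$ total time. My first attempt would build the suffix array with LCP and a range-minimum structure in $O(n)$ for a finite alphabet. For a fixed $k$, $\lambda_k(i)$ is then the maximum LCP between suffix $i$ and the suffixes $j\in[i-W_k,i-1]$. The window slides with $i$, so for each $k$ we maintain the set of suffix-array ranks of the positions in the current window, stored in a balanced search tree or a van Emde Boas/Fenwick predecessor structure. The maximum LCP with rank $r_i$ is attained by the predecessor or successor of $r_i$ in that set, and each of those LCPs is one RMQ. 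Insertions and deletions as the window advances cost $O(\log n)$ each, which gives $O(n\log n)$ per class and $O(nK\log n)$ overall.

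There is a space issue: $K$ simultaneous sets would cost $O(nK)$ words. To stay within $O(n)$, process the classes one at a time and keep only the values needed. Since the right-to-left DP needs all $\lambda_k(i)$ at each $i$, and those cannot all be stored, I would instead run the DP in a separate pass per class. The catch is that $\mathrm{OPT}$ must be fully known to the right of $i$ for every class. If that makes the per-class passes unworkable, the fallback is to observe that $\lambda_k(i)$ is monotone in $W_k$. We then maintain all $K$ windows at once and store, per position, only the frontier of the current row. This keeps the working space at $O(n+K)$ beyond the suffix array, provided $K\le n$, which holds because $K\le L\le n$ after truncation.

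Overlap does not need separate treatment: a copy at distance $d<\ell$ is still a suffix-comparison match, since LCP of suffixes $i-d$ and $i$ already accounts for self-overlap.
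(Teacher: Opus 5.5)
\textbf{The gap is the space bound.} Your dynamic program and piece count are sound and give the $O(n(K+Y)\log n)$ time bound. The paper runs the equivalent forward version: it relaxes $f(j)$ for $j\in[i+u,i+v]$ with lazy range-min updates and point queries, where you use backward point updates and range queries. One small correction: a suffix-minimum Fenwick tree does not answer the interval queries $[i+\ell_1,i+\ell_2]$, so keep the segment tree.

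The $O(n)$ space claim is not established. Step~2 maintains $K$ separate sliding-window rank sets of total size $\sum_k\min\{W_k,n\}$, which is $\Theta(nK)$ in the worst case. The per-class passes fail for the reason you identify yourself. The fallback does not repair this. Monotonicity of $\lambda_k(i)$ in $W_k$ does not tell you how to answer $K$ different window queries from fewer than $K$ sets, and ``storing only the frontier of the current row'' is not an algorithm. As written, you prove $O(nK)$ words of space, not $O(n)$.

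\textbf{The missing idea is one structure that serves every window size.} Store each earlier position $j$ at leaf $\mathrm{rank}(j)$ of a segment tree over suffix ranks, and keep in every node the maximum stored position. Every stored position is less than $i$, so ``the nearest rank below (above) $r_i$ whose position lies within distance $W_k$'' becomes ``the nearest rank below (above) $r_i$ whose stored value is at least $i-W_k$.'' One descent finds that rank in $O(\log n)$ time, and two LCP range-minimum queries then give $\lambda_k(i)$.

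This is the paper's proof. It processes $i$ from left to right, inserts $i$ after its $K$ queries, and uses the forward relaxation so that each $\lambda_k(i)$ is consumed at once and never stored. Your backward program also works with this structure: insert all positions at the start, and delete position $i$ before querying at $i$. Either way the space is $O(n)$ words, and the witness distance for backtracking is available at the moment each $\lambda_k(i)$ is computed.
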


\noindent\emph{Proof.}
Build the suffix array of $x_1^n$, its inverse, and the longest-common-prefix
array with a constant-time range-minimum structure, all in $O(n)$
time~\cite{WLZSuffixArray2006,WLZLCP2001,WLZRMQ2000}; the longest common
prefix of any two suffixes is then available in $O(1)$ time. Comparing
suffixes also covers overlapping copies. Process $i=1,\ldots,n$ in order.
A maximum segment tree over suffix ranks stores each earlier position at
its suffix's rank. For each $k$, descending this tree finds the nearest
ranks below and above that of suffix $i$ whose stored positions are at
least $i-W_k$. Among the positions within distance $W_k$, the suffix
sharing the longest prefix with suffix $i$ is at one of these two ranks,
so $\lambda_k(i)$, capped at $\min\{L,n-i+1\}$, takes $O(\log n)$ time.
Position $i$ is then inserted.

Let $f(i)$ be the minimum cost of parsing $x_1^{i-1}$, with $f(1)=0$; the
target is $f(n+1)$. All phrases ending at $i-1$ start earlier, so $f(i)$
is final when position $i$ is processed. Relax $f(i+1)$ by $f(i)+b$. The
class boundaries $s_k$ and the run boundaries of $a$ cut the supported
lengths into at most $K+Y$ segments of constant cost. By
Lemma~\ref{lem:admissible-intervals}, the admissible lengths at $i$ meet
each segment in an interval $[u,v]$, possibly empty; relax $f(j)$ by
$f(i)+c(u)$ for all $j\in[i+u,i+v]$. A segment tree with lazy range
minimum updates and point queries performs each relaxation and each read
of $f(i)$ in $O(\log n)$ time, and stores with each value the phrase that
attains it, for backtracking. \hfill$\square$

The algorithm works offline on the whole input with suffix-array
structures, separately from the hash indexes of
Section~\ref{sec:hash-search}. For $K=1$ it is the single-window
computation; length-dependent windows add at most $K-1$ segments per
position. Under the hypotheses of Theorem~\ref{thm:greedy-universality},
all lengths of at least $t_B$ share the full window, so $K\le t_B$. The
code-cost bound there limits $a(\ell)$ but not the number $Y$ of its
changes. For a length code that is constant on dyadic length ranges,
$Y=O(\log L)$; with $L=W$ the time is $O(n(t_B+B)\log n)=O(nB\log n)$
for $t_B=o(B)$, and the minimum-bit variant of the entropy
estimator~\eqref{eq:wlz-entropy-estimator}, which evaluates $C^*_{B,n}$
for every $B\le K_n\le\log n$, takes $O(n\log^3n)$ time.

\subsection{Distance-dependent costs}

When a match cost $c(\ell,d)$ also depends on the distance, as with
variable-length distance codes, a constant-cost segment of lengths no
longer determines the cost. Example~2 shows that maximal matches do not
suffice. For fixed costs that are nondecreasing in match length, a
bounded number of shorter matches near each maximal one does. Arbitrary
fitted Huffman costs need not be monotone, and neither is the cheap cap
token of Theorem~\ref{thm:fixed-field-separation}; such costs need
separate treatment, although with length-only costs they remain covered
by Theorem~\ref{thm:exact-parsing}.

\begin{proposition}[Pruned parsing graph]
\label{prop:pruned-graph}
Suppose that every length of at least $t_B$ has the full window, that
$c(\ell,d)\ge0$ is nondecreasing in $\ell$ for each $d$, and that $b\ge0$.
At each position $i$, keep
\begin{enumerate}
\item[(a)] for each $\ell<t_B$, one cheapest admissible match of
length $\ell$;
\item[(b)] for each cost $q$ of a match starting at $i$, a longest
admissible match $(\ell',d')$ of cost at most $q$, and its prefixes at
distance $d'$ of every length $\ell\ge t_B$ with
$\ell'-t_B<\ell\le\ell'$.
\end{enumerate}
Some minimum-cost parse uses only literals and kept matches. With at most
$Q$ distinct costs per position, at most $t_B-1+Qt_B$ matches are kept per
position.
\end{proposition}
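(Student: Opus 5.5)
We argue by an exchange argument: start from an arbitrary minimum-cost parse and replace its matches one at a time by kept matches or literals, never increasing the cost and never changing the decoded history. Since every phrase cost is nonnegative and the parse is a path in a DAG over positions $1,\ldots,n+1$, it suffices to show the following. For every match $(\ell,d)$ used at a position $i$ with end $i+\ell$, there is a path from $i$ to $i+\ell$, or to some $j$ with $i<j\le i+\ell$ from which the original optimal suffix can be resumed, that consists of kept matches and literals and costs no more. We process phrases left to right and induct on the number of non-kept matches; alternatively we induct on positions from the right using the dynamic-programming value $f$, showing that $f$ computed on the pruned graph equals $f$ on the full graph.

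\emph{Case $\ell<t_B$.} Rule (a) keeps a cheapest admissible match of the same length $\ell$ at $i$. Substituting it gives the same endpoint at no greater cost, and admissibility depends only on $(\ell,d)$ and the history, which is unchanged.

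\emph{Case $\ell\ge t_B$.} Let $q\eqdef c(\ell,d)$, and let $(\ell',d')$ be the kept longest admissible match of cost at most $q$; then $\ell'\ge\ell$. If $\ell>\ell'-t_B$, the prefix of $(\ell',d')$ of length $\ell$ is kept by rule (b). It is admissible at distance $d'$ because $\ell\ge t_B$ lies in the full window, and it costs $c(\ell,d')\le c(\ell',d')\le q$ by monotonicity in length. Otherwise $\ell\le\ell'-t_B$, so the kept match $(\ell',d')$ jumps past $i+\ell$ by at least $t_B$. This overshoot is the main obstacle, because the original parse resumes at $i+\ell$ and we must re-synchronise.

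For the overshoot we use the suffix-closure argument of Appendix~\ref{app:exact-parsing}, which is valid here for lengths at least $t_B$. Let the next original phrase start at $i+\ell$. Take the first original phrase whose end $e$ satisfies $e>i+\ell'-t_B$, and call its start $s\le i+\ell'-t_B$. We replace all original phrases strictly between $i+\ell$ and $s$, together with the phrase $(\ell,d)$ itself, by the prefix of $(\ell',d')$ ending at the appropriate point. We must then handle the phrase starting at $s$, which would cross the cut.

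The cleaner choice, and the one we commit to, is a different one. Pick the length $\ell''\in(\ell'-t_B,\ell']$ so that $i+\ell''$ equals a phrase boundary of the original parse, or else land inside a phrase whose remainder is itself a match of length at least $t_B$ at the same distance, hence admissible and no costlier by monotonicity. The window $(\ell'-t_B,\ell']$ contains exactly $t_B$ consecutive endpoints. Any original phrase covering one of these points either ends inside the window, giving a boundary, or is a match spanning it with length at least $t_B$ remaining beyond a chosen cut point; in the latter case its suffix is admissible in the full window. Literals are length one and always end inside the window. In every case we obtain a cost comparison: the replaced original phrases cost at least $q\ge c(\ell'',d')$, since they include $(\ell,d)$ and all costs are nonnegative, and the truncated suffix costs no more than the phrase it replaces.

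We must make sure this truncated suffix is itself kept, or at least covered by a later step of the induction; it is, because it starts further right, so the induction on the number of non-kept matches (or on position) closes. Counting gives at most $t_B-1$ matches from (a) and at most $t_B$ prefixes for each of the $Q$ costs from (b), which yields the bound $t_B-1+Qt_B$.
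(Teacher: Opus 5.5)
Your argument is correct and is essentially the paper's: a left-to-right exchange on a minimum-cost parse that uses (a) below $t_B$. Above $t_B$ it uses the longest match $(\ell',d')$ of cost at most $q$, resynchronised either at an original phrase start among $i+\ell'-t_B+1,\ldots,i+\ell'$ or, if there is none, through the suffix of the phrase spanning these positions (cutting at $i+\ell'-t_B+1$ leaves it at least $t_B$ symbols); this two-way split is a tidier packaging of the paper's four subcases on $m$ and the length of $f$.

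Two points of presentation. First, your opening ``it suffices'' reduction to kept paths ending at or before $i+\ell$ is false in general and is not what you end up proving. Second, the induction must run on the position of the leftmost non-kept match, not on the number of such matches, since the truncated suffix of a kept phrase $f$ need not itself be kept.
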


\noindent\emph{Proof.}
Prefixes of length at least $t_B$ are admissible, and by monotonicity a
prefix in (b) costs at most $q$. Take a minimum-cost parse whose matches
before $i$ are kept, and let $e$ be its match at $i$, of length $\ell$ and
cost $q$. If $\ell<t_B$, replace $e$ by the kept match of length $\ell$.
Otherwise let $(\ell',d')$ be the match of (b) for $q$, so
$\ell'\ge\ell\ge t_B$; if $\ell'=\ell$, replace $e$ by it. If
$\ell'>\ell$, let $f$ be the phrase covering position $i+\ell'-1$; it
starts at some $k\ge i+\ell$ and ends before $m$. If $m=i+\ell'$ or
$m-i-\ell'\ge t_B$, replace the phrases from $e$ through $f$ by
$(\ell',d')$, followed, if $m>i+\ell'$, by the suffix of $f$ from
$i+\ell'$: a match at the distance of $f$, of length at least $t_B$ and
cost at most that of $f$. If $0<m-i-\ell'<t_B$ and $f$ has length at least
$t_B$, put $j=m-t_B$, so that $k\le j<i+\ell'$; replace the phrases from
$e$ through $f$ by the prefix of $(\ell',d')$ ending before $j$ and the
length-$t_B$ suffix of $f$ from $j$. If $0<m-i-\ell'<t_B$ and $f$ is
shorter than $t_B$, then $k>m-t_B>i+\ell'-t_B$; replace the phrases from
$e$ up to $f$ by the prefix of $(\ell',d')$ ending before $k$, and keep
$f$. Each prefix used has length between $\ell$ and $\ell'$ and more than
$\ell'-t_B$, so it is kept. The removed phrases have nonnegative cost, so
the parse remains minimal, and its match at $i$ is now kept. Induction on
$i$ completes the proof. \hfill$\square$

For $t_B=1$, (a) is empty and (b) keeps exactly the maximal matches of
the single-window case~\cite{WZIPBitOptimal2013}. Under length-dependent
windows the graph grows by a factor of at most $t_B$, which is $o(B)$
under the hypotheses of Theorem~\ref{thm:greedy-universality}. Generating
the kept matches on the fly, as~\cite{WZIPBitOptimal2013} does for one
window, is not addressed here; the per-cost-class sliding-window suffix
trees of~\cite{WLZCrochemoreREP2013} are a natural starting point. Neither result covers repeat-offset caches
or adaptively updated prices, whose costs depend on the parse history.

\end{document}